\documentclass[11pt,peerreview,draftcls,onecolumn]{IEEEtran}

\newif\ifsubmission
\submissionfalse

\usepackage[latin1]{inputenc}
\usepackage{amsmath}
\usepackage{amsfonts}
\usepackage{amssymb}
\usepackage{amstext}
\usepackage{amsthm}
\usepackage{graphicx}
\usepackage{url}
\usepackage{color}
\usepackage{bm}
\usepackage{dsfont}
\usepackage{color}
\usepackage{xcolor}

\newtheorem{theorem}{Theorem}
\newtheorem{lemma}{Lemma}

\newtheorem{corollary}{Corollary}
\newtheorem{definition}{Definition}

\newcommand{\fig}[1]{Figure~\ref{#1}}

\newcommand{\BB}{\mathcal B}

\newcommand{\DD}{\mathcal D}
\newcommand{\XX}{\mathcal X}
\newcommand{\MM}{\mathcal M}

\renewcommand{\SS}{\mathcal S}
\renewcommand{\AA}{\mathcal A}

\newcommand{\LL}{\mathcal L}
\newcommand{\PP}{\mathcal P}

\newcommand{\A}{\mathcal A}
\newcommand{\sMM}{
    {\scalebox{0.8}{$\scriptscriptstyle\mathcal{M}$}}
  }

\newcommand{\rarrow}{\tiny \to}

\newcommand{\SIa}{$SI_{c \textnormal{-} tr}^{a}$}
\newcommand{\SIat}{$SI_{c \textnormal{-} tr}^{a,t}$}
\newcommand{\SIr}{$SI_{c \textnormal{-} tr}^{r}$}

\colorlet{mygreen}{green!60!gray}

\newcommand{\BTcomm}[1]{\textcolor{mygreen}{{#1}}}
\newcommand{\MB}[1]{\textcolor{orange}{{#1}}}
\newcommand{\IC}[1]{\textcolor{cyan}{{#1}}}

\begin{document}




\title{Adversarial Source Identification Game with Corrupted Training}

\author{Mauro Barni, \IEEEmembership{Fellow, IEEE}, Benedetta Tondi, \IEEEmembership{Student Member, IEEE}
\thanks{M. Barni is with the Department of Information Engineering and Mathematics, University of Siena, Via Roma 56, 53100 - Siena, ITALY, phone: +39 0577 234850 (int. 1005), e-mail: barni@dii.unisi.it; B. Tondi is with the Department of Information Engineering and Mathematics, University of Siena, Via Roma 56, 53100 - Siena, ITALY, e-mail: benedettatondi@gmail.com.}
}

\markboth{IEEE TRANSACTIONS ON INFORMATION THEORY, ~Vol.~X, No.~X, XXXXXXX~XXXX}
{M. Barni, B. Tondi: Adversarial Source Identification Game with Corrupted Training}

\maketitle

\begin{abstract}
We study a variant of the source identification game with training data in which part of the training data is corrupted by an attacker. In the addressed scenario, the defender aims at deciding whether a test sequence has been drawn according to a discrete memoryless source $X \sim P_X$, whose statistics are known to him through the observation of a training sequence generated by $X$. In order to  undermine the correct decision under the alternative hypothesis that the test sequence has not been drawn from $X$, the attacker can modify a sequence produced by a source $Y \sim P_Y$ up to a certain distortion, and corrupt the training sequence either by adding some fake samples or by replacing some samples with fake ones. We derive the unique rationalizable equilibrium of the two versions of the game in the asymptotic regime and by assuming that the defender bases its decision by relying only on the first order statistics of the test and the training sequences. By mimicking Stein's lemma, we derive the best achievable performance for the defender when the first type error probability is required to tend to zero exponentially fast with an arbitrarily small, yet positive, error exponent. We then use such a result to analyze the ultimate distinguishability of any two sources as a function of the allowed distortion and the fraction of corrupted samples injected into the training sequence.
\end{abstract}

\begin{IEEEkeywords}
Hypothesis testing, adversarial signal processing, cybersecurity, game theory, source identification, optimal transportation theory, earth mover distance, adversarial learning, Sanov's theorem.
\end{IEEEkeywords}

\IEEEpeerreviewmaketitle

\medmuskip=0mu
\thinmuskip=0mu
\thickmuskip=0mu

\section{Introduction}
\label{sec.intro}

\IEEEPARstart{A}{dversarial} Signal Processing (AdvSP) is an emerging discipline aiming at modelling the interplay between a defender wishing to carry out a certain processing task, and an attacker aiming at impeding it \cite{BarGon13}. Binary decision in an adversarial setup is one of the most recurrent problems in AdvSP, due to its importance in many application scenarios. Among binary decision problems, source identification is one of the most studied subjects, since it lies at the heart of several security-oriented disciplines, like multimedia forensics, anomaly detection, traffic monitoring, steganalysis and so on.

The source identification game has been introduced in \cite{BT13} to model the interplay between the defender and the attacker by resorting to concepts drawn from game and information theory. According to the model put forward in \cite{BT13}, the defender and the attacker have a perfect knowledge of the to-be-distinguished sources. In \cite{BTtit} the analysis is pushed a step forward by considering a scenario in which the sources are known only through the observation of a training sequence. Finally, \cite{BT_SMargin} introduces the security margin concept, a synthetic parameter characterising the ultimate distinguishability of two sources under adversarial conditions.

In this paper, we extend the analysis further, by considering a situation in which the attacker may interfere with the learning phase by corrupting part of the training sequence. Adversarial learning is a rather novel concept, which has been studied for some years from a machine learning perspective \cite{machineLEARN,Barreno2010, Roli15}. Due to the natural vulnerability of machine learning systems, in fact, the attacker may take an important advantage if no countermeasures are adopted by the defender. The use of a training sequence to gather information about the statistics of the to-be-distinguished sources can be seen as a very simple learning mechanism, and the analysis of the impact that an attack carried out in such a phase has on the performance of a decision system may help shedding new light on this important problem.
To be specific, we extend the game-theoretic framework introduced in \cite{BTtit} and \cite{BT_SMargin} to model a situation in which the attacker is given the possibility of corrupting part of the training sequence. By adopting a game-theoretic perspective, we derive the optimal strategy for the defender and the optimal corruption strategy for the attacker when the length of the training sequence and the observed sequence tends to infinity. Given such optimum strategies, expressed in the form of game equilibrium point, we analyse the best achievable performance when the type I and II error probabilities tend to zero exponentially fast. Specifically, we study the distinguishability of the sources as a function of the fraction of training samples corrupted by the attacker and when the test sequence can be modified up to a certain distortion level. The results of the analysis are summarised in terms of blinding corruption level, defined as the fraction of corrupted samples making a reliable distinction between the two sources impossible, and security margin, defined as the maximum distortion of the observed sequence for which a reliable distinction is possible (see \cite{BT_SMargin}). The analysis is applied to two different scenarios wherein the attacker is allowed respectively to {\em add} a certain amount of fake samples to the training sequence and to selectively {\em replace} a fraction of the samples of the training sequences with fake samples. As we will see, the second case is more favourable to the attacker, since a lower distortion and a lower number of corrupted training samples are enough to prevent a correct decision.

Given the above general framework, the main results proven in this paper can be summarised as follows:

\begin{enumerate}
\item{We rigorously define the source identification game with addition of corrupted training samples (\SIa~game) and show that such a game is a dominance solvable game admitting an asymptotic equilibrium point when the length of the training and test sequences tend to infinity (Theorem \ref{teo.aseq} and following discussion in Section \ref{sec.SI_CTR_add});}
\item{We evaluate the payoff of the game at the equilibrium and derive the expression of the indistinguishability region, defined as the region with the sources $Y$ which can not be distinguished from $X$ because of the attack (Theorems \ref{theo_as_payoff_si} and \ref{theo_indist_reg_SIa}, Section \ref{sec.SI_CTR_add});}
\item{Given any two sources $X$ and $Y$, we derive the security margin and the blinding corruption level defined as the maximum distortion introduced into the test sequence and maximum fraction of fake training samples introduced by the attacker, still allowing the distinction of $X$ and $Y$ while ensuring positive error exponents for the two kinds of errors of the test (Theorem \ref{theorem_EMD_L} and Definition \ref{def.SM} in Section \ref{sec.SM_SIa});}
\item{We repeat the entire analysis for the source identification game with selective replacement of training samples (\SIr~game), and compare the two versions of the game (Theorem \ref{teo.SIR} and subsequent discussion in Section \ref{sec.SI_CTR_c})}.
\item{The main proofs of the paper rely on a generalised version of Sanov's theorem \cite{CandT,Dembo2009}, which is proven in Appendix \ref{sec.appendix.Sanov}. In fact, Theorem \ref{theo.extended_Sanov}, and its use to simplify some of the proofs in the paper, can be seen as a further methodological contribution of our work.}
\end{enumerate}

This paper considerably extends the analysis presented in \cite{BTwifs14}, by providing a formal proof of the results anticipated in \cite{BTwifs14}\footnote{We also give a more precise formulation of the problem, by correcting some inaccuracies present in \cite{BTwifs14}.} and make a step forward by studying a more complex corruption scenario in which the attacker has the freedom to replace a given percentage of the training samples rather than simply adding some fake samples to the original training sequence.

The paper is organised as follows. Section \ref{sec.symbols} summarises the notation used throughout the paper, gives some definitions and introduces some basics concept of Game theory that will be used in the sequel. Section \ref{sec.SI_CTR_add} gives a rigorous definition of the \SIa~game, explaining the rationale behind the various assumptions made in the definition. In Section \ref{sec.SI_CTR_add_solution}, we prove the main theorems of the paper regarding the asymptotic equilibrium point of the \SIa~game and the payoff at the equilibrium. Section \ref{sec.SM_SIa} leverages on the results proven in Section \ref{sec.SI_CTR_add_solution} to introduce the concepts of blind corruption level and security margin, and evaluating them in the setting provided by the \SIa~game. Section \ref{sec.SI_CTR_c}, introduces and solves the \SIr~game, by paying attention to compare the results of the analysis with the corresponding results of the \SIa~game. The paper ends in Section \ref{sec.conc}, with a summary of the main results proven in the paper and the description of possible directions for future work. In order to avoid burdening the main body of the paper, the most technical  details of the proofs are gathered in the Appendix.

\section{Notation and definitions}
\label{sec.symbols}

In this section, we introduce the notation and definitions used throughout the paper. We will use capital letters to indicate discrete memoryless sources (e.g. $X$). Sequences of length $n$ drawn from a source will be indicated with the corresponding lowercase letters (e.g. $x^n$); accordingly, $x_i$ will denote the $i \textnormal{-}$th element of a sequence $x^n$. The alphabet of an information source will be indicated by the corresponding calligraphic capital letter (e.g. $\XX$). The probability mass function (pmf) of a discrete memoryless source $X$ will be denoted by $P_X$. The calligraphic letter $\PP$ will be used to indicate the class of all the probability mass functions, namely, the probability simplex in $\mathds{R}^{|\XX|}$. The notation $P_X$ will be also used to indicate the probability measure ruling the emission of sequences from a source $X$, so we will use the expressions $P_X(a)$ and $P_X(x^n)$ to indicate, respectively, the probability of symbol $a \in \XX$ and the probability that the source $X$ emits the sequence $x^n$, the exact meaning of $P_X$ being always clearly recoverable from the context wherein it is used.
We will use the notation $P_X(A)$ to indicate the probability of $A$ (be it a subset of $\XX$ or $\XX^n$) under the probability measure $P_X$. Finally, the probability of a generic will be denoted by $Pr\{\}$.

Our analysis relies extensively on the concepts of type and type class defined as follows (see \cite{CandT} and \cite{CandK} for more details). Let $x^n$ be a sequence with elements belonging to a finite alphabet $\XX$. The type $P_{x^n}$ of $x^n$ is the empirical pmf induced by the sequence $x^n$, i.e. $\forall a \in \XX, P_{x^n} (a) = \frac{1}{n} \sum_{i=1}^n \delta(x_i, a)$, where $\delta(x_i,a) = 1$ if $x_i =a$ and zero otherwise. In the following, we indicate with $\PP^n$ the set of types with denominator $n$, i.e. the set of types induced by sequences of length $n$. Given $P \in \PP^n$, we indicate with $T(P)$ the type class of $P$, i.e. the set of all the sequences in $\XX^n$ having type $P$.
We denote by $\DD(P||Q)$ the Kullback-Leibler (KL) divergence between two distributions $P$ and $Q$, defined on the same finite alphabet $\XX$ \cite{CandT}:
\begin{equation}
\label{eq.KL}
    \DD(P||Q) = \sum_{a \in \XX} P(a) \log_2 \frac{P(a)}{Q(a)}.
\end{equation}
Most of our results are expressed in terms of the generalised log-likelihood ratio function $h$ (see \cite{BTtit,Gut89,Kendall}), which for any two given sequences $x^n$ and $t^m$ is defined as:
\begin{equation}
    h(P_{x^n}, P_{t^m}) = \DD(P_{x^n} || P_{r^{n + m}}) + \frac{m}{n} \DD(P_{t^m} || P_{r^{n+ m}}),
\label{eq.h}
\end{equation}
where $P_{r^{n + m}}$ denotes the type of the sequence $r^{n+m}$, obtained by concatenating $x^n$ and $t^m$, i.e. $r^{n+m} = x^n \| t^m$. The intuitive meaning behind the above definition is that $P_{r^{n+m}}$ is the pmf which maximises the probability that a memoryless source generates two independent sequences belonging to $T(P_{x^n})$ and $T(P_{t^m})$, and that such a probability is equal to $2^{-n h(P_{x^n}, P_{t^m})}$ at the first order in the exponent (see \cite{Kendall} or Lemma 1 in \cite{BTtit}).

Throughout the paper, we will need to compute {\em limits} and {\em distances} in $\PP$. We can do so by choosing one of the many available distances defined over $\mathbb{R}^{|\XX|}$ and for which $\PP$ is a bounded set, for instance the $L_p$ distance for which we have:
\begin{equation}
    d_{L_p}(P,Q) = \bigg( \sum_{a \in \XX} |P(a) - Q(a)|^p \bigg)^{1/p}.
\label{eq.Lp_distance}
\end{equation}
Without loss of generality, we will prove all our results by adopting the $L_1$ distance, the generalisation to different $L_p$ metrics being straightforward. In the sequel, distances between pmf's in $\PP$ will be simply indicated as $d(\cdot, \cdot)$ as a shorthand for $d_{L_1}(\cdot, \cdot)$\footnote{Throughout the paper, we will use the symbol $d(\cdot, \cdot)$ to indicate both the distortion between two sequences in $\XX^n$ and the $L_1$ distance between two pmf's in $\PP$, the exact meaning being always clear from the context,}.

We also need to introduce the {\em Hausdorff distance} as a way to measure distances between subsets of a metric space \cite{munkres2000topology}.
Let $S$ be a generic space and $d$ a distance measure defined over $S$.
%
For any point $x \in S$ and any non-empty subset $A \subseteq S$, the distance of $x$ from the subset $A$ is defined as:
\begin{equation}
d(x,A) ~=~ \inf_{a \in A} d(a,x).
\end{equation}
Given the above definition, the Hausdorff distance between any two subsets of $S$ is defined as follows.
\begin{definition}
For any two subsets $A$ and $B$ of $S$, let us define $\delta_B(A) = \sup_{b \in B} d(b,A)$. The Hausdorff distance $\delta_H(A,B)$ between $A$ and $B$ is given by:
\begin{equation}
\delta_H(A,B) ~=~ \max\{\delta_A(B), \delta_B(A)\}.
\end{equation}
\end{definition}
\noindent If the sets $A$ and $B$ are bounded with respect to $d$, then the Hausdorff distance always takes a finite value. The Hausdorff distance does not define a true metric, but only a pseudometric, since $\delta_H(A,B) = 0$ implies that the closures of the sets $A$ and $B$ coincide, namely $\text{\em cl}(A) = \text{\em cl}(B)$, but not necessarily that $A = B$. For this reason, in order for $\delta_H$ to be a metric, we need to restrict its definition to closed subsets\footnote{Note that in this case the $\inf$ and $\sup$ operations involved in the definition of the Hausdorff distance can be replaced with $\min$ and $\max$, respectively.}.
Let then $\mathcal{L}(S)$ denote the space of  non-empty closed and limited subsets of $S$ and let  $\delta_H: \LL(S) \times \LL(S) \rightarrow [0, \infty)$. Then, the space $\LL(S)$ endowed with the Hausdorff distance is a metric space \cite{Henri99} and we can give the following definition:
\begin{definition}
Let $\{K_n\}$ be a sequence of closed and limited subsets of $S$, i.e., $K_n \in \LL(S)$ $\forall n$. We use the notation $K_n \overset{H}{\rightarrow}~ K$ to indicate that the sequence has limit in $(\LL(S), \delta_H)$ and the limiting set is $K$.
\label{def_limit_set}
\end{definition}

\subsection{Basic notions of Game Theory}

In this section, we introduce some basic notions and definitions of Game Theory.

A 2-player game is defined as a quadruple $(\SS_1,\SS_2,u_1, u_2)$, where $\SS_1 = \{s_{1,1} \dots s_{1,n_1}\}$ and $\SS_2 = \{s_{2,1} \dots s_{2,n_2}\}$ are the set of strategies the first and the second player can choose from, and $u_l(s_{1,i}, s_{2,j}), l= 1,2$, is the payoff of the game for player $l$, when the first player chooses the strategy $s_{1,i}$ and the second chooses $s_{2,j}$. A pair of strategies $(s_{1,i}, s_{2,j})$ is called a profile. When $u_1(s_{s1,i}, s_{2,j}) = -u_2(s_{1,i}, s_{2,j})$, the win of a player is equal to the loss of the other and the game is said to be a zero-sum game. The sets $\SS_1$, $\SS_2$ and the payoff functions are assumed to be known to both players. Throughout the paper we consider strategic games, i.e., games in which the players choose their strategies beforehand without knowing the  strategy chosen by the opponent player.

The final goal of game theory is to determine the existence of equilibrium points, i.e. profiles that in {\em some sense} represent the {\em best} choice for both players  \cite{Osb94}. The most famous notion of equilibrium is due to Nash.
A profile is said to be a Nash equilibrium if no player can improve its payoff by changing its strategy unilaterally. Despite its popularity, the practical meaning of Nash equilibrium is often unclear, since there is no guarantee that the players will end up playing at the equilibrium.
A particular kind of games for which stronger forms of equilibrium exist are the so called {\em dominance solvable} games \cite{Osb94}. To be specific, a strategy is said to be strictly dominant for one player if it is the best strategy for the player, i.e., the strategy which corresponds to the largest payoff, no matter how the other player decides to play. When one such strategy exists for one of the players, he will surely adopt it.
In a similar way, we say that a strategy $s_{l,i}$ is strictly dominated by strategy $s_{l,j}$, if the payoff achieved by player $l$ choosing $s_{l,i}$ is always lower than that obtained by playing $s_{l,j}$ regardless of the choice made by the other player. The recursive elimination of dominated strategies is a common technique for solving games. In the first step, all the dominated strategies are removed from the set of available strategies, since no rational player would ever play them. In this way, a new, smaller game is obtained. At this point, some strategies, that were not dominated before, may be dominated in the remaining  game, and hence are eliminated. The process goes on until no dominated strategy exists for any player. A {\em rationalizable equilibrium} is any profile which survives the iterated elimination of dominated strategies \cite{ChenGames,Bern84}. If at the end of the process only one profile is left, the remaining profile is said to be the {\em only rationalizable equilibrium} of the game. The corresponding strategies are the only rational choice for the two players and the game is said {\em dominance solvable}.

\enlargethispage{\baselineskip}

\section{Source identification game with addition of corrupted training samples (\SIa)}
\label{sec.SI_CTR_add}

In this section, we  give a rigorous definition of the Source Identification game with addition of corrupted training samples.

Given a discrete and memoryless source $X \sim P_X$ and a test sequence $v^n$, the goal of the defender (D) is to decide whether $v^n$ has been drawn from $X$ (hypothesis $H_0$) or not (alternative hypothesis $H_1$). By adopting a Neyman-Pearson perspective, we assume that D must ensure that the false positive error probability ($P_{fp}$), i.e., the probability of rejecting $H_0$ when $H_0$ holds (type I error) is lower than a given threshold. Similarly to the previous versions of the game studied in \cite{BT13} and \cite{BTtit}, we assume that D relies only on first order statistics to make a decision. For mathematical tractability, likewise earlier papers, we study the asymptotic version of the game when $n \rarrow \infty$, by requiring that $P_{fp}$ decays exponentially fast when $n$ increases, with an error exponent at least equal to $\lambda$, i.e. $P_{fp} \le 2^{-n \lambda}$. On its side, the attacker aims at increasing the false negative error probability ($P_{fn}$), i.e., the probability of accepting $H_0$ when $H_1$ holds (type II error). Specifically, A takes a sequence $y^n$ drawn from a source $Y \sim P_Y$ and modifies it in such a way that D decides that the modified sequence $z^n$ has been generated by $X$. In doing so, A must respect a distortion constraint requiring that the average per-letter distortion between $y^n$ and $z^n$ is lower than $L$.

Players A and D know the statistics of $X$ through a training sequence, however the training sequence can be partly corrupted by A. Depending on how the training sequence is modified by the attacker, we can define different versions of the game. In this paper, we focus on two possible cases: in the first case, hereafter referred to as source identification game with addition of corrupted samples \SIa, the attacker can add some fake samples to the original training sequence. In the second case, analysed in Section \ref{sec.SI_CTR_c}, the attacker can replace some of the training samples with fake values (source identification game with replacement of training samples - \SIr). It is worth stressing that, even if the goal of the attacker is to increase the false negative error probability, the training sequence is corrupted regardless of whether $H_0$ or $H_1$ holds, hence, in general, this part of the attack also affects the false positive error probability. As it will be clear later on, this forces the defender to adopt a worst case perspective to ensure that $P_{fp}$ is surely lower than $2^{-{\lambda n}}$.

As to $Y$, we assume that the attacker knows $P_Y$ exactly. For a proper definition of the payoff of the game, we also assume that D knows $P_Y$. This may seem a too strong assumption, however we will show later on that the optimum strategy of D does not depend on $P_Y$, thus allowing us to relax the assumption that D knows $P_Y$.

With the above ideas in mind, we are now ready to give a formal definition of the \SIa ~game.

\subsection{Structure of the  \SIa~game}

A schematic representation of the \SIa ~game is given in \fig{fig.ADVsetup_add}.

Let $\tau^{m_1}$ be a sequence drawn from $X$. We assume that $\tau^{m_1}$ is accessible to A, who corrupts it by concatenating to it a sequence of fake samples $\tau^{m_2}$. Then A reorders the overall sequence in a random way so to hide the position of the fake samples. Note that reordering does not alter the statistics of the training sequence since the sequence is supposed to be generated from a memoryless source\footnote{By using the terminology introduced in \cite{Barreno2010}, the above scenario can be referred to as a {\em causative} attack with control over training data.}. In the following, we denote by $m$ the final length of the training sequence ($m = m_1 + m_2$), and by $\alpha = \frac{m_2}{m_1+m_2}$ the portion of fake samples within it. The corrupted training sequence observed by D is indicated by $t^m$. Eventually, we hypothesize a linear relationship between the lengths of the test and the corrupted training sequence, i.e. $m = cn$, for some constant value $c$\footnote{In this paper, we are interested in studying the equilibrium point of the source identification game when the length of the test and training sequences tend to infinity. Strictly speaking, we should ensure that when $n$ grows, all the quantities $m$, $m_1$ and $m_2$ are integer numbers for the given $c$ and $\alpha$. In practice, we will neglect such an issue, since when $n$ grows the ratios $m/n$ and $m_1/(m_1 + m_2)$ can approximate any real values $c$ and $\alpha$. More rigorously, we could consider only rational values of $c$ and $\alpha$, and focus on subsequences of $n$ including only those values for which $m/n = c$ and $m_1/(m_1 + m_2) = \alpha$.}.

The goal of D is to decide if an observed sequence $v^n$ has been drawn from the same source that generated $t^m$ ($H_0$) or not ($H_1$). We assume that D knows that a certain percentage of samples in the training sequence are corrupted, but he has no clue about the position of the corrupted samples. The attacker can also modify the sequence generated by $Y$ so to induce a decision error. The corrupted sequence is indicated by $z^n$. With regard to the two phases of the attack, we assume that A first corrupts the training sequence, then he modifies the sequence $y^n$. This means that, in general, $z^n$ will depend both on $y^n$ and $t^m$, while $t^m$ (noticeably $\tau^{m_2}$) does not depend on $y^n$. Stated in another way, the corruption of the training sequence can be seen as a preparatory part of the attack, whose goal is to ease the subsequent camouflage of $y^n$.

\begin{figure}[t!]
\centering
\includegraphics[width = 0.99\columnwidth]{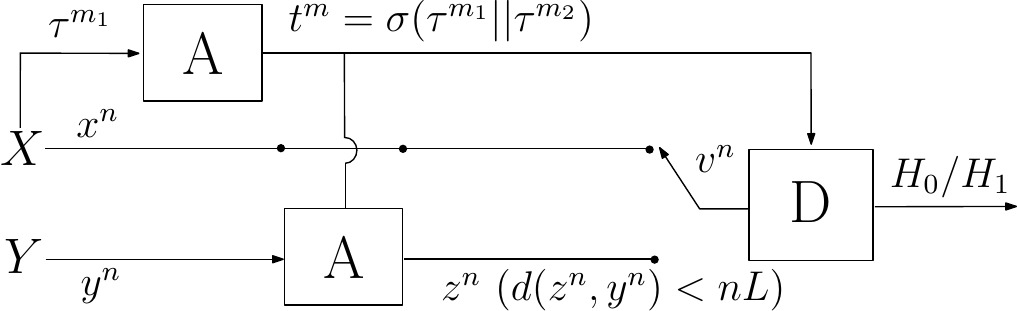}
\caption{Schematic representation of the \SIa~game. Symbol $||$ denotes concatenation of sequences and $\sigma()$ is a random permutation of sequence samples.}
\label{fig.ADVsetup_add}
\end{figure}

For a formal definition of the \SIa~game, we must define the set of strategies available to D and A (respectively $\SS_D$ and $\SS_A$) and the corresponding payoffs.

\subsection{Defender's strategies}

The basic assumption behind the definition of the space of strategies available to D is that to make his decision  D relies only on the first order statistics of $v^n$ and $t^m$. This assumption is equivalent to requiring that the acceptance region for hypothesis $H_0$, hereafter referred to as $\Lambda^{n \times m}$, is a union of pairs of type classes\footnote{We use the superscript $n \times m$ to indicate explicitly that $\Lambda^{n \times m}$ refers to $n$-long test sequences and $(m = cn)$-long training sequences.}, or equivalently, pairs of types $(P,R)$, where $P \in \PP^n$ and $R \in \PP^m$. To define $\Lambda^{n \times m}$, D follows a Neyman-Pearson approach, requiring that the false positive error probability is lower than a certain threshold. Specifically, we require that the false positive error probability tends to zero exponentially fast with a decay rate at least equal to $\lambda$. Given that the pmf $P_X$ ruling the emission of sequences under $H_0$ is not known and given that the corruption of the training sequence is going to impair D's decision under $H_0$, we adopt a worst case approach and require that the constraint on the false positive error probability holds for all possible $P_X$ and for all the possible strategies available to the attacker. Given the above setting, the space of strategies available to D is defined as follows:
\begin{equation}
    \SS_{D} = \{ \Lambda^{n \times m} \subset \PP^n \times \PP^m:~ \max_{P_X \in \mathcal{P}} ~\max_{s \in \SS_A} ~P_{fp} ~\le~ 2^{-\lambda n}\},
\label{eq.SD}
\end{equation}
where the inner maximization is performed over all the strategies available to the attacker. We will refine this definition at the end of the next section, after the exact definition of the space of strategies of the attacker.

\subsection{Attacker's strategies}

With regard to A, the attack consists of two parts. Given a sequence $y^n$ drawn from $P_Y$, and the original training sequence $\tau^{m_1}$, the attacker first generates a sequence of fake samples $\tau^{m_2}$ and mixes them up with those in $\tau^{m_1}$ producing the training sequence $t^m$ observed by D. Then he transforms $y^n$ into $z^n$, eventually trying to generate a pair of sequences ($z^n, t^m$)\footnote{While reordering is essential to hide the position of fake samples to D, it does not have any impact on the position of ($z^n, t^m$) with respect to $\Lambda^{n \times m}$, since we assumed that the defender bases its decision only on the first order statistic of the observed sequences. For this reason, we omit to indicate the reordering operator $\sigma$ in the attacking procedure.} whose types belong to $\Lambda^{n \times m}$. In doing so, he must ensure that $d(y^n, z^n) \le nL$ for some distortion function $d$.

Let us consider the corruption of the training sequence first. Given that the defender bases his decision only on the type of $t^m$, we are only interested in the effect that the addition of the fake samples has on $P_{t^m}$. By considering the different length of $\tau^{m_1}$ and $\tau^{m_2}$, we have:
\begin{equation}
\label{eq.PMF_composition}
    P_{t^m} =  \alpha P_{\tau^{m_2}} + (1-\alpha) P_{\tau^{m_1}},
\end{equation}
where $P_{t^m} \in \PP^m$, $P_{\tau^{m_1}} \in \PP^{m_1}$ and $P_{\tau^{m_2}} \in \PP^{m_2}$. The first part of the attack, then, is equivalent to choosing a pmf in $\PP^{m_2}$ and mixing it up with $P_{\tau^{m_1}}$. By the same token,  it is reasonable to assume that the choice of the attacker depends only on $P_{\tau^{m_1}}$ rather than on the single sequence $\tau^{m_1}$. Arguably, the best choice of the pmf in $\PP^{m_2}$ will depend on $P_Y$, since the corruption of the training sequence is instrumental in letting the defender think that a sequence generated by $Y$ has been drawn by the same source that generated $t^m$.


To describe the part of the attack applied to the test sequence, we follow the approach used in \cite{BT_SMargin} based on transportation theory \cite{rachev1998mass}.
Let us indicate by $n(i,j)$ the number of times that the $i$-th symbol of the alphabet is transformed into the $j$-th one as a consequence of the attack. Similarly, let $S^n_{YZ}(i,j) = n(i,j)/n$ be the relative frequency with which such a transformation occurs. In the following, we refer to $S^n_{YZ}$ as {\em transportation map}. For any additive distortion measure, the distortion introduced by the attack can be expressed in terms of $n(i,j)$ and $S^n_{YZ}$. In fact, we have:
\begin{equation}
    d(y^n, z^n) ~=~ \sum_{i,j} n(i,j) d(i,j),
\label{eq.overall_dist}
\end{equation}
\begin{equation}
    \frac{d(y^n, z^n)}{n} ~=~ \sum_{i,j} S^n_{YZ}(i,j) d(i,j).
\label{eq.averagel_dist}
\end{equation}
where $d(i,j)$ is the distortion introduced when symbol $i$ is transformed into symbol $j$.

The map $S^n_{YZ}$ also determines the type of the attacked sequence. In fact, by indicating with $P_{z^n}(j)$ the relative frequency of symbol $j$ into $z^n$, we have:
\begin{equation}
\label{eq.out_type}
    P_{z^n}(j) ~=~ \sum_i S^n_{YZ}(i,j) ~\triangleq~ S^n_Z(j).
\end{equation}
Finally, we observe that the attacker can not change more symbols than there are in the sequence $y^n$; as a consequence a map $S^n_{YZ}$ can be applied to a sequence $y^n$ only if $S^n_{Y}(i) \triangleq \sum_{j} S^n_{YZ}(i,j) = P_{y^n}(i)$. Sometimes, we find convenient to explicit the dependence of the map chosen by the attacker on the type of $t^m$ and $y^n$, and hence we will also adopt the notation $S^n_{YZ}(P_{t^m}, P_{y^n})$.

By remembering that $\Lambda^{n \times m}$ depends on $v^n$ only through its type, and given that the type of the attacked sequence depends on $S^n_Y$ only through $S^n_{YZ}$, we can define the second phase of the attack as the choice of a transportation map among all {\em admissible} maps, a map being admissible if:
\begin{align}
\label{eq.admissiblemap1}
    & S^n_{Y} ~= ~P_{y^n} \\ \nonumber
    & \sum_{i,j} S^n_{YZ}(i,j) d(i,j) ~\le~ L.
\end{align}
Hereafter, we will refer to the set of admissible maps as $\A^n(L, P_{y^n})$.


With the above ideas in mind, the set of strategies of the attacker can be defined as follows:
\begin{equation}
    \SS_A ~=~ \SS_{A,T} \times \SS_{A,O},
\label{eq.SAD_cartesian}
\end{equation}
%
where $\SS_{A,T}$ and $\SS_{A,O}$ indicate, respectively, the part of the attack affecting the training sequence and the observed sequence, and are defined as:
%
\begin{align}
    \SS_{A,T} & ~=~ \bigg\{ Q(P_{\tau^{m_1}}): ~\PP^{m_1} \rightarrow \PP^{m_2}\bigg\},
\label{eq.SAD_TR_T}\\
    \SS_{A,O} & ~=~ \bigg\{ S^n_{YZ}(P_{y^n}, P_{t^m}): ~\PP^{n} \times \PP^{m}  \rightarrow \A^n(L, P_{y^n}) \bigg\}.
\label{eq.SAD_TR_O}
\end{align}
%

%
%
\noindent Note that the first part of the attack ($\SS_{A,T}$) is applied regardless of whether $H_0$ or $H_1$ holds, while the second part ($\SS_{A,O}$) is applied only under $H_1$. We also stress that the choice of $Q(P_{\tau^{m_1}})$ depends only on the training sequence $\tau^{m_1}$,  while the transportation map used in the second phase of the attack is a function of both on $y^n$ and $\tau^{m_1}$ (through $t^m$).
Finally, we observe that with these definitions, the set of strategies of the defender can be redefined by explicitly indicating that the constraint on the false positive error probability must be verified for all possible choices of $Q(\cdot) \in \SS_{A,T}$, since this is the only part of the attack affecting $P_{fp}$. Specifically, we can rewrite \eqref{eq.SD} as
\begin{equation}
    \SS_{D} = \{ \Lambda^{n \times m}  \subset  \PP^n \times  \PP^m: ~\max_{P_X} ~\max_{Q(\cdot) \in \SS_{A,T}} P_{fp} ~\le~ 2^{-\lambda n}\}.
\label{eq.SD_bis}
\end{equation}

\subsection{Payoff}

The payoff is defined in terms of the false negative error probability, namely:
\begin{equation}
    u(\Lambda^{n \times m}, (Q(\cdot), ~S^n_{YZ}(\cdot, \cdot))) ~=~ -P_{fn}.
\label{eq.payoff_TR}
\end{equation}
Of course, D aims at maximising $u$ while A wants to minimise it.

\subsection{The \SIa~game with targeted corruption (\SIat~game)}

The \SIa~game is difficult to solve directly, because of the 2-step attacking strategy.
We will work around this difficulty by tackling first with a slightly different version of the game, namely the source identification game with targeted corruption of the training sequence, \SIat, depicted in Fig. \ref{fig.ADVsetup_add_bis}.

Whereas the strategies available to the defender remain the same, for the attacker, the choice of $Q(\cdot)$ is targeted to the counterfeiting of a given sequence $y^n$. In other words, we will assume that the attacker corrupts the training sequence $\tau^{m_1}$ to ease the counterfeiting of a specific sequence $y^n$ rather than to increase the probability that the second part of the attack succeeds. This means that the part of the attack aiming at corrupting the training sequence also depend on $y^n$, that is:
\begin{align}
    \SS_{A,T} ~=~ \bigg\{Q(P_{\tau^{m_1}}, P_{y^n}): ~\PP^{m_1} \times \PP^n \rightarrow \PP^{m_2}\bigg\}.
\label{eq.SAD_TR_Targeted_MB}
\end{align}
Even if this setup is not very realistic and is more favourable to the attacker, who can exploit the exact knowledge of $y^n$ (rather than its statistical properties) also for the corruption of the training sequence, in the next section we will show that, for large $n$, the \SIat~game is equivalent to the non-targeted version of the game we are interested in.

With the above ideas in mind, the \SIat~game is formally defined as follows.
\begin{figure}[t!]
\centering \includegraphics[width = 0.99\columnwidth]{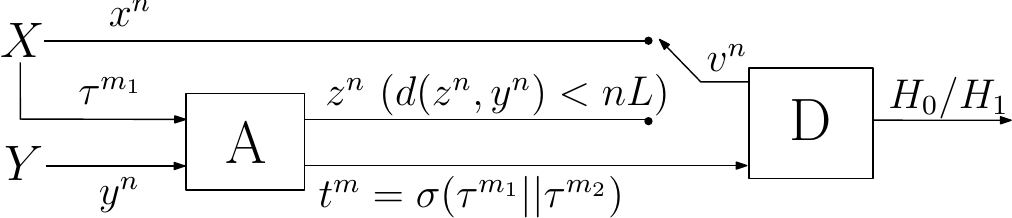}
\caption{\SIa~game with targeted corruption of the training sequence.}
\label{fig.ADVsetup_add_bis}
\end{figure}

\subsubsection{Defender's strategies}

\begin{equation}
    \SS_{D} = \{ \Lambda^{n \times m} \subset \PP^n \times  \PP^m: ~ \max_{P_X} \max_{Q(\cdot,\cdot) \in \SS_{A,T}} P_{fp} \le 2^{-\lambda n}\}.
\label{eq.SD_bis_t}
\end{equation}

\subsubsection{Attacker's strategies}

\begin{equation}
    \SS_A ~=~ \SS_{A,T} \times \SS_{A,O}
\label{eq.SAD_cartesian_Tt}
\end{equation}
with $\SS_{A,T}$ and $\SS_{A,O}$ defined as in \eqref{eq.SAD_TR_Targeted_MB} and \eqref{eq.SAD_TR_O} respectively.

\subsubsection{Payoff}

The payoff is still equal to the false negative error probability:
\begin{equation}
    u(\Lambda^{n \times m}, (Q(\cdot, \cdot), ~S^n_{YZ}(\cdot, \cdot))) ~=~ -P_{fn}.
\label{eq.payoff_TR_t}
\end{equation}
%

\section{Asymptotic equilibrium and payoff of the \SIat~and \SIa~games}
\label{sec.SI_CTR_add_solution}

In this section, we derive the asymptotic equilibrium point of the  \SIat~and the \SIa~games when the length of the test and training sequences tends to infinity and evaluate the payoff at the equilibrium.

\subsection{Optimum defender's strategy}

We start by deriving the asymptotically optimum strategy for D. As we will see, a dominant and universal strategy with respect to $P_Y$ exists for D. In other words, the optimum choice of D depends on neither the strategy chosen by the attacker nor $P_Y$. In addition, since the constraint on the false positive probability must be satisfied for all attackers' strategy, the optimum strategy for the defender is the same for both the targeted and non-targeted versions of the game.

As a first step, we look for an explicit expression of the false positive error probability. Such a probability depends on $P_X$ and on the strategy used by A to corrupt the training sequence. In fact, the mapping of $y^n$ into $z^n$ does not have any impact on D's decision under $H_0$.
We carry out our derivations by focusing on the game with targeted corruption. It will be clear from our analysis that the dependence on $y^n$ has no impact on $P_{fp}$, and hence the same results hold for the game with non-targeted corruption.

For a given $P_X$ and $Q(\cdot, \cdot)$, $P_{fp}$ is equal to the probability that $Y$ generates a sequence $y^n$ and $X$ generates two sequences $x^n$ and $\tau^{m_1}$, such that the pair of type classes $(P_{x^n}, \alpha Q(P_{\tau^{m_1}}, P_{y^n}) + (1-\alpha) P_{\tau^{m_1}})$ falls outside $\Lambda^{n \times m}$. Such a probability can be expressed as:
\begin{align}
\label{eq.explicit_Pfp}
P_{fp} ~=~ & Pr\{(P_{x^n}, \alpha Q(P_{\tau^{m_1}}, P_{y^n}) + (1-\alpha) P_{\tau^{m_1}}) \in \bar{\Lambda}^{n\times m} \} \nonumber \\
    ~=~ & \sum_{P_{y^n} \in \PP^n} P_Y(T(P_{y^n}))  \cdot \\
    & \sum_{(P_{x^n}, P_{t^m}) \in \bar{\Lambda}^{n\times m}} \hspace{-0.5cm} P_X(T(P_{x^n})) \cdot \hspace{-1.7cm} \sum_{\substack{P_{\tau^{m_1}} \in \PP^{m_1}:  \\ \alpha Q(P_{\tau^{m_1}}, P_{y^n}) + (1-\alpha) P_{\tau^{m_1}} = P_{t^m}}}\hspace{-1.6cm}  P_X(T(P_{\tau^{m_1}})),\nonumber
\end{align}
where $\bar{\Lambda}^{n \times m}$ is the complement of $\Lambda^{n \times m}$, and where we have exploited the fact that under $H_0$ the training sequence $\tau^{m_1}$ and the test sequence $x^n$ are generated independently by $X$. Given the above formulation, the set of strategies available to D can be rewritten as:
\begin{align}
\label{eq.SD_explicit}
\SS_D & = \bigg \{ \Lambda^{n \times m} : ~ \max_{P_X} \max_{Q(\cdot, \cdot)}  \sum_{P_{y^n} \in \PP^n} P_Y(T(P_{y^n})) \cdot  \\ \nonumber
& \sum_{(P_{x^n}, P_{t^m}) \in \bar{\Lambda}^{n\times m}} \hspace{-0.6cm} P_X(T(P_{x^n})) \cdot \hspace{-1.5cm} \sum_{\substack{P_{\tau^{m_1}} \in \PP^{m_1}:  \\ \alpha Q(P_{\tau^{m_1}}, P_{y^n}) + (1-\alpha) P_{\tau^{m_1}} = P_{t^m}}}\hspace{-1.6cm}  P_X(T(P_{\tau^{m_1}})) \le 2^{-\lambda n}
\bigg \}.
\end{align}
%
%

We are now ready to prove the following lemma, which describes the asymptotically optimum strategy for the defender for both versions of the game.

\begin{lemma}
Let $\Lambda^{n \times m,*}$ be defined as follows:
\begin{equation}
    \Lambda^{n \times m,*} = \left\lbrace  (P_{v^n}, P_{t^m}) : \hspace{-0.06cm}
    \min_{Q \in \PP^{m_2}} \hspace{-0.06cm}  h\left(P_{v^n},\frac{P_{t^m}-\alpha Q}{1-\alpha} \right) ~\le~ \lambda \hspace{-0.03cm} - \hspace{-0.03cm} \delta_{n} \right\rbrace
\label{eq.optimum_SD}
\end{equation}
with
\begin{equation}
\delta_{n} = |\XX|\frac{\log(n+1)((1-\alpha)nc+1)}{n},
\end{equation}
where $|\XX|$ is the cardinality of the source alphabet and where the minimisation over $Q$ is limited to all the $Q$'s such that $P_{t^m}-\alpha Q$ is nonnegative for all the symbols in $\XX$.

\noindent Then:
\begin{enumerate}
\item{$ \max\limits_{P_X} \max\limits_{s \in \SS_A}  ~P_{fp} ~\le~ 2^{-n(\lambda - \nu_n)}$, with $\lim\limits_{n \rarrow \infty} \nu_n = 0$,}
\item{$\forall \Lambda^{n \times m} ~\in~ \SS_{D}$, we have $\bar{\Lambda}^{n \times m} \subseteq {\bar{\Lambda}^{n \times m,*}}$}.
\end{enumerate}
\label{lemma.optimum_SD}
\end{lemma}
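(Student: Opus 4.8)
The plan is to prove the two claims of Lemma~\ref{lemma.optimum_SD} separately, using the explicit formula for $P_{fp}$ in \eqref{eq.explicit_Pfp} together with standard type-counting estimates (the number of types with denominator $k$ is at most $(k+1)^{|\XX|}$, and $P_X(T(P)) \le 2^{-k\DD(P\|P_X)}$ for a length-$k$ sequence). The key observation that makes everything work is that $\Lambda^{n\times m,*}$ is defined \emph{without reference to $P_X$ or $P_Y$}: it only involves the types $P_{v^n},P_{t^m}$ and the attacker's training-corruption budget $\alpha$. So once I show $\Lambda^{n\times m,*}$ satisfies the false-positive constraint uniformly, I will have both that it is an admissible defender strategy and (for claim~2) that it is the \emph{largest} such acceptance region up to the $\delta_n$ slack.

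For claim~1, I would start from \eqref{eq.explicit_Pfp} and bound $P_{fp}$ for the region $\Lambda^{n\times m,*}$. Fix $P_X$ and an attacker strategy $Q(\cdot,\cdot)$. The inner sum over $P_{\tau^{m_1}}$ with $\alpha Q(P_{\tau^{m_1}},P_{y^n})+(1-\alpha)P_{\tau^{m_1}}=P_{t^m}$ is bounded by the number of types in $\PP^{m_1}$ times the largest term $P_X(T(P_{\tau^{m_1}}))\le 2^{-m_1\DD(P_{\tau^{m_1}}\|P_X)}$; similarly $P_X(T(P_{x^n}))\le 2^{-n\DD(P_{x^n}\|P_X)}$ and $P_Y(T(P_{y^n}))\le 1$. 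Summing over the at most $(n+1)^{|\XX|}$ choices of $P_{y^n}$, $(n+1)^{|\XX|}(m+1)^{|\XX|}$ pairs $(P_{x^n},P_{t^m})$, and $(m_1+1)^{|\XX|}$ choices of $P_{\tau^{m_1}}$, the polynomial prefactors collect into exactly $2^{n\,\text{(something like }\delta_n\text{)}}$; the exponent then reduces to $-\min\big\{ n\DD(P_{x^n}\|P_X) + m_1\DD(P_{\tau^{m_1}}\|P_X)\big\}$ over the feasible configuration, where the minimisation respects $\alpha Q + (1-\alpha)P_{\tau^{m_1}}=P_{t^m}$ for \emph{some} $Q\in\PP^{m_2}$, i.e.\ over all $Q$ with $P_{t^m}-\alpha Q\ge 0$ and $P_{\tau^{m_1}}=(P_{t^m}-\alpha Q)/(1-\alpha)$. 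By the definition of $h$ in \eqref{eq.h} (with $m_1$ playing the role of the ``training length'' and recalling $m_1=(1-\alpha)cn$), the inner concatenation-type bound gives that this minimum exponent is at least $n\cdot\min_{Q\in\PP^{m_2}} h\!\left(P_{v^n},\frac{P_{t^m}-\alpha Q}{1-\alpha}\right)$ — this is where Lemma~1 of \cite{BTtit} / the $h$-function identity is invoked — which on the complement $\bar\Lambda^{n\times m,*}$ is $> \lambda - \delta_n$. Combining the polynomial prefactor with this exponent yields $P_{fp}\le 2^{-n(\lambda-\nu_n)}$ with $\nu_n\to 0$, uniformly in $P_X$ and in the attacker strategy (the bound never used the specific form of $Q(\cdot,\cdot)$, which is exactly why the targeted/non-targeted distinction is irrelevant here).

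For claim~2, I argue by contradiction/contrapositive: suppose $\Lambda^{n\times m}\in\SS_D$ but $\bar\Lambda^{n\times m}\not\subseteq\bar\Lambda^{n\times m,*}$, i.e.\ there is a pair $(P,R)\in\bar\Lambda^{n\times m}\cap\Lambda^{n\times m,*}$. Being in $\Lambda^{n\times m,*}$ means $\min_{Q}h\!\left(P,\frac{R-\alpha Q}{1-\alpha}\right)\le\lambda-\delta_n$; let $Q^\star$ and $R_1^\star=(R-\alpha Q^\star)/(1-\alpha)$ achieve (approximately) this minimum. Now I design an attacker/source pair that concentrates mass on exactly this configuration: take $P_X$ equal (up to the usual type-approximation) to the pmf $P_{r}$ that maximises the joint generation probability of $T(P)$ and $T(R_1^\star)$ — i.e.\ the concatenation type appearing in the definition of $h$ — and let the attacker use $Q^\star$ as the fake-sample type. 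Then the corresponding term in \eqref{eq.explicit_Pfp} is, to first order in the exponent, $2^{-n\,h(P,R_1^\star)}\ge 2^{-n(\lambda-\delta_n)}$, which (since a single type class has probability lower-bounded by $(\text{poly})^{-1}2^{-k\DD}$, the standard lower bound on $P_X(T(P))$) forces $P_{fp}> 2^{-\lambda n}$ for $n$ large, contradicting $\Lambda^{n\times m}\in\SS_D$. Hence every admissible $\Lambda^{n\times m}$ has $\bar\Lambda^{n\times m}\subseteq\bar\Lambda^{n\times m,*}$.

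I expect the main obstacle to be the careful bookkeeping in claim~1: getting the polynomial type-counting factors to aggregate into precisely the stated $\delta_n=|\XX|\frac{\log(n+1)((1-\alpha)nc+1)}{n}$ (note the $(1-\alpha)nc+1 = m_1+1$ factor, flagging that the dominant combinatorial term is the number of $P_{\tau^{m_1}}$-types), and verifying that the $h$-function identity survives the asymmetric lengths $n$ vs.\ $m_1=(1-\alpha)cn$ and the reparametrisation $R_1=(R-\alpha Q)/(1-\alpha)$ with the nonnegativity constraint on $R-\alpha Q$. The probabilistic content is routine (Sanov-type bounds, or here just direct type-class estimates); the delicate part is that the estimates must hold \emph{uniformly} over the infinite family of attacker maps $Q(\cdot,\cdot)$ and over all $P_X\in\PP$, which is precisely what lets the minimisation over $Q$ in \eqref{eq.optimum_SD} be pulled out as a worst case, and what makes the optimum defender strategy universal.
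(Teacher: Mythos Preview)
Your proposal is essentially correct and follows the same route as the paper: type-counting upper bounds plus the $h$-function identity for part~1, and a single-pair lower bound with a well-chosen $P_X$ and corruption pmf for part~2. Two small refinements worth noting: in part~1, after you replace the inner constrained sum by $|\PP^{m_1}|\cdot\max_{Q}P_X\bigl(T\bigl(\tfrac{P_{t^m}-\alpha Q}{1-\alpha}\bigr)\bigr)$ the summand no longer depends on $P_{y^n}$, so $\sum_{P_{y^n}}P_Y(T(P_{y^n}))=1$ exactly and you do not need an extra $(n+1)^{|\XX|}$ factor there; in part~2, the argument is not ``for $n$ large'' but exact for each $n$ --- the slack $\delta_n=|\XX|\tfrac{\log(n+1)(m_1+1)}{n}$ is chosen precisely so that the polynomial prefactor $(n+1)^{|\XX|}(m_1+1)^{|\XX|}$ from the type-class lower bound is absorbed and the implication $\min_Q h(P,R_1)\le\lambda-\delta_n\Rightarrow P_{fp}> 2^{-\lambda n}$ holds without any asymptotics.
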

\begin{proof}
To prove the first part of the lemma, we see that from the expression of the false positive error probability given by eq. (\ref{eq.explicit_Pfp}), we can write:
\begin{align}
\label{eq.proof_L1_first_part}
& \max_{P_X} ~\max_{Q(\cdot,\cdot)} ~P_{fp} ~\le~ \\ \nonumber
& \max_{P_X}  \sum_{P_{y^n} \in \PP^n} P_Y(T(P_{y^n})) \cdot \sum_{\substack{(P_{x^n}, P_{t^m}) \\ \in \bar{\Lambda}^{n \times m,*}}} \hspace{-0.3cm} P_X(T(P_{x^n})) \cdot \nonumber\\
& \hspace{3cm}  \max_{Q(\cdot, \cdot)} \hspace{-1cm} \sum_{\substack{P_{\tau^{m_1}} \in \PP^{m_1}:  \\ \alpha Q(P_{\tau^{m_1}}, P_{y^n}) + (1-\alpha) P_{\tau^{m_1}} = P_{t^m}}}\hspace{-1.6cm}  P_X(T(P_{\tau^{m_1}})).
\end{align}
Let us consider the term within the inner summation. For each $P_{\tau^{m_1}}$ such that $\alpha Q(P_{\tau^{m_1}}, P_{y^n}) + (1-\alpha) P_{\tau^{m_1}} = P_{t^m}$, we have\footnote{It is easy to see that the bound \eqref{eq.innersum} holds also for the non-targeted game, when $Q$ depends on the training sequence only ($Q(P_{\tau^{m_1}})$).}:
\begin{equation}
    P_X(T(P_{\tau^{m_1}})) ~\le~ \max_{Q \in \PP^{m_2}} P_X\left( T \left( \frac{P_{t^m} - \alpha Q}{1 - \alpha} \right) \right),
\label{eq.innersum}
\end{equation}
with the understanding that the maximisation is carried out only over the $Q$'s such that $P_{t^m} - \alpha Q$ is nonnegative for all the symbols in $\XX$.

Thanks to the above observation, we can upper bound the false positive error probability as follows:
\begin{align}
\label{eq.Pf_upbound}
& \max_{P_X} ~\max_{Q(\cdot, \cdot)} ~P_{fp} ~\le \\ \nonumber
& \max_{P_X}  \sum_{P_{y^n} \in \PP^n} P_Y(T(P_{y^n})) \cdot \\ \nonumber
& \sum_{\substack{(P_{x^n}, P_{t^m}) \\ \in \bar{\Lambda}^{n \times m,*}}} \hspace{-0.4cm} P_X(T(P_{x^n})) \cdot |\PP^{m_1}|  \cdot \max_{Q \in \PP^{m_2}}  P_X \left( T \left( \frac{P_{t^m} - \alpha Q}{1 - \alpha} \right) \right) \\ \nonumber
&  \stackrel{(a)}{=}  \max_{P_X}  \sum_{\substack{(P_{x^n}, P_{t^m}) \\ \in \bar{\Lambda}^{n \times m,*}}} \hspace{-0.2cm} P_X(T(P_{x^n})) |\PP^{m_1}|  \max_{Q \in \PP^{m_2}} P_X \left( T \left( \frac{P_{t^m} - \alpha Q}{1 - \alpha} \right) \hspace{-0.1cm} \right) \\ \nonumber
&  \le |\PP^{m_1}|  \hspace{-0.1cm}\sum_{\substack{(P_{x^n}, P_{t^m}) \\ \in \bar{\Lambda}^{n \times m,*}}} \hspace{-0.1cm} \max_{Q \in \PP^{m_2}}  \max_{P_X} \hspace{0.1cm} P_X(T(P_{x^n})) P_X \left( T \left( \frac{P_{t^m} - \alpha Q}{1 - \alpha} \right) \hspace{-0.1cm} \right)
\end{align}
where in $(a)$ we exploited the fact that the rest of the expression no longer depends on $P_{y^n}$.
From this point, the proof goes along the same line of the proof of Lemma 2 in \cite{BTtit}, by observing that $\max_{P_X} P_X(T(P_{x^n})) P_X \left( T \left( \frac{P_{t^m} - \alpha Q}{1 - \alpha} \right) \right)$is upper bounded by $2^{-n h(P_{x^n}, \frac{P_{t^m} - \alpha Q}{1 - \alpha})}$, and that for each pair of types in $\bar{\Lambda}^{n\times m, *}$, $h(P_{x^n}, \frac{P_{t^m} - \alpha Q}{1 - \alpha})$ is larger than $\lambda - \delta_n$ for every $Q$ by the very definition of $\Lambda^{n\times m, *}$.

We now pass to the second part of the lemma. Let $\Lambda^{n \times m}$ be a strategy in $\SS_D$, and let $(P_{x^n}, P_{t^m})$ be a pair of types contained in $\bar{\Lambda}^{n \times m}$. Given that $\Lambda^{n \times m}$ is an admissible decision region (see \eqref{eq.SD_bis_t}), the probability that $X$ emits a test sequence belonging to $T(P_{x^n})$ and a training sequence $\tau^{m_1}$ such that after the attack $(\tau^{m_1} || \tau^{m_2}) \in T(P_{t^m})$ must be lower than $2^{-\lambda n}$ for all $P_X$ and all possible attacking strategies, that is:
\begin{align}
\label{eq.proof_L1_second_part}
2^{-\lambda n} & ~> ~\max_{P_X} \max_{Q(\cdot, \cdot)}  \sum_{P_{y^n} \in \PP^n} P_Y(T(P_{y^n})) \cdot \\ \nonumber
& \hspace{1.2cm} \big[ P_X(T(P_{x^n})) ~ \cdot  \hspace{-1.5cm}\sum_{\substack{P_{\tau^{m_1}} :  \\ \alpha Q(P_{\tau^{m_1}}, P_{y^n}) + (1-\alpha) P_{\tau^{m_1}} = P_{t^m}}} \hspace{-1.5cm}P_X(T(P_{\tau^{m_1}}))\big] \\ \nonumber  \\ \nonumber
& \stackrel{(a)}{=} ~ \max_{P_X}   \sum_{P_{y^n} \in \PP^n} P_Y(T(P_{y^n})) \cdot \\ \nonumber
& \hspace{1.2cm} \big[ P_X(T(P_{x^n})) \cdot \max_{Q(\cdot, P_{y^n})}  \hspace{-1.5cm} \sum_{\substack{P_{\tau^{m_1}} :  \\ \alpha Q(P_{\tau^{m_1}}, P_{y^n}) + (1-\alpha) P_{\tau^{m_1}} = P_{t^m}}} \hspace{-1.5cm}P_X(T(P_{\tau^{m_1}})) \big] \\ \nonumber \\ \nonumber
& \stackrel{(b)}{\ge} ~ \max_{P_X}   \sum_{P_{y^n} \in \PP^n} P_Y(T(P_{y^n})) \cdot \big[ P_X(T(P_{x^n})) \cdot \\ \nonumber
& \hspace{1.2cm} \max_{Q(P_{\tau^{m_1}}, P_{y^n})} P_X \left( T \left( \frac{P_{t^m}-\alpha Q(P_{\tau^{m_1}}, P_{y^n})}{1-\alpha} \right) \right) \bigg] \\ \nonumber
& \stackrel{(c)}{=} ~ \max_{P_X} ~P_X(T(P_{x^n}))  \max_{Q \in \PP^{m_2}} P_X  \left( T \left( \frac{P_{t^m}-\alpha Q}{1-\alpha} \right)  \right),
\end{align}
where ${(a)}$ is obtained by replacing the maximisation over all possible strategies $Q(\cdot,\cdot)$, with a maximisation over $Q(\cdot,P_{y^n})$ for each specific $P_{y^n}$, and $(b)$ is obtained by considering only one term $P_{\tau^{m_1}}$ of the inner summation and optimising $Q(P_{\tau^{m_1}},P_{y^n})$ for that term. Finally, $(c)$ follows by observing that the optimum $Q(\cdot,P_{y^n})$ is the same for any $P_{y^n}$. As usual, the maximization over $Q$ in the last expression is restricted to the $Q$'s for which $P_{t^m}-\alpha Q~\ge~ 0$ for all the symbols in $\XX$ \footnote{It is easy to see that the same lower bound can be derived also for the non targeted case, as the optimum $Q$ in the second to last expression does not depend on $P_{y^n}$.}

By lower bounding the probability that a memoryless source $X$ generates a sequence belonging to a certain type class (see \cite{CandT}, chapter 12), we can continue the above chain of inequalities as follows
\begin{align}
\label{eq.Pf_lowerbound}
2^{-\lambda n} & ~>~ \frac{\max\limits_{P_X} \max\limits_{Q \in \PP^{m_2}} 2^{-n\big[\DD(P_{x^n}||P_X)+\frac{m_1}{n}\DD\big( \frac{P_{t^m}-\alpha Q}{1-\alpha}|| P_X\big)\big]}  }{(n+1)^{|\XX|}(m_1+1)^{|\XX|}} \\ \nonumber
& ~\ge~ \frac{2^{-n \min\limits_{Q \in \PP^{m_2}} \min\limits_{P_X}  \big[\DD(P_{x^n}||P_X)+\frac{m_1}{n}\DD\big( \frac{P_{t^m}-\alpha Q}{1-\alpha}|| P_X\big)\big]   }}{(n+1)^{|\XX|}(m_1+1)^{|\XX|}} \\ \nonumber
& ~\stackrel{(a)}{=}~ \frac{2^{-n \min\limits_{Q \in \PP^{m_2}}  h\big( P_{x^n} , \frac{P_{t^m} - \alpha Q}{1 - \alpha}  \big)  }}{(n+1)^{|\XX|}(m_1+1)^{|\XX|}},
\end{align}
where $(a)$ derives from the minimization properties of the generalised log-likelihood ratio function $h()$ (see Lemma 1, in \cite{BTtit}). By taking the $\log$ of both terms we have:
\begin{equation}
    \min_{Q \in \PP^{m_2}} h\left( P_{x^n} , \frac{P_{t^m} - \alpha Q}{1 - \alpha}  \right) ~>~ \lambda - \delta_n,
\label{eq.final_expr}
\end{equation}
thus completing the proof of the lemma.
\end{proof}

Lemma 1 shows that the strategy $\Lambda^{n \times m,*}$ is asymptotically admissible (point 1) and optimal (point 2),  regardless of the attack. From a game-theoretic perspective, this means that such a strategy is a dominant strategy for D and  implies that the game is dominance solvable \cite{ChenGames}. Similarly, the optimum strategy is a semi-universal one, since it depends on $P_X$ but it does not depend on $P_Y$.

It is clear from the proof of Lemma 1 that the same optimum strategy holds for the targeted and non-targeted versions of the game. The situation is rather different with regard to the optimum strategy for the attacker. Despite the existence of a dominant strategy for the defender, in fact, the identification of the optimum attacker's strategy for the \SIa~game is not easy due to the 2-step nature of the attack.
For this reason, in the following sections, we will focus on the targeted version of the game, which is easier to study. We will then use the results obtained for the \SIat~game to derive the best achievable performance for the case of non-targeted attack.

\subsection{The \SIat~game: optimum attacker's strategy and equilibrium point}
\label{sec.payEQ}

Given the dominant strategy of D, for any given $\tau^{m_1}$ and $y^n$, the optimum attacker's strategy for the \SIat~game boils down to the following double minimisation:
\begin{align}
\label{eq.optimum_SA_double}
    & (Q^*(P_{\tau^{m_1}}, P_{y^n}), ~S^{n,*}_{YZ}(P_{y_n}, P_{t^m})) ~= \\ \nonumber
    & \arg\hspace{-0.2cm}\min\limits_{\substack{Q \in \PP^{m_2} \\ S^n_{YZ} \in \A^n(L, P_{y^n})}} \left(  \min_{Q'}  ~h\left( P_{z^n} , \frac{(1-\alpha)P_{\tau^{m_1}} + \alpha Q - \alpha Q'}{1 - \alpha}  \right) \right),
\end{align}
where $P_{z^n}$ is obtained by applying the transformation map $S^n_{YZ}$ to $P_{y^n}$, and where $P_{t^m} = (1-\alpha)P_{\tau^{m_1}} + \alpha Q$. As usual, the minimisation over $Q'$ is limited to the $Q'$ such that all the entries of the resulting pmf are nonnegative.

%

As a remark, for $L = 0$ (corruption of the training sequence only), we get:
\begin{align}
 Q^*(& P_{\tau^{m_1}},P_{y^n})  ~= \nonumber \\
 &\arg  \min\limits_{Q \in \PP^{m_2}} \left[ \min_{Q'}  ~h\left( P_{y^n} , ~P_{\tau^{m_1}} + \frac{\alpha}{1 - \alpha}(Q - Q')  \right) \right],
\end{align}
while, for $\alpha = 0$ (classical setup, without corruption of the training sequence) we have:
\begin{align}
S^{n,*}_{YZ}(P_{y^n}, P_{t^m}) =
    \underset{ S^n_{YZ} \in \A^n(L, P_{y^n})}{\arg \min}  h(P_{z^n}, P_{t^m}),
\end{align}
falling back to the known case of source identification with uncorrupted training, already studied in \cite{BTtit}.
Having determined the optimum strategies of both players, it is immediate to state the following:
\begin{theorem}
The \SIat~game is a dominance solvable game, whose only rationalizable equilibrium corresponds to the profile 
$(\Lambda^{n \times m,*}, (Q^*(P\cdot, \cdot), ~S^{n,*}_{YZ}(\cdot, \cdot))$.
\label{teo.aseq}
\end{theorem}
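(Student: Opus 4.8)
\emph{Overall plan.} The statement is an immediate consequence of Lemma~\ref{lemma.optimum_SD} together with the optimum attacker's response already displayed in \eqref{eq.optimum_SA_double}: I would run the iterated elimination of dominated strategies, one round per player. \emph{First round (defender).} By point~2 of Lemma~\ref{lemma.optimum_SD}, every admissible region $\Lambda^{n\times m}\in\SS_D$ satisfies $\Lambda^{n\times m,*}\subseteq\Lambda^{n\times m}$, while point~1 guarantees that $\Lambda^{n\times m,*}$ is itself (asymptotically) admissible. Since $P_{fn}$ is the probability that the pair of types $(P_{z^n},P_{t^m})$ produced under $H_1$ falls inside the acceptance region and $u=-P_{fn}$, shrinking the acceptance region cannot increase $P_{fn}$: for every $s\in\SS_A$ one has $u(\Lambda^{n\times m,*},s)\ge u(\Lambda^{n\times m},s)$. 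Hence $\Lambda^{n\times m,*}$ dominates all other defender strategies and they are eliminated, leaving D pinned to $\Lambda^{n\times m,*}$.

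\emph{Second round (attacker).} With D fixed at $\Lambda^{n\times m,*}$, the attacker's payoff becomes
\[
u=-\sum_{P_{y^n}\in\PP^n}P_Y(T(P_{y^n}))\sum_{P_{\tau^{m_1}}\in\PP^{m_1}}P_X(T(P_{\tau^{m_1}}))\,\mathds{1}\big[(P_{z^n},P_{t^m})\in\Lambda^{n\times m,*}\big],
\]
with $P_{t^m}=(1-\alpha)P_{\tau^{m_1}}+\alpha Q(P_{\tau^{m_1}},P_{y^n})$ and $P_{z^n}$ the type obtained from $P_{y^n}$ through $S^n_{YZ}(P_{y^n},P_{t^m})$. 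This sum decouples over the realised pairs $(P_{y^n},P_{\tau^{m_1}})$ and each summand is nonnegative, so $u$ is minimised by choosing, separately for each such pair, $Q\in\PP^{m_2}$ and $S^n_{YZ}\in\A^n(L,P_{y^n})$ that drive $(P_{z^n},P_{t^m})$ inside $\Lambda^{n\times m,*}$ whenever this is feasible. Inserting the description \eqref{eq.optimum_SD}, membership is equivalent to $\min_{Q'}h\big(P_{z^n},\tfrac{P_{t^m}-\alpha Q'}{1-\alpha}\big)\le\lambda-\delta_n$, so the attacker's optimum is precisely the double minimisation \eqref{eq.optimum_SA_double}, realised by $(Q^*(\cdot,\cdot),S^{n,*}_{YZ}(\cdot,\cdot))$. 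Every attacker strategy that, with positive probability, fails to attain this pointwise minimum where the attack could succeed is strictly worse against $\Lambda^{n\times m,*}$ and is eliminated; identifying minimisers of \eqref{eq.optimum_SA_double} that give the same payoff, the single surviving profile is $(\Lambda^{n\times m,*},(Q^*(\cdot,\cdot),S^{n,*}_{YZ}(\cdot,\cdot)))$. Therefore the game is dominance solvable with this unique rationalizable equilibrium.

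\emph{Main obstacle.} The delicate point lies entirely in the first round. Domination of the defender's regions by $\Lambda^{n\times m,*}$ is in general only \emph{weak} (two acceptance regions differing only in type pairs the attacker can never realise yield D the same payoff), and $\Lambda^{n\times m,*}$ is only \emph{asymptotically} admissible — point~1 of Lemma~\ref{lemma.optimum_SD} gives $P_{fp}\le 2^{-n(\lambda-\nu_n)}$ with $\nu_n\to 0$ rather than $P_{fp}\le 2^{-\lambda n}$. Both issues are handled exactly as in \cite{BTtit}: one passes to the limit of the sequence of games, where the $\delta_n$ and $\nu_n$ corrections vanish and the residual non-uniqueness is payoff-irrelevant (equivalently, one runs the elimination on the games with exponent $\lambda-\nu_n$, restricting the defender's strategy space to regions genuinely meeting the constraint, and then lets $n\to\infty$). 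Beyond this, the argument is a routine bookkeeping of the two elimination rounds, which is why the authors can call it immediate.
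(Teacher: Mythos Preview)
Your proposal is correct and follows the same route as the paper: establish via Lemma~\ref{lemma.optimum_SD} that $\Lambda^{n\times m,*}$ is a dominant strategy for D, after which the attacker's best response is precisely \eqref{eq.optimum_SA_double}. The paper's own proof is a single sentence (``a direct consequence of the fact that $\Lambda^{n\times m,*}$ is a dominant strategy for D''); you have simply unpacked the two elimination rounds and, in your ``main obstacle'' paragraph, made explicit the asymptotic-admissibility and weak-domination caveats that the paper leaves implicit.
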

\begin{proof}
The theorem is a direct consequence of the fact that $\Lambda^{n \times m,*}$ is a dominant strategy for D.
\end{proof}
We remind that the concept of rationalizable equilibrium is  much stronger than the usual notion of Nash equilibrium, since the strategies corresponding to such an equilibrium are the only ones that two rational players may adopt \cite{Osb94,ChenGames}.

\subsection{The \SIat~game: payoff at the equilibrium}
\label{subsec.SIapayoff}

In this section, we derive the asymptotic value of the payoff at the equilibrium, to see who and under which conditions is going to {\em win} the game.

To start with, we identify the set of pairs $(P_{y^n},P_{\tau^{m_1}})$ for which, as a consequence of A's action, D accepts $H_0$:
\begin{align}
\label{gamma_n}
\hspace{-0.7cm}\Gamma^n(\lambda,\alpha, L) ~=~ \{& ( P_{y^n}, P_{\tau^{m_1}}) : ~ \exists~ (P_{z^n}, P_{t^m}) \in \Lambda^{n \times m,*}  \\ &  \text{s.t. }   P_{t^m} = (1 - \alpha) P_{\tau^{m_1}} + \alpha Q ~\text{and}~ P_{z^n} = S_Z^n \nonumber \\ & \text{for some } Q \in \PP^{m_2} \text{ and } S_{YZ}^n \in \A(L, P_{y^n})  \}.\nonumber
\end{align}
If we fix the type of the non-corrupted training sequence ($P_{\tau^{m_1}}$), we obtain:
\begin{align}
\label{set_Gamma_n_L_1}
\Gamma^n( P_{\tau^{m_1}},\lambda,\alpha,L)  = \{ & P_{y^n}: ~\exists~ P_{z^n} \in \Lambda^{n, *}((1 - \alpha) P_{\tau^{m_1}} + \alpha Q)\\
& \text{s.t. } P_{z^n} = S_Z^n \nonumber \\
& \text{for some }  Q \in \PP^{m_2} \text{ and }  S_{YZ}^n \in \A(L, P_{y^n})  \}, \nonumber
\end{align}
where $\Lambda^{n, *}(P)$ denotes the acceptance region for a fixed type of the training sequence in $\PP^m$.
It is interesting to notice that, since in the current setting A has two degrees of freedom, the attack has a twofold effect: the sequence $y^n$ is modified in order to bring it inside the acceptance region $\Lambda^{n,*}(P_{t^m})$ and the acceptance region itself is modified so to facilitate the former action.

\noindent To go on, we find it convenient to rewrite the set $\Gamma^n(P_{\tau^{m_1}},\lambda,\alpha,L)$ as follows:
\begin{align}
\label{set_Gamma_n_L_1_2}
& \Gamma^n( P_{\tau^{m_1}}, \lambda,\alpha,L) ~= \\
& \{ P_{y^n}: ~\exists S_{PV}^n ~\in~ \A(L, P_{y^n}) \text{ s.t. } S_V^n ~\in~ \Gamma^n_0(P_{\tau^{m_1}}, \lambda, \alpha)\},\nonumber
\end{align}
where
\begin{align}
\label{set_Gamma_n_L_1_2_0}
 & \Gamma^n_0(P_{\tau^{m_1}}, \lambda,\alpha) =  \\
 & \left\{P_{y^n}:  ~\exists Q ~\in~ \PP^{m_2} \text{ s.t. } P_{y^n} \in \Lambda^{n,*}((1 - \alpha) P_{\tau^{m_1}} + \alpha  Q) \right\},\nonumber
\end{align}
is the set containing all the test sequences  (or, equivalently, test types) for which it is possible to corrupt the training set in such a way that they fall within the acceptance region. As the subscript $0$ suggests, this set corresponds to the set in \eqref{set_Gamma_n_L_1} when A cannot modify the sequence drawn from $Y$ (i.e. $L = 0$) and then tries to hamper the decision by corrupting the training sequence only.

By considering the expression of the acceptance region, the set  $\Gamma^n_0(P_{\tau^{m_1}}, \lambda, \alpha)$ can be expressed in a more explicit form as follows:
\begin{align}
\label{definition_Gamma}
 \Gamma^n_0(& P_{\tau^{m_1}},\lambda,\alpha) ~=~ \big\{P_{y^n}: ~\exists Q,Q' ~\in~ \PP^{m_2} \text{ s.t. } \\
& \hspace{0.2cm} h\bigg(P_{y^n}, P_{\tau^{m_1}} +  \frac{\alpha}{(1 - \alpha)} (Q - Q')\bigg) ~\le~ \lambda - \delta_{n} \big\},\nonumber
\end{align}
where the second argument of $h()$ denotes a type in $\PP^{m_1}$ obtained from the original training sequence $\tau^{m_1}$ by first adding $m_2$ samples and later removing (in a possibly different way) the same number of samples. Note that in this formulation $Q$ accounts for the fake samples introduced by the attacker and $Q'$ for the worst case {\em guess} made by the defender of the position of the corrupted samples. We also observe that since we are treating the \SIat~game, in general $Q$ will depend on $P_{y^n}$.
As usual, we implicitly assume that $Q$ and $Q'$ are chosen in such a way that $P_{\tau^{m_1}} +  \frac{\alpha}{(1 - \alpha)} (Q - Q')$ is nonnegative and smaller than or equal to 1 for all the alphabet symbols.

We are now ready to derive the asymptotic payoff of the game by following a path similar to that used in \cite{BT13}, \cite{BTtit}.
First of all we generalise the definition of the sets $\Lambda^{n \times m,*}$, $\Gamma^n$ and $\Gamma^n_0$ so that they can be evaluated for a generic pmf in $\PP$ (that is, without requiring that the pmf's are induced by sequences of finite length). This step passes through the generalization of the $h$ function. Specifically, given any pair of pmf's $(P,P') \in \PP \times \PP$, we define:
\begin{align}
\label{eq.hc}
    & h_c(P,P') ~= ~\DD(P||U) ~+~ c \DD(P'||U); \\ \nonumber
    & U  ~=~ \frac{1}{1+c}P ~+~ \frac{c}{1+c}P',
\end{align}
where $c \in [0,1]$. Note that when $(P,P') \in \PP^n \times \PP^n$, $h_c(P,P') = h(P,P').$
The asymptotic version of $\Lambda^{n\times m,*}$ is:
\begin{equation}
\Lambda^* = \left\lbrace  (P, R) ~: ~\min_{Q}  ~h_c \left(P, ~\frac{R - \alpha Q}{1-\alpha} \right) ~\le~ \lambda \right\rbrace.
\label{eq.asymptotic_lambda}
\end{equation}
%
In a similar way, we can derive the asymptotic versions of  $\Gamma^n$ and $\Gamma^n_0$ in \eqref{set_Gamma_n_L_1_2} and \eqref{set_Gamma_n_L_1_2_0}-\eqref{definition_Gamma}. To do so,
we first observe that, the transportation map $S_{YZ}^n$ depends on the sources only through the pmfs. By denoting with $S_{PV}^n$ a transportation map from a pmf $P \in \PP^n$ to another pmf $V \in \PP^n$ and rewriting the set $\Gamma^n$ accordingly, we can easily derive the asymptotic version of the set as follows:
\begin{align}
\label{newGamma}
\Gamma(R, \lambda,\alpha,L) ~=~  \{P \in \PP: ~\exists S_{PV} \in \A(L, P) \text{ s.t. } V \in \Gamma_0(R, \lambda, \alpha)\},
\end{align}
with
\begin{align}
\label{newGamma0}
& \Gamma_0(R,\lambda,\alpha) ~=\\
&  \left\{P \in \PP: ~\exists Q \in \PP \text{ s.t. } P \in \Lambda^{*}((1 - \alpha) R + \alpha Q) \right\} ~= \nonumber\\
& \bigg\{P \in  \PP: ~\exists Q,Q' \in \PP \text{ s.t. } h_c\left(P, ~R +  \frac{\alpha}{(1 - \alpha)} (Q - Q')\right) ~\le~ \lambda \bigg\},\nonumber
\end{align}
where the definitions of $S_{PV}$ and $\mathcal{A}(L,P)$ derive from those of $S_{PV}^n$ and $\mathcal{A}^n(L,P)$ by relaxing the requirement that the terms $S_{PV}(i,j)$ and $P(i)$ are rational number with denominator $n$.
We now have all the necessary tools to prove the following theorem.

\begin{theorem}[Asymptotic payoff of the \SIat~ game]
\label{theo_as_payoff_si}
For the \SIat~ game, the false negative error exponent at the equilibrium is given by
\begin{equation}
    \varepsilon ~=~ \min_{R} [ (1-\alpha)c \DD(R || P_X) + \min_{P \in \Gamma(R, \lambda, \alpha, L)} \DD (P || P_Y)].
\label{eq.fnerr_exp_L}
\end{equation}
Accordingly,
 \begin{enumerate}\label{e.e.cases_tr}
    \item if $P_Y ~\in~ \Gamma(P_X, \lambda, \alpha, L)$ \quad then \quad $\varepsilon ~=~ 0$;
    \item if $P_Y ~\notin~ \Gamma(P_X, \lambda, \alpha, L)$ \quad then \quad $\varepsilon ~>~ 0$.
\end{enumerate}
\label{theo.SanovTRc}
\end{theorem}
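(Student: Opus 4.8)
The plan is to derive the false negative error exponent by a direct application of Sanov's theorem, following the same structure as in \cite{BT13,BTtit}, but accounting for the fact that under $H_1$ the two independent sources $Y$ (generating $y^n$) and $X$ (generating $\tau^{m_1}$) are at play simultaneously. Recall that $P_{fn}$ is the probability that the defender accepts $H_0$, i.e.\ that the attack succeeds in bringing the pair of types inside $\Lambda^{n \times m,*}$. Since the attacker plays optimally (Theorem \ref{teo.aseq}), the attack succeeds precisely when $(P_{y^n}, P_{\tau^{m_1}}) \in \Gamma^n(\lambda,\alpha,L)$, or equivalently when $P_{y^n} \in \Gamma^n(P_{\tau^{m_1}},\lambda,\alpha,L)$ once $P_{\tau^{m_1}}$ is fixed. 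Therefore
\[
P_{fn} ~=~ \sum_{P_{\tau^{m_1}} \in \PP^{m_1}} P_X(T(P_{\tau^{m_1}})) \sum_{\substack{P_{y^n} \in \PP^n \\ P_{y^n} \in \Gamma^n(P_{\tau^{m_1}},\lambda,\alpha,L)}} P_Y(T(P_{y^n})).
\]
The first step is to estimate the inner sum: by Sanov's theorem, $\sum_{P_{y^n} \in \Gamma^n(P_{\tau^{m_1}},\lambda,\alpha,L)} P_Y(T(P_{y^n}))$ is, at the first order in the exponent, equal to $2^{-n \min_{P \in \Gamma^n(P_{\tau^{m_1}},\lambda,\alpha,L)} \DD(P||P_Y)}$, and as $n \to \infty$ the discrete minimization converges to $\min_{P \in \Gamma(P_{\tau^{m_1}},\lambda,\alpha,L)} \DD(P||P_Y)$ (this convergence requires the continuity/closedness properties of $\Gamma(\cdot)$ in the Hausdorff sense, which follow from the continuity of $h_c$ and the definitions in \eqref{newGamma}--\eqref{newGamma0}). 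Plugging this in, and using $P_X(T(P_{\tau^{m_1}})) \doteq 2^{-m_1 \DD(P_{\tau^{m_1}}||P_X)} = 2^{-(1-\alpha)cn\,\DD(P_{\tau^{m_1}}||P_X)}$, the outer sum over $P_{\tau^{m_1}}$ is itself dominated (again at the first order in the exponent, by the usual Laplace/saddle-point argument for a sum of exponentials over a polynomially growing index set) by its largest term, giving
\[
\varepsilon ~=~ \lim_{n \to \infty} -\frac{1}{n}\log P_{fn} ~=~ \min_{R \in \PP}\Big[(1-\alpha)c\,\DD(R||P_X) + \min_{P \in \Gamma(R,\lambda,\alpha,L)} \DD(P||P_Y)\Big],
\]
which is \eqref{eq.fnerr_exp_L}. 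To make the passage from the $n$-indexed sets to their asymptotic counterparts rigorous, I would invoke the generalised Sanov's theorem (Theorem \ref{theo.extended_Sanov}, proved in Appendix \ref{sec.appendix.Sanov}), which is exactly designed to handle sums of type-class probabilities over families of sets indexed by $n$; I would also need that $\delta_n \to 0$ so that $\Gamma^n \overset{H}{\to} \Gamma$.

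For the two cases, the argument is short given \eqref{eq.fnerr_exp_L}. If $P_Y \in \Gamma(P_X,\lambda,\alpha,L)$, then choosing $R = P_X$ in the outer minimization gives $(1-\alpha)c\,\DD(P_X||P_X) = 0$ for the first term and $\min_{P \in \Gamma(P_X,\lambda,\alpha,L)} \DD(P||P_Y) = 0$ for the second (the minimum is attained at $P = P_Y$, which lies in the set, and KL divergence vanishes only there), so $\varepsilon \le 0$; since $\varepsilon \ge 0$ always, $\varepsilon = 0$. Conversely, if $P_Y \notin \Gamma(P_X,\lambda,\alpha,L)$, I must show the minimum in \eqref{eq.fnerr_exp_L} is strictly positive. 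For any $R$, either $R \neq P_X$, in which case $(1-\alpha)c\,\DD(R||P_X) > 0$ (as $\alpha < 1$, $c > 0$), or $R = P_X$, in which case $P_Y \notin \Gamma(P_X,\lambda,\alpha,L)$ forces $\min_{P \in \Gamma(P_X,\lambda,\alpha,L)} \DD(P||P_Y) > 0$ (since $P_Y$ is not in the closed set $\Gamma(P_X,\lambda,\alpha,L)$, and $\DD(\cdot||P_Y)$ is lower semicontinuous with compact sublevel sets, so its infimum over a closed set not containing $P_Y$ is strictly positive). Either way each candidate value is strictly positive; a compactness argument over $R \in \PP$ (the objective is lower semicontinuous and $\PP$ is compact) then yields that the minimum itself is strictly positive.

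The main obstacle I anticipate is the rigorous justification of the interchange between the limit in $n$ and the two nested discrete minimizations, i.e.\ showing that $\min_{P \in \Gamma^n(P_{\tau^{m_1}},\lambda,\alpha,L)} \DD(P||P_Y) \to \min_{P \in \Gamma(R,\lambda,\alpha,L)} \DD(P||P_Y)$ uniformly enough (in $P_{\tau^{m_1}} \to R$) to combine it cleanly with the outer sum. This is precisely where $\Gamma^n \overset{H}{\to} \Gamma$ and the continuity of $h_c$ and of the transportation constraint in \eqref{eq.admissiblemap1} enter, and where the generalised Sanov's theorem of Appendix \ref{sec.appendix.Sanov} does the heavy lifting; modulo that machinery, the remaining steps — bounding a polynomial number of exponential terms by their maximum, and the two-case analysis above — are routine.
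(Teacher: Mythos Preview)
Your proposal is correct and follows essentially the same route as the paper: express $P_{fn}$ as a double sum over the types of $\tau^{m_1}$ and $y^n$, bound the inner sum via Sanov-type estimates on $\Gamma^n(P_{\tau^{m_1}},\lambda,\alpha,L)$, handle the outer sum by the usual ``largest term dominates'' argument, and make the passage $\Gamma^n \to \Gamma$ rigorous via Hausdorff convergence and the generalised Sanov theorem of Appendix~\ref{sec.appendix.Sanov}. The paper carries out exactly this program, separating the upper and lower bounds explicitly and proving $\Gamma^n(R_{m_1},\lambda,\alpha,L) \overset{H}{\rightarrow} \Gamma(R^*,\lambda,\alpha,L)$ through the regularity properties of the admissible transportation maps (Appendix~\ref{sec.appendix.cont_MAP}); your anticipated ``main obstacle'' is precisely the point where the paper invests its technical effort, and you have identified the right tools for it. Your explicit compactness/lower-semicontinuity argument for case~2 is a useful addition, as the paper treats both cases as immediate consequences of \eqref{eq.fnerr_exp_L} without spelling them out.
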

\begin{proof}
The theorem could be proven going along the same lines of the proof of Theorem 4 in \cite{BTtit}. We instead provide a proof based on the extension of Sanov's theorem provided in the Appendix (see Theorem \ref{theo.extended_Sanov}). In fact, Theorem \ref{theo.SanovTRc}, as well as Theorem 4 in \cite{BTtit}, can be seen as an application of such a generalized version of Sanov's theorem.

Let us consider
\begin{align}
P_{fn} ~=  \hspace{-0.2cm} & \sum_{(P_{y^n}, P_{\tau^{m_1}}) \in \Gamma^n(\lambda, \alpha, L)} \hspace{-0.7cm} P_X(T(P_{\tau^{m_1}})) P_Y(T(P_{y^n})) \\
= & \sum_{R \in \PP^{m_1}} P_X(T(R)) \hspace{-0.3cm} \sum_{P  \in \Gamma^n(R,\lambda, \alpha,L)}  \hspace{-0.7cm} P_Y(T(P)) \nonumber \\
= & \sum_{R \in \PP^{m_1}} P_X(T(R)) P_Y(\Gamma^n(R,\lambda, \alpha,L)). \nonumber
\end{align}
We start by deriving an upper-bound of the false negative error probability. We can write:
\begin{eqnarray}
   P_{fn} & \leq  & \sum_{R \in \PP^{m_1}} P_X(T(R)) \sum_{P \in \Gamma^n
   (R, \lambda, \alpha, L)} 2^{- n \DD(P || P_Y)} \nonumber \\
   & \leq &  \sum_{R \in \PP^{m_1}} P_X (T(R)) (n +  1)^{|\mathcal X|} 2^{- n
   \min\limits_{P \in \Gamma^n (R, \lambda,\alpha, L)} \hspace{-0.2cm} \DD(P || P_Y)}
   \nonumber \\
   & \le & \sum_{R \in \PP^{m_1}} P_X (T(R)) (n + 1)^{|\mathcal X|} 2^{-
   n \min\limits_{P \in \Gamma(R, \lambda, \alpha, L)} \DD(P || P_Y)}\nonumber\\
   & \leq & (n + 1)^{|\mathcal X|} (m_1 + 1)^{|\mathcal X|} \nonumber \\
   & & \cdot 2^{- n \min\limits_{
   R \in \PP^{m_1} } [\frac{m_1}{n} \DD(R || P_X) + \min\limits_{P \in \Gamma(R, \lambda, \alpha, L)} \DD( P ||  P_Y)]}\nonumber\\
   & \leq & (n + 1)^{|\mathcal X|} (m_1 + 1)^{|\mathcal X|} \nonumber \\
   & & \cdot 2^{- n \min\limits_{
   R \in \PP} [(1-\alpha)c\DD(R || P_X) + \min\limits_{P \in \Gamma(R, \lambda, \alpha, L)} \DD( P ||
   P_Y)]},
   \label{eq.low_bound_P_fn1}
\end{eqnarray}
where the use of the minimum instead of the infimum is justified by the fact that $\Gamma^{n}(R, \lambda, \alpha, L)$ and $\Gamma(R, \lambda, \alpha, L)$ are compact sets. By taking the log and dividing by $n$ we find:
\begin{align}
& - \frac{\log P_{fn}}{n}  ~\ge \nonumber\\
& \hspace{0.5cm} \min\limits_{R \in \PP} \big[ (1-\alpha)c \DD( R || P_X) +  \min\limits_{P
\in \Gamma(R, \lambda, \alpha, L)} \DD( P || P_Y)\big] - \beta_n,
 \label{eq.low_bound_P_fn2}
\end{align}
where $\beta_n = |\XX| \frac{\log(n+1)((1 - \alpha)nc +1)}{n}$ tends to 0 when $n$ tends to infinity.

We now turn to the analysis of a lower bound for $P_{fn}$. Let  $R^*$ be the pmf achieving the minimum in the outer minimisation of eq. (\ref{eq.fnerr_exp_L}). Due to the density of rational numbers within real numbers, we can find a sequence of pmfs' $R_{m_1} \in \PP^{m_1}$ ($m_1 = (1-\alpha) nc$) that tends to $R^*$ when $n$ (and hence $m_1$) tends to infinity. We can write:
\begin{align}
\label{eq.up_bound_P_fn}
P_{fn} & =  \sum_{R \in \PP^{m_1}} P_{X}(T(R))
     P_{Y} (\Gamma^n(R, \lambda, \alpha,  L))\nonumber\\
     & \ge  ~P_{X}(T(R_{m_1}))
     P_{Y} (\Gamma^n(R_{m_1}, \lambda, \alpha, L)),\nonumber\\
     & \ge  ~\frac{2^{- m_1 \DD(R_{m_1} || P_X)}}{(m_1+1)^{|\XX|}}  P_{Y} (\Gamma^n(R_{m_1}, \lambda, \alpha, L)),
\end{align}
%
where in the first inequality we have replaced the sum with the single element of the subsequence $R_{m_1}$ defined previously, and where the second inequality derives from the well known lower bound on the probability of a type class \cite{CandT}.
From \eqref{eq.up_bound_P_fn}, by taking the log and dividing by $n$, we obtain:
\begin{align}
    & -\frac{\log P_{fn}}{n} ~\le \nonumber\\
    & \hspace{0.3cm}  (1-\alpha)c \DD(R_{m_1} || P_X) - \frac{1}{n}\log P_Y(\Gamma^n(R_{m_1}, \lambda, \alpha, L)) + \beta_n',
\label{eq.up_bound_P_fn_2}
\end{align}
where $\beta_n' = |\XX| \frac{\log(m_1+1)}{n}$ tends to 0 when $n$ tends to infinity.
In order to compute the probability $P_Y(\Gamma^n(R_{m_1}, \lambda, \alpha, L))$, we resort to Corollary \ref{cor.extended_Sanov} of the the generalised version of Sanov's Theorem given in Appendix \ref{sec.appendix.Sanov}.\\
To apply the corollary, we must show that $\Gamma^{n}(R_{m_1}, \lambda, \alpha, L) \overset{H}{\rightarrow} \Gamma(R^*, \lambda, \alpha, L)$.

First of all, we observe that by exploiting the continuity of the $h_c$ function and  the density of rational numbers into the real ones, it is easy to prove that $\Gamma_0^{n}(R_{m_1}, \lambda, \alpha) \overset{H}{\rightarrow} \Gamma_0(R^*, \lambda, \alpha)$.
%
Then the Hausdorff convergence of $\Gamma^{n}(R_{m_1}, \lambda, \alpha, L)$ to $\Gamma(R^*, \lambda, \alpha, L)$ follows from the regularity properties of the set of transportation maps stated in Appendix \ref{sec.appendix.cont_MAP}. To see how, we observe that any transformation $S_{PV} \in \AA(L,P)$ mapping $P$ into $V$ can be applied in inverse order through the transformation $S_{VP}(i,j) = S_{PV}(j,i)$. It is also immediate to see that $S_{VP}$  introduces the same distortion introduced by  $S_{PV}$, that is $S_{VP} \in \AA(L,V)$. Let now $P$ be a point in $\Gamma(R^*, \lambda, \alpha, L)$. By definition we can find a map $S_{PV} \in \AA(L,P)$ such that $V \in \Gamma_0(R^*, \lambda, \alpha)$. Since $\Gamma_0^{n}(R_{m_1}, \lambda, \alpha) \overset{H}{\rightarrow} \Gamma_0(R^*, \lambda, \alpha)$, for large enough $n$, we can find a point $V' \in \Gamma_0^{n}(R_{m_1}, \lambda, \alpha)$ which is arbitrarily close to $V$. Thanks to the second part of Theorem \ref{theo_behavior_S} in Appendix \ref{sec.appendix.cont_MAP}, we know that a map $S_{V'P'} \in \AA^n(L,V')$ exists such that $P'$ is arbitrarily close to $P$ and $ P' \in \PP^n$. By applying the inverse map $S_{P'V'}$ to $P'$, we see that $P' \in \Gamma^n(R_{m_1},\lambda, \alpha,L)$, thus permitting us to conclude that, when $n$ increases, $\delta_{\Gamma(R^*, \lambda, \alpha,L)}(\Gamma^n(R_{m_1}, \lambda, \alpha, L)) \rightarrow 0$. In a similar way, we can prove that $\delta_{\Gamma^n(R_{m_1}, \lambda, \alpha, L)}(\Gamma(R^*, \lambda, \alpha,L)) \rightarrow 0$, hence permitting us to conclude that $\Gamma^{n}(R_{m_1}, \lambda, \alpha, L) \overset{H}{\rightarrow} \Gamma(R^*, \lambda, \alpha, L)$.

\label{cor.extended_Sanov}

We can now apply the generalised version of Sanov Theorem as expressed in Corollary \ref{cor.extended_Sanov} of Appendix \ref{sec.appendix.Sanov} to conclude that:
\begin{equation}
   - \lim_{n \rightarrow \infty} \frac{1}{n}~\log P_Y(\Gamma^n(R_{m_1}, \lambda, \alpha, L)) ~= \hspace{-0.3cm} \underset{P \in \Gamma(R^*, \lambda, \alpha, L)}{\min} \hspace{-0.3cm}\DD(P||P_Y).
\label{eq.up_bound_P_fn_Sanov_ext}
\end{equation}

Going back to equation \eqref{eq.up_bound_P_fn_2}, and by exploiting the continuity of the divergence function, we can say that for large $n$ we have:
\begin{align}
\label{eq.up_bound_P_fn_3}
    -\frac{\log P_{fn}}{n} \le (1 - \alpha)c \DD(R^* || P_X) ~+ \hspace{-0.3cm}\min\limits_{P \in \Gamma(R^*, \lambda, \alpha, L)} \hspace{-0.3cm}\DD(P || P_Y) + \nu_n,
\end{align}
where the sequence $\nu_n$ tends to zero when $n$ tends to infinity. By coupling equations  (\ref{eq.low_bound_P_fn2}) and (\ref{eq.up_bound_P_fn_3}) and by letting $n \rarrow \infty$, we eventually obtain:
\begin{align}
    -\lim_{n \rarrow \infty} & \frac{\log P_{fn}}{n} = \nonumber\\
&     \min_{R} [ (1-\alpha)c \cdot \DD(R || P_X) + \min_{P \in \Gamma(R, \lambda, \alpha,  L)} \DD (P || P_Y)],
\label{eq.fn_err_exp}
\end{align}
thus proving the theorem.

\end{proof}

As an immediate consequence of Theorem \ref{theo.SanovTRc}, the set $\Gamma(P_X, \lambda, \alpha, L)$ defines the {\em indistinguishability region} of the test, that is the set of all the sources for which A induces D to decide in favour of $H_0$ even if $H_1$ holds.

\subsection{Analysis of the \SIa~game}
\label{subsec.nontarget}

We now focus on the \SIa~game. For a given choice of $Q(P_{\tau^{m_1}}) \in \SS_{A,T}$ (and hence
$t^m$),  given a sequence $y^n$, the optimum choice of the second part of the attack derives quite easily from the definition of $\Lambda^{n\times m,*}$, namely
\begin{align}
\label{eq.optimum_SA}
    & S^{n,*}_{YZ}(P_{y^n},P_{t^m}) = \\ \nonumber
    & \arg \min\limits_{S^n_{YZ} \in \A^n(L, P_{y^n})} \left(  \min_{Q \in \PP^{m_2}}  h\left( P_{z^n} , \frac{P_{t^m} - \alpha Q}{1 - \alpha}  \right)  \right).
\end{align}
Now the point is to determine the strategy $Q(P_{\tau^{m_1}})$ which maximises the probability that the attack in \eqref{eq.optimum_SA} succeeds. To this purpose, of course, the attacker must exploit the knowledge of $P_Y$. Since solving such a maximisation problem is not an easy task, we will proceed in a different way. We first introduce a simple (and possibly suboptimum) strategy, then we argue that such a strategy is asymptotically optimum, in that the set of the sources that cannot be distinguished from $X$ with this choice is the same set that we have obtained for the \SIat~ setup, which is known to be more favourable to the attacker.
More specifically, we consider the following two-step attacking strategy. In the first step of the attack, A does not know $y^n$, hence he trusts the law of large numbers and optimises $Q(P_{\tau^{m_1}})$ by using $P_Y$ as a proxy for $P_{y^n}$. To do so, he applies equation \eqref{eq.optimum_SA_double}, by replacing $P_{y^n}$ with $P_Y$. Specifically, by indicating with $Q^{\dagger}$, the resulting strategy for the first step of the attack, we have
\begin{align}
\label{eq.asympt_optimum_SA_1step}
   & Q^{\dagger}(P_{\tau^{m_1}}) ~=~
     \arg \min_{Q\in \PP^{m_2}} \\ 
     & \hspace{1.5cm} \min\limits_{\substack{Q' \in \PP^{m_2} \\ S_{YZ} \in \A(L, P_{Y})}} h_c\left( P_{Z} , P_{\tau^{m_1}} + \frac{\alpha}{1 - \alpha} (Q - Q')  \right).
\end{align}
As a by-product of the above minimisation, the attacker also finds the map $S^{n, \dagger}_{YZ}$ representing the optimum attack when $P_{y^n} = P_Y$. Let us indicate the result of the application of such a map to $P_Y$ by $P^{\dagger}_Z$.

In the second part of the attack, A tries to move $P_{y^n}$ as close as possible to $P^{\dagger}_Z$, that is:
\begin{equation}
\label{eq.asympt_optimum_SA_2step}
S^{n,\dagger}_{YZ}(P_{y^n}, P_{t^m}^{\dagger}) ~=~  \arg\min_{S^n_{YZ} \in \A^n(L, P_{y^n})} d(S_Z^n, P_Z^{\dagger}),
\end{equation}
where $S^{n,\dagger}_{YZ}(P_{y^n}, P_{t^m}^{\dagger})$ depends upon the corrupted training sequence obtained after the application of the first part of the attack, namely $P_{t^m}^{\dagger} = (1-\alpha)P_{\tau^{m_1}} + \alpha Q^{\dagger}(P_{\tau^{m_1}})$, through $P_Z^{\dagger}$.

The asymptotic optimality of the strategy ($Q^{\dagger}(P_{\tau^{m_1}})$, $S^{n,\dagger}_{YZ}(P_{y^n}, P_{t^m}^{\dagger})$) derives from the following theorem

\begin{theorem}[Indistinguishability region of the \SIa~ game]
\label{theo_indist_reg_SIa}
The indistinguishability region of \SIa~ game is equal to that of the \SIat~ game (see eq. \eqref{newGamma}) and is asymptotically achieved by the attacking strategy ($Q^{\dagger}(P_{\tau^{m_1}})$, $S^{n,\dagger}_{YZ}(P_{y^n}, P_{t^m}^{\dagger})$).
\end{theorem}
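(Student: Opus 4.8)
The plan is to prove the theorem by a two-sided argument, establishing that the indistinguishability region of the \SIa~game is squeezed between the region of the \SIat~game and the region achieved by the explicit suboptimal strategy $(Q^{\dagger}, S^{n,\dagger}_{YZ})$, and that these two coincide. Since the \SIat~game gives the attacker strictly more power (the corruption of the training sequence may depend on the actual sequence $y^n$, not just on $P_Y$), any source that can be made indistinguishable in the \SIa~game is a fortiori indistinguishable in the \SIat~game; this gives one inclusion, namely that the \SIa~indistinguishability region is contained in $\Gamma(P_X, \lambda, \alpha, L)$. The substantive direction is the reverse one: I must show that the concrete attacking strategy in \eqref{eq.asympt_optimum_SA_1step}--\eqref{eq.asympt_optimum_SA_2step} already achieves, asymptotically, the whole region $\Gamma(P_X, \lambda, \alpha, L)$, so that the \SIa~region is not smaller than the \SIat~region.

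First I would fix a source $Y$ with $P_Y \in \Gamma(P_X, \lambda, \alpha, L)$ and trace through what the proposed two-step strategy does. In the first step the attacker, ignorant of $y^n$, solves the double minimisation of \eqref{eq.optimum_SA_double} with $P_Y$ substituted for $P_{y^n}$; by the assumption $P_Y \in \Gamma(P_X, \lambda, \alpha, L)$ and the definition \eqref{newGamma}, there exists a map $S_{P_Y V} \in \AA(L, P_Y)$ with $V \in \Gamma_0(P_X, \lambda, \alpha)$, i.e.\ the asymptotic target $P_Z^{\dagger}$ can be taken inside the (suitably corrupted) acceptance region $\Lambda^*$. The key technical point is then that the actual training sequence type $P_{\tau^{m_1}}$ converges to $P_X$ almost surely (law of large numbers / Sanov concentration), so that for large $n$ the corrupted training type $P_{t^m}^{\dagger} = (1-\alpha) P_{\tau^{m_1}} + \alpha Q^{\dagger}(P_{\tau^{m_1}})$ is close to its limiting value and the optimal guess $Q'$ available to the defender cannot push $P_Z^{\dagger}$ outside $\Lambda^*$. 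In the second step, the attacker moves $P_{y^n}$ toward $P_Z^{\dagger}$ subject to the distortion budget; here I would invoke the regularity of transportation maps (Theorem \ref{theo_behavior_S} and the continuity results in Appendix~\ref{sec.appendix.cont_MAP}) to argue that whenever $P_{y^n}$ is close to $P_Y$ (which again holds with overwhelming probability under $P_Y$), an admissible map exists taking $P_{y^n}$ to a type $P_{z^n}$ arbitrarily close to $P_Z^{\dagger}$, hence still inside $\Lambda^{n\times m,*}$ for $n$ large (using that $\Lambda^*$ is the Hausdorff limit of $\Lambda^{n\times m,*}$ and that $\delta_n \to 0$). This shows that the fraction of $(y^n, \tau^{m_1})$ pairs for which $H_0$ is (wrongly) accepted has probability tending to one, so $P_{fn} \to 1$ and the false-negative error exponent is zero, exactly matching case~1 of Theorem~\ref{theo_as_payoff_si} for the \SIat~game.

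To make this precise I would argue at the level of the sets $\Gamma^n$: show that for the proposed strategy the set of test types $P_{y^n}$ that get accepted, given a training type $P_{\tau^{m_1}}$ close to $P_X$, Hausdorff-converges to $\Gamma(P_X, \lambda, \alpha, L)$, and then conclude via the generalised Sanov theorem (Corollary~\ref{cor.extended_Sanov}) exactly as in the proof of Theorem~\ref{theo_as_payoff_si} that the false-negative exponent equals the expression \eqref{eq.fnerr_exp_L}, and in particular vanishes precisely when $P_Y \in \Gamma(P_X, \lambda, \alpha, L)$. Combining this with the easy inclusion from the first paragraph (the \SIa~region cannot be larger than the \SIat~region) forces equality. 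The main obstacle I anticipate is the second step of the attack: because $Q^{\dagger}$ was chosen using $P_Y$ as a proxy, the target $P_Z^{\dagger}$ is fixed before $y^n$ is seen, and one must control the gap $d(P_{y^n}, P_Y)$ jointly with the existence of an admissible map reaching $P_Z^{\dagger}$ from $P_{y^n}$ within distortion $L$ — i.e.\ one needs the continuity/openness of the admissible-map construction in the source type, uniformly enough to survive the $n\to\infty$ limit. This is where the transportation-map regularity results in the Appendix do the real work, and where I would spend most of the care in the write-up.
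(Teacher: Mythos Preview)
Your proposal is correct and follows essentially the same route the paper sketches: the sandwich argument (the \SIa\ region cannot exceed the \SIat\ region since the targeted attacker is strictly more powerful, while the explicit non-targeted strategy provides a lower bound), the law of large numbers to replace $P_{y^n}$ by $P_Y$, the continuity of $h_c$, and the regularity of the admissible map sets from Appendix~\ref{sec.appendix.cont_MAP}, assembled along the lines of Theorem~\ref{theo.SanovTRc} via the generalised Sanov theorem. Your write-up is in fact more detailed than the paper's own proof, which is only a sketch and omits the very steps you flesh out.
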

\begin{proof}[Proof (sketch)]
The theorem derives from the observation that due to the law of large numbers, when $n$ grows, $P_{y^n}$ tends to $P_Y$; hence, for large enough $n$, optimising the first part of the attack by replacing $P_{y^n}$ with $P_Y$ does not introduce a significant performance loss. The rigorous proof goes along similar lines to those used to prove Theorem \ref{theo.SanovTRc} and ultimately relies on the continuity of the $h_c$ function and the regularity properties of the set $\A^n(L, P_{y^n})$. The details of the proof are omitted for sake of brevity.
\end{proof}

Given that asymptotic equivalence of the \SIa~ and the \SIat~ games, in the rest of the paper, we will generally refer to the \SIa~game without specifying if we are considering the targeted or non-targeted case.

\section{Source distinguishability for the \SIa~game}
\label{sec.SM_SIa}

In this section, we study the behaviour of the \SIa~game when we vary the decay rate of the false positive error probability $\lambda$. By letting $\lambda$ tend to zero, in fact, we can derive the best achievable performance of the defender when we require only that $P_{fp}$ tends to zero exponentially fast regardless of the decay rate. Then, we use such a result to derive the conditions under which the reliable distinction between two sources is possible in terms of number of corrupted training samples $\alpha$ and maximum allowed distortion $L$.

\subsection{Ultimate achievable performance of the game}

As we said, the goal of this section is to study the limit of the indistinguishability region when $\lambda \rarrow 0$. This limit, in fact, determines all the pmf's $P_Y$ that can not be distinguished from $P_X$ ensuring that the two types of error probabilities tend to zero exponentially fast (with vanishingly small, yet positive, error exponents).

We start by exploiting optimal transport theory to rewrite the indistinguishability region as:
\begin{equation}
\label{set_Gamma_EMD}
\Gamma(P_X,\lambda,\alpha,L)  ~=~  \{ P: ~\exists V \in \Gamma_0(P_X, \lambda, \alpha)  \text{ s.t. } \text{\em EMD}(P,V) \le L\},
\end{equation}
where {\em EMD} (Earth Mover Distance) is the term used in computer vision to denote the minimum transportation cost \cite{rachev1998mass,RTG00}, that is
\begin{equation}
\label{eq.transport_problem}
     \text{\em EMD}(P,V) ~= \min_{S_{PV} : S_{P} = P, S_{V} = V} ~ \sum_{i,j} S_{PV}(i,j) d(i,j).
\end{equation}

With this definition, the main result of this section is stated by the following theorem.
\begin{theorem}
\label{theorem_EMD_L}
Given two sources $X$ and $Y$, a maximum allowed average per-letter distortion $L$ and a fraction $\alpha$ of training samples provided by the attacker, the maximum achievable false negative error exponent $\varepsilon$ for the \SIa~ game is:
\begin{align}
\lim_{\lambda \rarrow 0} ~\lim_{n \rarrow \infty} &- \frac{1}{n} \log ~P_{fn} ~=~  \nonumber \\
&\min_{R} [(1-\alpha)c \DD(R || P_X) ~+ \hspace{-0.3cm} \min_{P \in \Gamma(R, \alpha, L)} \DD (P || P_Y)],
\label{best_e_e}
\end{align}
where $\Gamma(R, \alpha, L) = \Gamma(R, \lambda=0, \alpha, L)$. Accordingly, the ultimate indistinguishability region is given by:
\begin{align}
\label{Gamma_X}
\Gamma(P_X,\alpha, L)  = & \left\{P : ~\exists V \in \Gamma_0(P_X, \alpha) \text{ s.t. } \text{\em EMD}(P,V) \le L\right\},
\end{align}
where $\Gamma_0(P_X, \alpha) = \Gamma_0(P_X, \lambda = 0, \alpha)$. Moreover, $\Gamma(P_X,\alpha, L)$ can be rewritten as:
\begin{align}
\label{Gamma_X_1}
\Gamma(P_X,\alpha, L)  = & \bigg\{ P : \min_{V: \text{\em EMD}(P,V) \le L} \sum_{i} \left[V(i) \text{--} P_X(i) \right]^+ \le \frac{\alpha}{(1 - \alpha)}\bigg\}\nonumber\\
 = & \bigg\{ P : \min_{V: \text{\em EMD}(P,V) \le L} d_{L_1}(V,P_X) ~\le~ \frac{2\alpha}{(1 - \alpha)}\bigg\}.\nonumber\\
\end{align}
with $[a]^+ =  \max\{a,0\}$.
\end{theorem}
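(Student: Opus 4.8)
The plan is to establish Theorem~\ref{theorem_EMD_L} in three stages: (i) justify taking the limit $\lambda \to 0$ inside the error-exponent formula of Theorem~\ref{theo.SanovTRc}; (ii) compute the limiting set $\Gamma_0(P_X,\alpha) = \Gamma_0(P_X,\lambda=0,\alpha)$ explicitly; and (iii) chain this with the EMD-constraint to obtain the two equivalent descriptions of $\Gamma(P_X,\alpha,L)$ in \eqref{Gamma_X_1}.

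For stage (i), I would argue that the family of sets $\{\Gamma(R,\lambda,\alpha,L)\}_\lambda$ is monotone in $\lambda$ (larger $\lambda$ enlarges the acceptance region $\Lambda^*$ and hence $\Gamma_0$ and $\Gamma$), so the limit is simply the union (equivalently, the closure of the union) over $\lambda>0$, which coincides with the $\lambda=0$ set by continuity of $h_c$. One must check that the outer minimisation $\min_R[(1-\alpha)c\,\DD(R\|P_X) + \min_{P\in\Gamma(R,\lambda,\alpha,L)}\DD(P\|P_Y)]$ behaves well under this limit: since $\Gamma(R,\lambda,\alpha,L)$ shrinks continuously (in Hausdorff distance) to $\Gamma(R,\alpha,L)$ as $\lambda\downarrow 0$ and $\DD$ is continuous, the inner minimum converges, and a standard argument (the objective is continuous in $(R,\lambda)$ on a compact domain) lets me exchange $\lim_{\lambda\to 0}$ with $\min_R$. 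The anticipated subtlety here is that $\Gamma(R,\alpha,L)$ for $\lambda=0$ might be a boundary/closed version of $\bigcap_{\lambda>0}\Gamma(R,\lambda,\alpha,L)$; handling whether the constraint $h_c(\cdot)\le 0$ is attained (it is, by compactness) resolves this.

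For stage (ii), I would unwind the definition \eqref{newGamma0}: $P \in \Gamma_0(P_X,\alpha)$ iff there exist $Q,Q'\in\PP$ with $h_c\big(P, P_X + \tfrac{\alpha}{1-\alpha}(Q-Q')\big) \le 0$. Since $h_c(P,P') = \DD(P\|U)+c\,\DD(P'\|U) \ge 0$ with equality iff $P = P' = U$, i.e. iff $P = P'$, the condition becomes: there exist $Q,Q'\in\PP$ such that $P = P_X + \tfrac{\alpha}{1-\alpha}(Q-Q')$. So $\Gamma_0(P_X,\alpha)$ is exactly the set of pmf's reachable from $P_X$ by adding $\tfrac{\alpha}{1-\alpha}Q$ (mass injected, total $\tfrac{\alpha}{1-\alpha}$) and subtracting $\tfrac{\alpha}{1-\alpha}Q'$ (mass removed), subject to staying in the simplex. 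Characterising when such $Q,Q'$ exist for a target $V$: the positive part $\sum_i[V(i)-P_X(i)]^+$ must be at most $\tfrac{\alpha}{1-\alpha}$ (that is the mass that must be added), and automatically the negative part equals the positive part since both $V$ and $P_X$ are pmf's, so the single scalar constraint $\sum_i[V(i)-P_X(i)]^+ \le \tfrac{\alpha}{1-\alpha}$ is both necessary and sufficient; feasibility of $Q' = $ (normalised negative part) and $Q = $ (negative part padded to mass $\tfrac{\alpha}{1-\alpha}$, or analogously) in $\PP$ must be verified, which is routine. This gives $\Gamma_0(P_X,\alpha) = \{V : \sum_i[V(i)-P_X(i)]^+ \le \tfrac{\alpha}{1-\alpha}\}$, and since $d_{L_1}(V,P_X) = 2\sum_i[V(i)-P_X(i)]^+$ for pmf's, the $L_1$ form follows immediately.

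For stage (iii), I substitute this into \eqref{Gamma_X}: $P\in\Gamma(P_X,\alpha,L)$ iff there exists $V\in\Gamma_0(P_X,\alpha)$ with $\text{\em EMD}(P,V)\le L$, which is precisely $\min_{V:\text{\em EMD}(P,V)\le L}\sum_i[V(i)-P_X(i)]^+ \le \tfrac{\alpha}{1-\alpha}$ (the min over $V$ in the ball of radius $L$ around $P$ of the $\Gamma_0$-defining functional), yielding \eqref{Gamma_X_1}; one should note the min is attained because $\{V:\text{\em EMD}(P,V)\le L\}$ is compact (EMD is continuous) and the functional is continuous. Finally, \eqref{best_e_e} is just \eqref{eq.fnerr_exp_L} of Theorem~\ref{theo.SanovTRc} with the $\lambda\to 0$ limit from stage (i) plugged in. I expect stage (i) — rigorously justifying the interchange of $\lim_{\lambda\to 0}$ with the nested minimisations, and pinning down the exact (closed) form of the $\lambda=0$ limiting sets — to be the main obstacle, whereas stages (ii) and (iii) are essentially bookkeeping with the definition of $h_c$ and elementary $L_1$-geometry of the simplex.
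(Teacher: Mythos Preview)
Your proposal is correct and follows essentially the same route as the paper: the paper defers your stage (i) to Theorems 3 and 4 of \cite{BT_SMargin} without reproducing the argument, and then carries out stages (ii) and (iii) exactly as you do---using $h_c(P,P')=0 \Leftrightarrow P=P'$ to reduce $\Gamma_0(P_X,\alpha)$ to the existence of $Q,Q'$ with $P = P_X + \tfrac{\alpha}{1-\alpha}(Q-Q')$, then translating this into the $[\,\cdot\,]^+$ and $L_1$ forms, and finally plugging into the EMD description of $\Gamma$. One small slip: in stage (i) you wrote that the $\lambda\to 0$ limit of the monotone family is ``the union over $\lambda>0$'', but since the sets \emph{shrink} as $\lambda\downarrow 0$ (which you correctly note two lines later), the limit is the \emph{intersection} $\bigcap_{\lambda>0}\Gamma(R,\lambda,\alpha,L)$, which by compactness and continuity of $h_c$ coincides with the closed $\lambda=0$ set.
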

\begin{proof}

The proof of the first part goes along the same steps used in the proof of Theorems 3 and 4 in \cite{BT_SMargin} and is not repeated here. We show, instead, that $\Gamma(P_X,\alpha, L)$ can be rewritten as in \eqref{Gamma_X_1}.

By observing that $h_c(P,Q) = 0$ if and only if $P=Q$, it is immediate to see that the set $\Gamma_0(P_X, \lambda=0,\alpha)$ takes the following expression:
\begin{equation}
\Gamma_0(P_X, \alpha) = \{P : ~\exists Q, Q' \in \PP \text{ s.t. } P   ~=~ P_X + \frac{\alpha }{(1 - \alpha)} (Q - Q')\}.
\label{Gamma_X_var}
\end{equation}
Expression \eqref{Gamma_X_var} can be rewritten by avoiding the introduction of the auxiliary pmf's $Q$ and $Q'$. To do so, we observe that $Q(i)$ must be larger than $Q'(i)$ for all the bins $i$ for which $P(i) > P_X(i)$ (and viceversa). In addition, $Q$ and $Q'$ must be valid pmf's, hence we have $\sum_i [ Q(i) - Q'(i)]^+ = \sum_i [ Q'(i) - Q(i)]^+ \le 1$.
Then, it is easy to see that \eqref{Gamma_X_var} is equivalent to the following definition:
\begin{align}
\label{Gamma_X_2}
\Gamma_0(P_X, \alpha) & =  \left\{P : ~\sum_{i} \left[P(i) - P_X(i) \right]^+  \le \frac{\alpha}{(1 - \alpha)}\right\}\\
& =  \left\{P : ~d_{L_1}(P,P_X) \le \frac{2\alpha}{(1 - \alpha)}\right\},\nonumber
\end{align}
where the second equality follows by observing that $d_{L_1}(P,P_X) = \sum_i [P(i) - P_X(i)]^+ + \sum_i [P_X(i) - P(i)]^+$.
Eventually, equation \eqref{Gamma_X_1} derives immediately from the expression of $\Gamma_0(P_X, \alpha)$ given in \eqref{Gamma_X_2}.
\end{proof}
According to Theorem \ref{theorem_EMD_L}, $\Gamma(P_X,\alpha, L)$ provides the {\em ultimate indistinguishability region} of the test, that is the set of all the pmf's for which A wins the game.

Before going on, we pose to discuss the geometrical meaning of the set $\Gamma_0(P_X, \alpha)$ in \eqref{Gamma_X_var}. To do so, we introduce the set $\Lambda_0^*$, obtained from $\Lambda^*$ by letting $\lambda \rightarrow \infty$:
\begin{equation}
\Lambda_0^* = \left\{ (P, P'): ~\exists Q \text{ s.t. }  P' ~=~ \frac{P - \alpha Q}{(1-\alpha)}\right\}.
\label{lambda_ultimate_noargument}
\end{equation}
As usual, we can fix the pmf $P$ and define:
\begin{equation}
\Lambda_0^*(P) = \left\{ P': ~\exists Q \text{ s.t. }  P' ~=~ \frac{P - \alpha Q}{(1-\alpha)}\right\}.
\label{lambda_ultimate}
\end{equation}
By referring to Figure \ref{fig.GeometricSets} (left part), we can geometrically interpret $\Lambda_0^*(P)$ as the set of the pmf's $P'$ such that $P$ is a convex combination (with coefficient $\alpha$) of $P'$ with a point $Q$ of the probability simplex.
Starting from \eqref{newGamma0}, we can then rewrite  $\Gamma_0(P_X, \alpha)$ as follows:
\begin{equation}
\Gamma_0(P_X, \alpha)  = \{P : ~\exists  Q \in \PP \text{ s.t. } P \in \Lambda_0^*((1 - \alpha) P_X + \alpha Q)\}.
\label{Gamma_X_var_lambda}
\end{equation}
Accordingly, $\Gamma_0(P_X, \alpha)$ is geometrically obtained as the union of the acceptance regions built from the points which can be written as a convex combination of $P_X$ with some point $Q$ in the simplex. As shown in the right part of Figure \ref{fig.GeometricSets}, such a region corresponds to an hexagon centred in $P_X$, which, in the probability simplex, is equivalent to the set of points whose $L_1$ distance from $P_X$ is smaller than or equal to $2\alpha/(1-\alpha)$ (as stated in \eqref{Gamma_X_2}). Of course, only the points of the hexagon that lie inside the simplex are valid pmf's and then must be accounted for.

A pictorial representation of the set $\Gamma(P_X, \alpha,L)$ is given in Figure \ref{fig.GeometricSets2}.

\begin{figure}[t!]
\centering \includegraphics[width = 0.49\columnwidth]{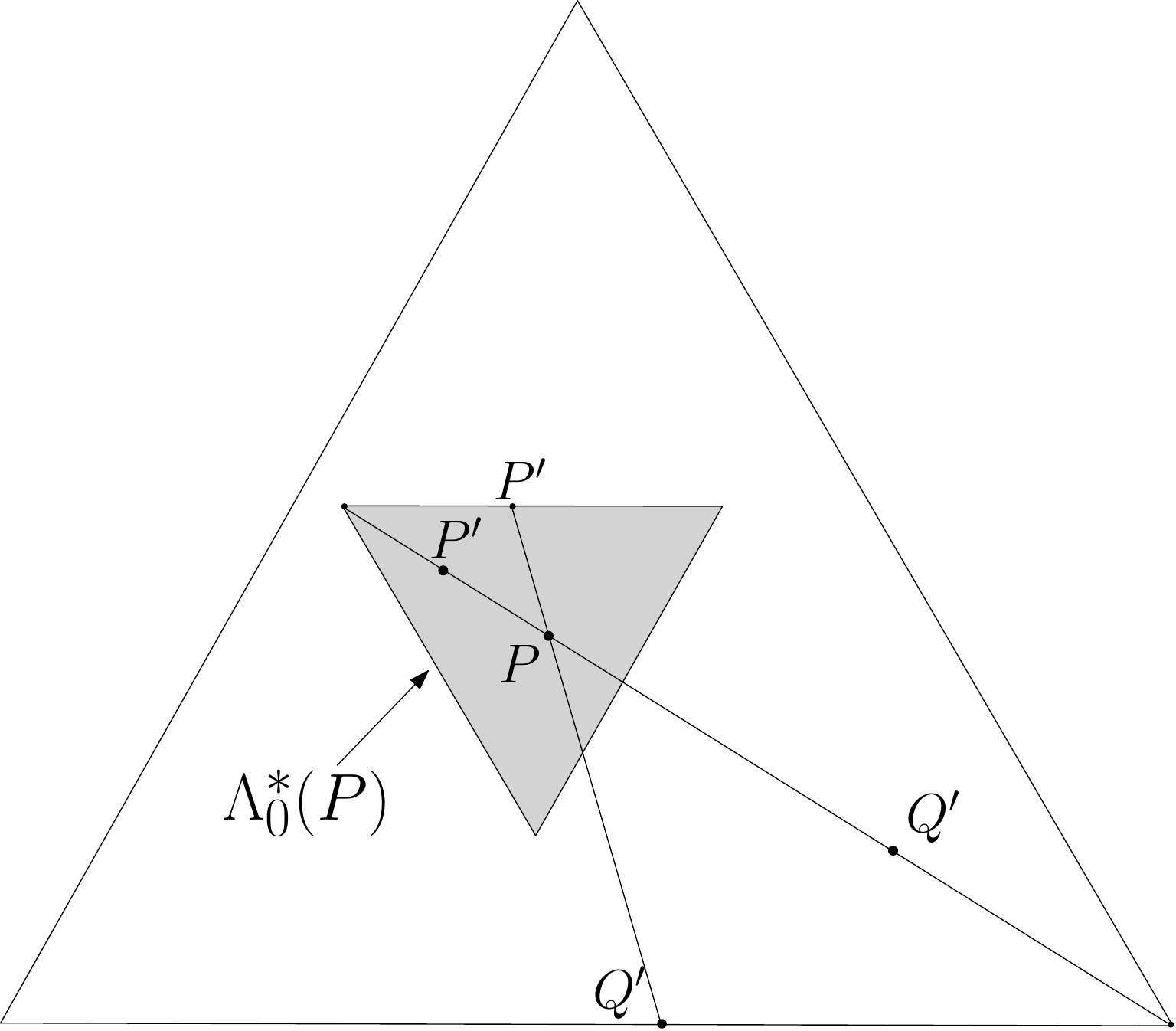} \includegraphics[width = 0.49\columnwidth
]{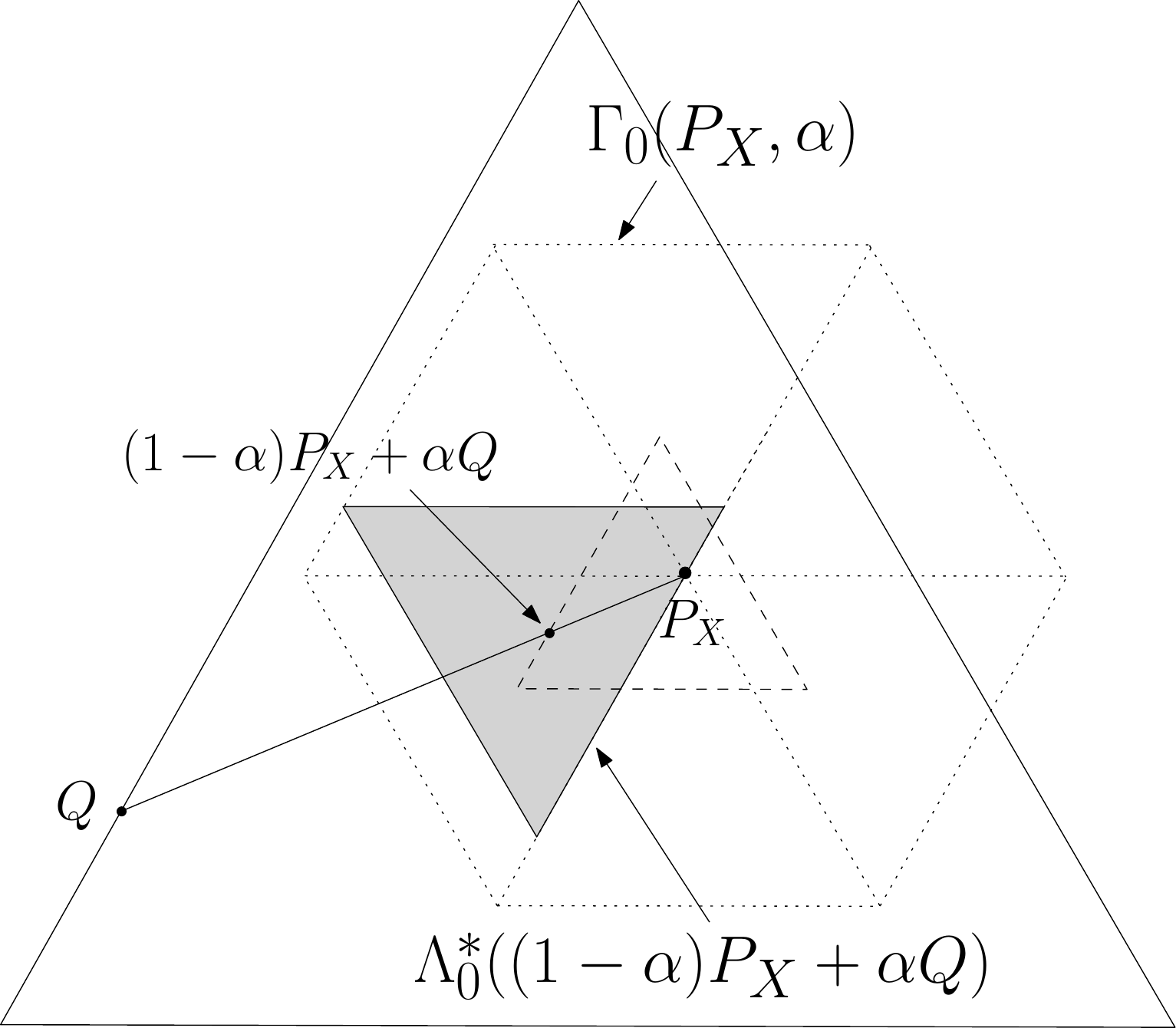}
\caption{Geometrical interpretation of $\Lambda^*_0(P)$ (left) and geometrical construction of $\Gamma_0(P_X, \alpha)$ (right). The size of the sets are exaggerated for graphical purposes.}
\label{fig.GeometricSets}
\end{figure}

\begin{figure}[t!]
\centering \includegraphics[width = 0.58\columnwidth]{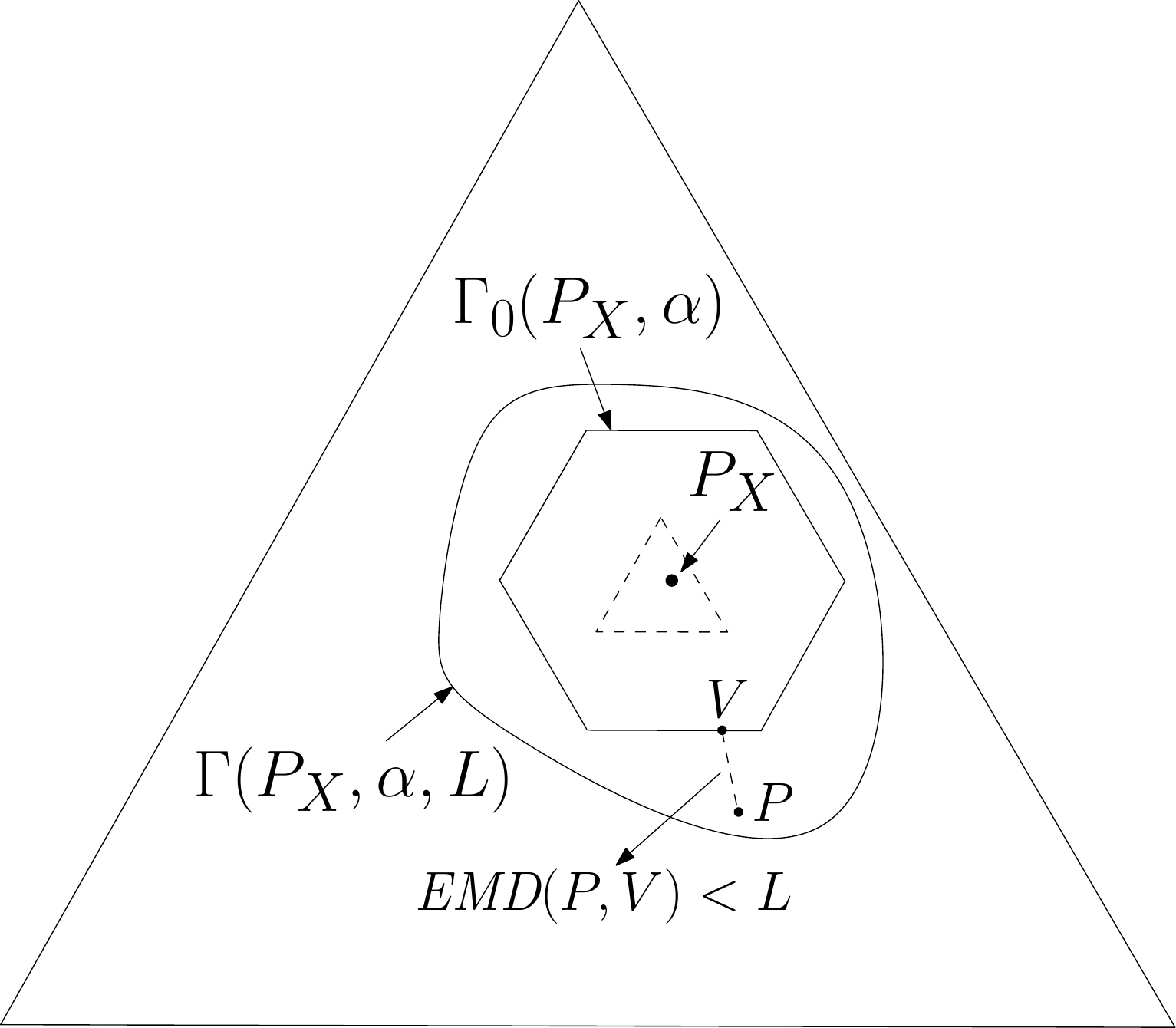}
\caption{Geometrical interpretation of $\Gamma(P_X, \alpha, L)$ as stated in Theorem \ref{theorem_EMD_L}.}
\label{fig.GeometricSets2}
\end{figure}

\subsection{Security margin and blinding corruption level ($\alpha_b$)}
\label{sec.SM_alpha}

By a closer inspection of the {\em ultimate indistinguishability region} $\Gamma(P_X,\alpha, L)$, we can derive some interesting parameters characterising the distinguishability of two sources in adversarial setting.
Let $X \sim P_X$ and $Y \sim P_Y$ be two sources.
Let us focus first on the case in which the attacker can not modify the test sequence ($L = 0$). In this situation, the ultimate indistinguishability region boils down to $\Gamma_0(P_X,\alpha)$.
Then we conclude that D can tell the two sources apart if $d_{L_1}(P_Y, P_X) > \frac{2 \alpha}{(1 - \alpha)}$. On the contrary, if $d_{L_1}(P_Y, P_X) \le \frac{2 \alpha}{(1 - \alpha)}$, A is able to make the sources indistinguishable by corrupting the training sequence. Clearly, the larger the $\alpha$ the easier is for A to win the game. We can define the {\em blinding corruption level} $\alpha_{b}$, as the minimum value of $\alpha$ for which two sources $X$ and  $Y$ can not be distinguished. Specifically, we have:
\begin{align}
\alpha_b(P_X, P_Y) ~=~  \frac{d_{L_1}(P_Y, P_X)}{2 + d_{L_1}(P_Y, P_X)} ~=~ \frac{\sum_i \left[P_Y(i) - P_X(i)\right]^+}{1 + \sum_i \left[P_Y(i) - P_X(i)\right]^+}.
\label{alpha_blind_D_zero}
\end{align}
From \eqref{alpha_blind_D_zero} it is easy to see that $\alpha_b$ is always lower than $1/2$, with the limit case $\alpha_b = 1/2$ corresponding to a situation in which $P_X$ and $P_Y$ have completely disjoint supports\footnote{We remind that for any pair of pmf's $(P,Q)$, $d_{L_1}(P,Q) ~\le~ 2$.}.
It is interesting to notice that $\alpha_b$ is symmetric with respect to the two sources. Since the attacker is allowed only to add samples to the training sequence without removing existing samples, this might seem a counterintuitive result. Actually, the symmetry of $\alpha_b$ is a consequence of the worst case approach adopted by the defender. In fact, D itself discards a subset of samples from the training sequence in such a way to maximise the probability that the remaining part of the training sequence and the test sequence have been drawn from the same source.


Let us now consider the more general case in which $L\neq 0$.  For a given $\alpha < \alpha_b$, we look for the maximum distortion allowed to $A$ for which it is possible to reliably distinguish between the two sources.
From equation \eqref{Gamma_X_1}, we see that the attack does not succeed if:
\begin{equation}
\min_{V: \text{\em EMD}(P_Y,V) \le L} d_{L_1}(V,P_X) ~>~ \frac{2 \alpha}{(1 - \alpha)}.
\label{eq.new}
\end{equation}
This leads to the following definition, which extends the concept of security margin, introduced in \cite{BT_SMargin}, to the more general setup considered in this paper.
\begin{definition}[Security Margin in the \SIa~ setup]
Let $X \sim P_X$ and $Y \sim P_Y$ be two discrete memoryless sources. The maximum distortion allowed to the attacker for which the two sources can be reliably distinguished in the \SIa~ setup with a fraction $\alpha$ of possibly corrupted samples, is called Security Margin and is given by
\begin{align}
\SS \MM_{\alpha}& (P_X, P_Y) ~=~ L_{\alpha}^*,
\label{security_margin}
\end{align}
where $L_{\alpha}^* = 0$ if $P_Y \in \Gamma_0(P_X, \alpha)$, while, if $P_Y \notin \Gamma_0(P_X, \alpha)$, $L_{\alpha}^*$ is the quantity which satisfies
\begin{align}
\min_{V : \text{\em EMD}(P_Y,V) \le L_{\alpha}^*} d_{L_1}(V, P_X)  ~=~ \frac{2\alpha}{(1 - \alpha)}.
\label{equationSM}
\end{align}
\label{def.SM}
\end{definition}
\begin{figure}[t!]
\centering \includegraphics[width = 0.55\columnwidth]{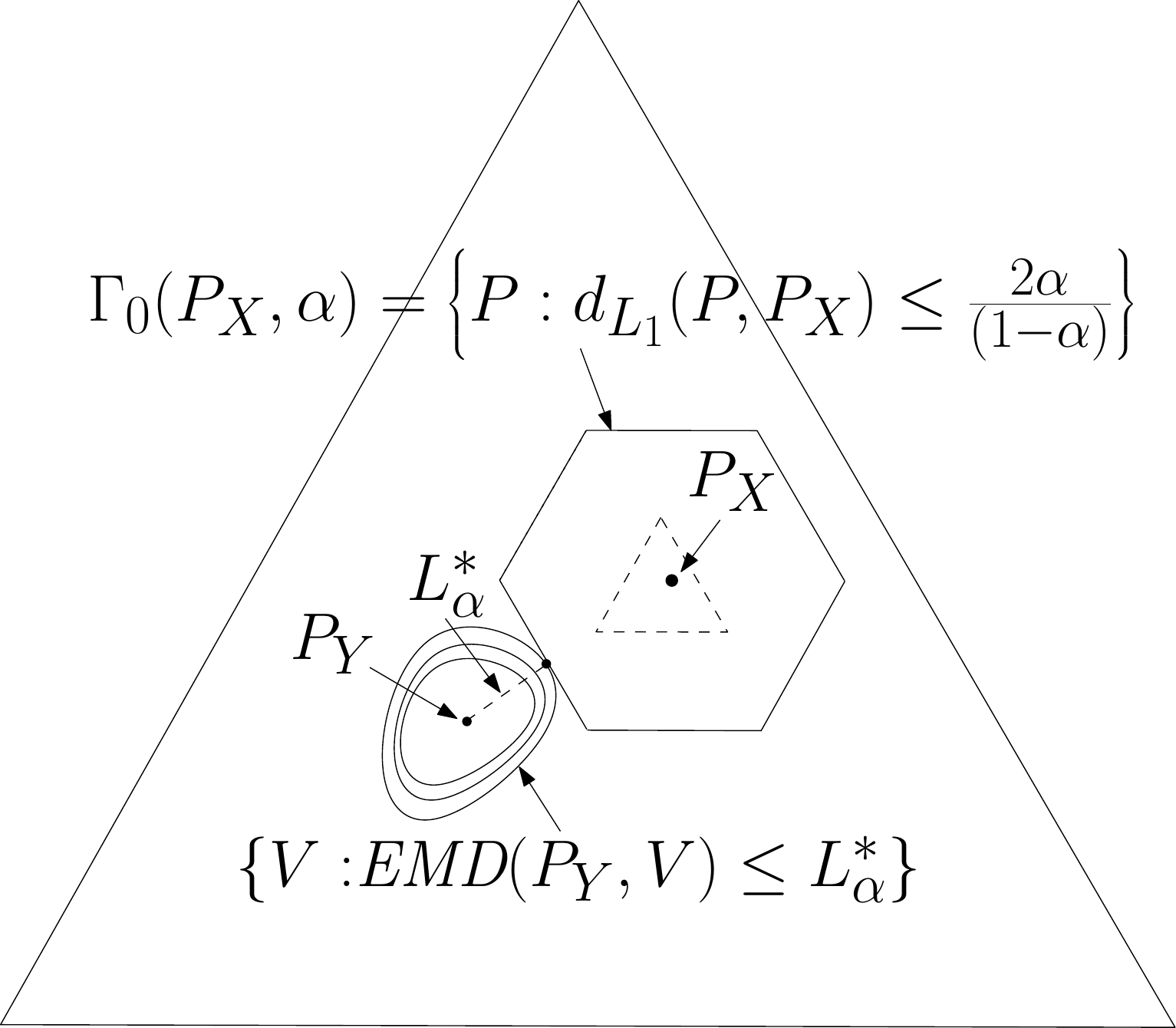}
\caption{Geometrical interpretation of the Security Margin between two sources $X$ and $Y$.}
\label{fig.SM_Geometric_interpretation}
\end{figure}
\noindent A geometric interpretation of $L^*_{\alpha}$ is given in Figure \ref{fig.SM_Geometric_interpretation}. By focusing on the case $P_Y \notin \Gamma_0(P_X, \alpha)$, and by observing that
\begin{equation}
    \min_{V : \text{\em EMD}(P_Y,V) \le L} d_{L_1}(V, P_X)
\label{eq.explicit_min}
\end{equation}
is a monotonic non-increasing function of $L$, the security margin  can be expressed in explicit form as
\begin{align}
\SS \MM_{\alpha} (P_X, P_Y) = \underset{L'}{\arg\min} \min_{V: \text{\em EMD}(P_Y,V) \le L'} \left| d_{L_1}(V,P_X) - \frac{2\alpha}{(1 - \alpha)} \right|.
\label{security_margin_explicit}
\end{align}
When $L > \SS \MM_{\alpha}(P_X, P_Y)$, it is not possible for D to distinguish between the two sources with positive error exponents of the two kinds.

By looking at the behavior of  the security margin as a function of $\alpha$, we see that $\SS \MM_{\alpha_b}(P_X, P_Y) = 0$, meaning that, whenever the fraction of corrupted samples reaches the critical value, the sources can not be distinguished even if the attacker does not introduce any distortion. On the contrary, setting $\alpha = 0$ corresponds to study the distinguishability of the sources with uncorrupted training; in this case we have $\SS \MM_{0}(P_X,P_Y) = \text{\em EMD}(P_X,P_Y)$, in agreement with \cite{BT_SMargin}. With reference to Figure \ref{fig.SM_Geometric_interpretation}, it is easy to see that when $\alpha = 0$ the hexagon representing $\Gamma_0(P_X, \alpha)$ collapses into the single point $P_X$ and the security margin corresponds to the Earth Mover Distance between $Y$ and $X$. Eventually, we notice that, for $\alpha > 0$, the value of the security margin in \eqref{security_margin_explicit} is less than $\text{\em EMD}(P_X,P_Y)$. This is also an expected behaviour since the general setting considered in this paper is more favourable to the attacker than the setting in \cite{BT_SMargin}.

By looking at \eqref{security_margin_explicit}, we can argue that
the Security Margin is symmetric with respect to the two sources $X$ and $Y$, that is, $\SS \MM_{\alpha}(P_Y,P_X) = \SS \MM_{\alpha}(P_X,P_Y)$.

To show that this is the case, we observe that the pmf $V'$ associated with the minimum $L$, for which we have $\text{\em EMD}(P_Y, V')= \SS \MM_{\alpha}(P_X,P_Y)$, can be obtained through the application of a map $S_{P_Y V}$ that works as follows: it does not modify a portion $\alpha/(1-\alpha)$ of $P_Y$ and moves the remaining mass into an equal amount of $P_X$ in a convenient way (i.e., in such a way to minimise the overall distance between the masses). The inverse map can be applied to bring the same quantity of mass from $P_X$ to $P_Y$, while leaving as is the remaining mass, thus obtaining a $V''$ which satisfies $\text{\em EMD}(P_X, V'')= \text{\em EMD}(P_Y, V')$ (because of the symmetry of the per-symbol distortion $d$) and $d_{L_1}(V'', P_Y)  = d_{L_1}(V', P_X) = 2\alpha/(1-\alpha)$. Arguably, $V''$ is the pmf for which $\text{\em EMD}(P_X, V'')= \SS \MM_{\alpha}(P_Y,P_X)$; hence, $\SS \MM_{\alpha}(P_Y,P_X) = \SS \MM_{\alpha}(P_X,P_Y)$.

\subsubsection{Bernoulli sources}
\label{sec_bernoulli}

In order to get some insights on the practical meaning of $\alpha_b$ and $\SS\MM_{\alpha}$, we consider the simple case of two Bernoulli sources with parameter $q = P_X(1)$ and $p = P_Y(1)$.
Assuming that no distortion  is  allowed to the attacker,
the minimum fraction of samples that A must add to induce a decision error is, according to (\ref{alpha_blind_D_zero}), $\alpha_b= \frac{|p - q|}{1 + |p - q|}$. For instance, and rather obviously, when $|p - q| = 1$, to win the game A must introduce a number of fake samples equal to the number  of samples of the correct training sequence, i.e. $\alpha = 0.5$. With regard to $\SS\MM$, we have:
%
%
\begin{equation}
\label{SM_bernoulli_1}
\SS\MM_{\alpha}(p,q) =  \left\{
\begin{array}{ll}
 |q - p| - \frac{\alpha}{1 - \alpha} & \alpha ~<~ \alpha_b \\
0 & \alpha ~\ge~ \alpha_b
\end{array}
\right..
\end{equation}
%
Figure \ref{fig.SMcurve_Bernoulli} illustrates the behavior of $\SS\MM_{\alpha}(p,q)$ as a function of $\alpha$ when $p = 0.3$ and $q = 0.7$. The blinding corruption value is $\alpha_b = 0.286$.


\begin{figure}[t!]
\centering \includegraphics[width = 0.55\columnwidth]{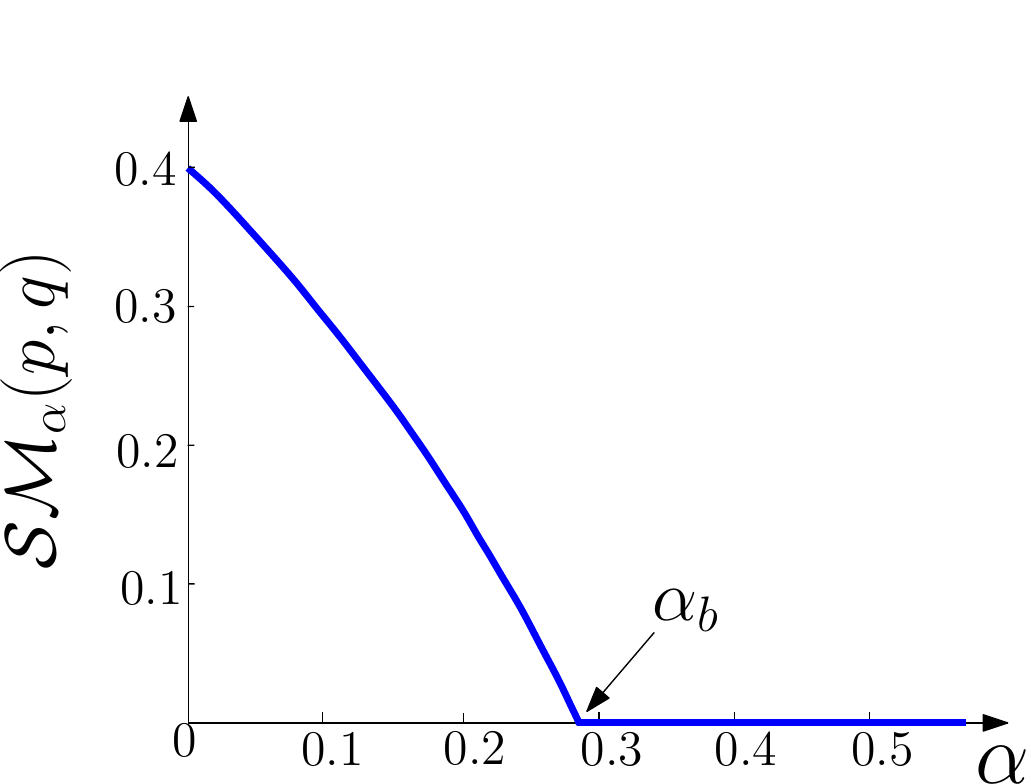}
\caption{Security margin as a function of $\alpha$ for Bernoulli sources with parameters $p = 0.3$ and $q=0.7$ ($\alpha_b = 0.286$).}
\label{fig.SMcurve_Bernoulli}
\end{figure}

\section{Source identification game with replacement of training samples}
\label{sec.SI_CTR_c}

In this section, we study a variant of the game with corrupted training, in which A observes the training sequence and can replace a selected fraction of samples.
Let $\tau^m$ indicate the original $m$-sample long training sequence drawn from $X$ and let $\MM$ be a subset of $m_2~=~\alpha m$ indexes in $[1, 2 \dots m]$. The attacker can choose the index set $\MM$ and replace the corresponding samples with $m_2$ fake samples. More formally, given the original training sequence $\tau^m$, the training sequence {\em seen} by the defender is $t^m = \sigma(\tau_{\bar{\sMM}}^{m_1}|| \tau^{m_2})$, where $\bar{\MM}$ is the complement of $\MM$ in $[1, 2 \dots m]$, $\tau_{\bar{\sMM}}^{m_1}$ is the set of original (non-attacked) samples, and $\tau^{m_2}$ is the sequence with the fake samples introduced by the attacker.

Figure \ref{fig.ADVsetup2} illustrates the adversarial setup considered in this section for the case of a targeted attack. Arguably, this scenario is more favourable to the attacker with respect to the \SIa~game.

\begin{figure}[t!]
\centering \includegraphics[width = 0.99\columnwidth]{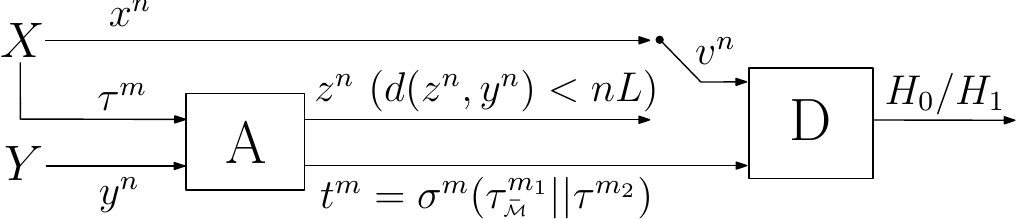}
\caption{Block diagram of the \SIr~ game (targeted corruption). Given the original training sequence $\tau^m$, the adversary has the possibility to replace a selected subset of $m_2$ training samples with fake ones.}
\label{fig.ADVsetup2}
\end{figure}

\subsection{Formal definition of the \SIr~game}

In the sequel, we formally define the source identification game with replacement of selected samples, namely the \SIr~game. As anticipated, we focus on a version of the game in which the corruption of the training samples depends on the to-be-attacked sequence $y^n$ (targeted attack), the extension to the case of non-target attack, in fact, can be easily obtained by following the same approach used in Section \ref{subsec.nontarget}.

\subsubsection{Defender's strategies}

As in the \SIa~game, in order to be sure that the false positive error probability is lower than $2^{-n\lambda}$, the defender adopts a worst case strategy and considers the maximum of the false positive error probability over all the possible $P_X$ and over all the possible attacks that the training sequence may have undergone, yielding:
\begin{equation}
\label{eq.SD_r}
 \SS_{D} = \{ \Lambda^{n \times m} \subset \PP^n \times \PP^m: ~\max_{P_X \in \mathcal{P}} ~\max_{s \in \SS_{A,T}} P_{fp} ~\le~ 2^{-\lambda n}\}.
\end{equation}
While the above expression is formally equal to that of the \SIa~game (see eq. \eqref{eq.SD_bis}), the maximisation over $\SS_{A,T}$ is now more cumbersome, due to the additional degree of freedom available to the attacker, who can selectively remove the samples of  the original training sequence.  In fact, even if D knew the position of the corrupted samples, simply throwing them away would not guarantee that the remaining part of the sequence would follow the same statistics of $X$, since the attacker might have deliberately altered them by selectively choosing the samples to replace.

\subsubsection{Attacker's strategies}

With regard to the attacker, the part of the attack working on the test sequence $y^n$ is the same as for the \SIa ~case, while the part regarding the corruption of the training sequence must be redefined.
To this purpose, we observe that the corrupted training sequence may be any sequence $t^{m}$ for which $d_H(t^m, \tau^m) \le \alpha m$, where $d_H$ denotes the Hamming distance.
Given that the defender basis his decision on the type of $t^m$, it is convenient to rewrite the constraint on the Hamming distance between sequences as a constraint on the $L_1$ distance between the corresponding types.
In fact, by looking at the empirical distribution of the corrupted sequence, searching for a sequence $t^m$ s.t.  $d_{H}(t^m, \tau^m) \le \alpha m$ is equivalent to searching for a pmf $P_{t^m} \in \PP^m$ for which  $d_{L_1}(P_{t^m}, P_{\tau^m}) \le 2\alpha$  (see the proof of Lemma 2 in  \cite{BT13}).
Therefore, the set of strategies of the attacker is defined by $ \SS_{A} = \SS_{A,T} \times  \SS_{A,O}$, where
\begin{align}
    \SS_{A,T} ~=~ \{ & Q(P_{\tau^{m}}, P_{y^n}): ~\PP^m \times \PP^n \rightarrow \PP^m  \nonumber \\
    & \text{such that} ~d_{L_1}(Q(P_{\tau^{m}}, P_{y^n}), P_{\tau^m}) ~\le~ 2\alpha \}, \label{eq.SA_Tr_r_1} \\
    \SS_{A,O} ~=~ \{ & S^n_{YZ}(P_{y^n}, P_{t^m}): ~\PP^n \times \PP^m \rightarrow \A^n(L, P_{y^n}) \}.
\label{eq.SAD_TR_r}
\end{align}
Note that, in this case, the function $Q(\cdot, \cdot)$ gives the type of the whole training sequence observed by D (not only the fake subpart, as it was in the \SIa~ game),  that is, $P_{t^m} = Q(P_{\tau^m}, P_{y^n})$.

In the following, we will find convenient to express the attacking strategies in $\SS_{A,T}$ in an alternative way.
Since the attacker {\em replaces} the samples of a subpart of the training sequence, the corruption strategy is equivalent to first removing a subpart of the training sequence and then adding
a fake subsequence of the same length. Then, the sequence is reordered to hide the position of the fake samples. By focusing on the type of the observed training sequence, we can write:
\begin{equation}
P_{t^m} ~=~ P_{\tau^m} - \alpha Q_R(P_{\tau^m}, P_{y^n}) + \alpha Q_A(P_{\tau^m}, P_{y^n}).
\label{eq.alternative}
\end{equation}
where $Q_R(P_{\tau^m}, P_{y^n})$ and $Q_A(P_{\tau^m}, P_{y^n})$ (both belonging to $\PP^{m_2}$) are the types of the removed and injected subsequences respectively. In order to simplify the notation, in the following we will avoid to indicate explicitly the dependence of $Q_R(P_{\tau^m}, P_{y^n})$ and $Q_A(P_{\tau^m}, P_{y^n})$ on $P_{\tau^m}$, $P_{y^n}$,
and will indicate them as $Q_R()$ and $Q_A()$. Furthermore, we will use notation $Q_R$ and $Q_A$ whenever the dependence from the arguments is not relevant.
By varying $Q_R$ and $Q_A$, we obtain all the pmf's that can be produced from $P_{{\tau}^m}$ by first removing and later adding $m_2$ samples. Of course not all pairs $(Q_R,~Q_A)$ are admissible since the $P_{t^m}$ resulting from eq. \eqref{eq.alternative} must be a valid pmf, i.e. it must be nonnegative for all the symbols of the alphabet $\XX$.



%

\subsubsection{Payoff}

As usual, the payoff function is defined as
\begin{equation}
u(\Lambda^{n \times m}, (Q_R(), Q_A(), S^n_{YZ}())) ~=~ -P_{fn}.
\label{eq.payoff_TR_r}
\end{equation}

\subsection{Equilibrium point and payoff at the equilibrium}

In order to ensure that  $P_{fp}$ is always lower than $2^{- \lambda n}$ , it is convenient to use the attack formulation given in \eqref{eq.alternative}. For a given $P_X$, $Q_R$ and $Q_A$, $P_{fp}$ is the probability that $X$ generates two sequences $x^n$ and $\tau^m$, such that the pair of type classes $(P_{x^n}, P_{\tau^m} - \alpha(Q_R() - Q_A()))$ falls outside $\Lambda^{n \times m}$. Accordingly, the set of strategies available to D can be rewritten as:
\begin{align}
\label{eq.SD_explicit_r}
&\SS_D  = \bigg \{ \Lambda^{n \times m} : \max_{P_X \in \PP} ~\max_{Q_R(),Q_A()} \sum_{P_{y^n} \in \PP^n} P_Y(T(P_{y^n})) \cdot   \\ \nonumber
&\sum_{(P_{x^n}, P_{t^m}) \in \bar{\Lambda}^{n \times m}} \hspace{-0.6cm} P_X(T(P_{x^n})) ~\cdot \hspace{-1.1cm}   \sum_{\substack{P_{\tau^{m}} \in \PP^{m}:  \\ P_{\tau^m} - \alpha(Q_R() - Q_A()) = P_{t^m}}} \hspace{-0.9cm}P_X(T(P_{\tau^{m}})) ~\le~ 2^{-\lambda n}
\bigg \}.
\end{align}

By proceeding as in the proof of Lemma \ref{lemma.optimum_SD}, it is easy to prove that the asymptotically optimum strategy for the defender corresponds to the following:
\begin{align}
\label{eq.optimum_SD_r}
    \Lambda^{n \times m,*}  & ~=~ \big\{ (P_{x^n}, P_{t^m})  :  \nonumber \\
    & \min_{Q_R,Q_A \in \PP^{m_2}} h\left(P_{x^n},P_{t^m} + \alpha (Q_R - Q_A) \right) \le~ \lambda -  \delta_{n} \big\},
\end{align}
where $\delta_{n}$ tends to 0 as $n \rightarrow \infty$ and the minimization is limited to $Q_R$ and $Q_A$ in $\PP^{m_2}$ such that $P_{t^m}+\alpha (Q_R - Q_A)$ is a valid pmf. Consequently, the optimum attacking strategy is given by:
\begin{align}
    & (Q^*(P_{\tau^{m}}, P_{y^n}),  ~S^{n,*}_{YZ}(P_{y^n},P_{t^m})) ~= \nonumber \\
   &  \operatornamewithlimits{argmin} \limits_{\substack{P_{t^m} \text{ s.t. } d_{L_1}(P_{t^m}, P_{\tau^m}) \le 2\alpha \\ S^n_{YZ} \in \A^n(L, P_{y^n})}}
    \bigg[ \min_{Q_R,Q_A}  h\left( P_{z^n} , P_{t^m} + \alpha (Q_R - Q_A) \right) \bigg],
    \label{eq.optimum_SA_r}
\end{align}
hence resulting in the following theorem.
\begin{theorem}
\label{teo.SIR}
The \SIr~ game with targeted corruption is a dominance solvable game, whose only rationalizable equilibrium corresponds to the profile $(\Lambda^{n \times m,*}, (Q^*(), ~S^{n,*}_{YZ}()))$ given by equations \eqref{eq.optimum_SD_r} and \eqref{eq.optimum_SA_r}.
\end{theorem}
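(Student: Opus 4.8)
The plan is to reproduce, step by step, the argument that established Theorem~\ref{teo.aseq} for the \SIat~game: first prove the exact analogue of Lemma~\ref{lemma.optimum_SD} showing that the region $\Lambda^{n \times m,*}$ in \eqref{eq.optimum_SD_r} is an asymptotically dominant (and semi-universal, i.e.\ $P_Y$-independent) strategy for D; then observe that, against this dominant strategy, the attacker's optimal play is obtained by a plain minimisation of $P_{fn}$, which yields \eqref{eq.optimum_SA_r}; and finally invoke iterated elimination of dominated strategies to conclude that $(\Lambda^{n \times m,*},(Q^*(),S^{n,*}_{YZ}()))$ is the only rationalizable equilibrium, so that the game is dominance solvable.

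For the defender, I would start from the explicit expansion of $P_{fp}$ implicit in \eqref{eq.SD_explicit_r}. For a fixed $P_X$ and a pair of types $(P_{x^n},P_{t^m})\in\bar{\Lambda}^{n\times m}$, the relevant quantity is the probability that $X$ emits a test sequence in $T(P_{x^n})$ and an original training sequence $\tau^m$ whose type can be mapped onto $P_{t^m}$ by removing and re-adding $m_2=\alpha m$ samples, i.e.\ any $P_{\tau^m}$ of the form $P_{t^m}+\alpha(Q_R-Q_A)$ with $Q_R,Q_A\in\PP^{m_2}$ (and $P_{t^m}+\alpha(Q_R-Q_A)$ a valid pmf). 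Bounding the inner sum over such $P_{\tau^m}$ by $|\PP^m|=(m+1)^{|\XX|}$ times its largest term, and then using the single–type–class bounds of \cite{CandT} together with the minimisation property of the generalised log-likelihood ratio (Lemma~1 in \cite{BTtit}, applied with ratio $c=m/n$), gives
\begin{equation*}
\max_{P_X}\ P_X(T(P_{x^n}))\,\max_{Q_R,Q_A}P_X\!\left(T\!\left(P_{t^m}+\alpha(Q_R-Q_A)\right)\right)\ \le\ 2^{-n\,\min_{Q_R,Q_A}h\left(P_{x^n},\,P_{t^m}+\alpha(Q_R-Q_A)\right)}.
\end{equation*}
Since every pair in $\bar{\Lambda}^{n\times m,*}$ satisfies $\min_{Q_R,Q_A}h(P_{x^n},P_{t^m}+\alpha(Q_R-Q_A))>\lambda-\delta_n$, summing over the polynomially many type pairs yields $\max_{P_X}\max_{s\in\SS_{A,T}}P_{fp}\le 2^{-n(\lambda-\nu_n)}$ with $\nu_n\to0$, i.e.\ asymptotic admissibility. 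For optimality I would run the matching lower bound: if $\Lambda^{n\times m}\in\SS_D$ and $(P_{x^n},P_{t^m})\in\bar{\Lambda}^{n\times m}$, keeping only the single worst $(P_X,Q_R,Q_A)$ and lower bounding one type-class probability forces $\min_{Q_R,Q_A}h(P_{x^n},P_{t^m}+\alpha(Q_R-Q_A))>\lambda-\delta_n$, hence $\bar{\Lambda}^{n\times m}\subseteq\bar{\Lambda}^{n\times m,*}$; thus $\Lambda^{n\times m,*}$ is the smallest admissible acceptance region, and a smaller acceptance region can only decrease $P_{fn}$ whatever A does, so it is dominant for D.

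Given this dominant strategy, the attacker's best response is read off directly: for each realisation $(y^n,\tau^m)$, A induces a false negative iff it can pick an admissible corrupted training type $P_{t^m}$ with $d_{L_1}(P_{t^m},P_{\tau^m})\le 2\alpha$ and an admissible map $S^n_{YZ}\in\A^n(L,P_{y^n})$ such that $(P_{z^n},P_{t^m})\in\Lambda^{n\times m,*}$; minimising the defender's statistic over these choices is precisely the double minimisation \eqref{eq.optimum_SA_r}, which is therefore the best response (unique up to ties that leave $P_{fn}$ unchanged). Since D's strictly dominant strategy eliminates all of D's other strategies in the first round, and in the residual game A faces a single-player optimisation whose solution is \eqref{eq.optimum_SA_r}, the profile $(\Lambda^{n\times m,*},(Q^*(),S^{n,*}_{YZ}()))$ is the unique survivor of the iterated elimination, i.e.\ the only rationalizable equilibrium, exactly as in the proof of Theorem~\ref{teo.aseq}.

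The one genuinely new ingredient, and the step I expect to be the main obstacle, is handling the two-sided (remove-and-add) training corruption: it widens the family of original training types consistent with a given observed $P_{t^m}$ from the one-parameter set $\{(P_{t^m}-\alpha Q)/(1-\alpha)\}$ of the \SIa~game to the two-parameter set $\{P_{t^m}+\alpha(Q_R-Q_A)\}$, and one must check that, despite this extra adversarial freedom, the worst-case $P_{fp}$ over $\SS_{A,T}$ is still controlled by a single application of the $h$-bound, so that the polynomial corrections collapse into a vanishing $\delta_n$ and $\Lambda^{n\times m,*}$ remains asymptotically admissible (the $L_1$-versus-Hamming reformulation $d_{L_1}(P_{t^m},P_{\tau^m})\le 2\alpha$ justified via Lemma~2 of \cite{BT13} is what keeps the constraint set polynomial-sized). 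Once the Lemma~\ref{lemma.optimum_SD}-style estimates are in place, everything else — the attacker's best response and the dominance-solvability conclusion — is immediate.
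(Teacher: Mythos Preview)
Your proposal is correct and follows essentially the same route as the paper: the paper itself does not spell out a separate proof of Theorem~\ref{teo.SIR} but simply notes that ``by proceeding as in the proof of Lemma~\ref{lemma.optimum_SD}'' one obtains \eqref{eq.optimum_SD_r}, after which \eqref{eq.optimum_SA_r} and dominance solvability follow exactly as in Theorem~\ref{teo.aseq}. Your write-up expands precisely these steps, correctly identifying the only nontrivial change --- the one-parameter family $\{(P_{t^m}-\alpha Q)/(1-\alpha)\}$ being replaced by the two-parameter family $\{P_{t^m}+\alpha(Q_R-Q_A)\}$ --- and handling it in the same way.
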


%

In order to study the asymptotic payoff of the \SIr~ game at the equilibrium, we parallel the analysis carried out in Sec. \ref{subsec.SIapayoff}. By considering the case $L=0$, the set of pairs of types for which D will accept $H_0$ as a consequence of the attack to the training sequence is given by
\begin{align}
\label{gamma_n_c_c}
\Gamma_0^n(\lambda,\alpha) ~=~ \{ & (P_{y^n}, P_{\tau^m}) : \nonumber \\
 &\exists P_{t^m} \text{ s.t. } d_{L_1}(P_{t^m}, P_{\tau^m}) ~\le~ 2\alpha \nonumber \\
 & \text{and } (P_{y^n}, P_{t^m}) ~\in~ \Lambda^{n \times m,*}\}.
\end{align}
If we fix the type of the original training sequence, we get:
\begin{align}
\label{gamma_n_c_c_fix}
\Gamma_0^n(P_{\tau^m}, \lambda,\alpha)~ &=~ \{P_{y^n}: ~\exists P_{t^m} \text{ s.t. } d_{L_1}(P_{t^m}, P_{\tau^m}) ~\le~ 2\alpha \nonumber \\
& \hspace{1.4cm} \text{and } P_{y^n} \in \Lambda^{n,*}(P_{t^m}) \} \nonumber \\
& = ~ \{ P_{y^n} :~ \exists P_{t^m}, ~\exists Q, Q' \in \PP^{m_2}, \text{ s.t. } \\ \nonumber
& \hspace{1.4cm} d_{L_1}(P_{t^m}, P_{\tau^m}) \le 2\alpha \nonumber \\
& \hspace{1.4cm} \text{and } h(P_{x^n}, P_{t^m} - \alpha Q'+ \alpha Q)  \le  \lambda -  \delta_{n} \}.\nonumber
\end{align}
By letting $n$ go to infinity, we obtain the asymptotic counterpart of the above set, which, for a generic $R \in \PP$, takes the following expression:
\begin{align}
\Gamma_0(R, \lambda, \alpha) ~=~ \big\{ P: ~ &\exists P', Q, Q',  \text{ s.t. } d_{L_1}(P', R) ~\le~ 2\alpha  \nonumber\\  & \text{and }  h_c(P, P' - \alpha Q'+ \alpha Q) ~\le~ \lambda \big\}.
\end{align}
When $L \ne 0$, we obtain:
\begin{align}
\label{set_Gamma_r}
\Gamma(R,\lambda,\alpha,L) ~=~  \{P : ~\exists V \in  \Gamma_0(R,\lambda,\alpha) \text{ s.t. } \text{\em EMD} (P,V) ~\le~ L\}.
\end{align}
With the above definitions, it is straightforward to extend Theorem \ref{theo_as_payoff_si} to the \SIr~ case, thus proving that the set in \eqref{set_Gamma_r} evaluated in $R ~=~ P_X$ represents the indistinguishability region of the \SIr~ game.

\subsection{Security margin and blinding corruption level}

As a last contribution, we are interested in studying the ultimate distinguishability of two sources $X$ and $Y$ in the \SIr~ setting and compare it with the result we have obtained for the \SIa~ case. To do so, we consider the behaviour of the indistinguishability region when $\lambda$ tends to 0. We have:
\begin{align}
\Gamma(P_X,\alpha,L) ~=~  \{P :~ \exists V \in & \Gamma_0 (P_X,\alpha) \text{ s.t. } \text{\em EMD} (P,V) ~\le~ L\},
\label{eq.indist.set_L}
\end{align}
where
\begin{align}
\Gamma_0(P_X, \alpha) ~&=~ \big\{  P  : ~ \exists P', Q, Q' \text{ s.t. } d_{L_1}(P', P_X) ~\le~ 2\alpha \nonumber\\
& \hspace{0.6cm} \text{and }  P ~=~ P' + \alpha(Q - Q') \big\} \nonumber\\
&=~  \big\{P  : ~\exists P' \text{ s.t. } d_{L_1}(P', P_X) ~\le~ 2\alpha  \nonumber\\
& \hspace{0.6cm}\text{and }  d_{L_1}(P,P') ~\le~ 2\alpha \big\}.
\label{eq.indist.set}
\end{align}
The set in \eqref{eq.indist.set} can be equivalently rewritten as
\begin{align}
\Gamma_0(P_X,\alpha) ~=~ \big\{P  :~ d_{L_1}(P, P_X) ~\le~ 4\alpha \big\}.
\label{eq.indist.set2}
\end{align}

To see why, we first notice that set \eqref{eq.indist.set} is contained in \eqref{eq.indist.set2}. Indeed, from the triangular inequality we have that, for any $P'$, $d(P,P_X) ~\le~ d_{L_1}(P,P') + d_{L_1}(P',P_X)$. Then, if $P$ belongs to $\Gamma_0(P_X,\alpha)$ in \eqref{eq.indist.set}, it also belongs to the set in \eqref{eq.indist.set2}.  To see that the two sets are indeed equivalent, it is sufficient to show that the reverse implication also holds. To this purpose, we observe that, whenever $d_{L_1}(P, P_X) ~\le~ 4\alpha$, a type $P^*$ can be found such that its distance from both $P$ and $P_X$ is less or at most equal to $2\alpha$. In fact, by letting $P^* ~=~ \frac{P + P_X}{2}$, we have
\begin{align}
&d_{L_1}(P,P^*) ~=~ d_{L_1}(P^*,P_X) ~=~  \sum_i \bigg|\frac{P(i) - P_X(i)}{2}\bigg| \nonumber\\
&d_{L_1}(P,P_X) ~=~ \sum_i \bigg|P_X(i) - P(i)\bigg| ~=~  2 d_{L_1}(P,P^*).
\end{align}
If $d_{L_1}(P, P_X) ~\le~ 4\alpha$, then, $d_{L_1}(P,P^*) ~=~ d_{L_1}(P^*,P_X)$ $~=~ d_{L_1}(P,P_X)/2 ~\le~ 2\alpha$, permitting us to conclude that the sets in \eqref{eq.indist.set} and \eqref{eq.indist.set2} are equivalent.

Upon inspection of equation \eqref{eq.indist.set2}, we can conclude that, as expected, the indistinguishability region for $L=0$ (and hence, also for the case $L \neq 0$)  is larger than that of the \SIa~ game (see \eqref{Gamma_X_2}), thus confirming that the game with sample replacement is more favourable to the attacker (a graphical comparison between the indistinguishability regions for the two setups is shown in Figure \ref{fig.set_comparison}).
\begin{figure}
\centering \includegraphics[width = 0.55\columnwidth]{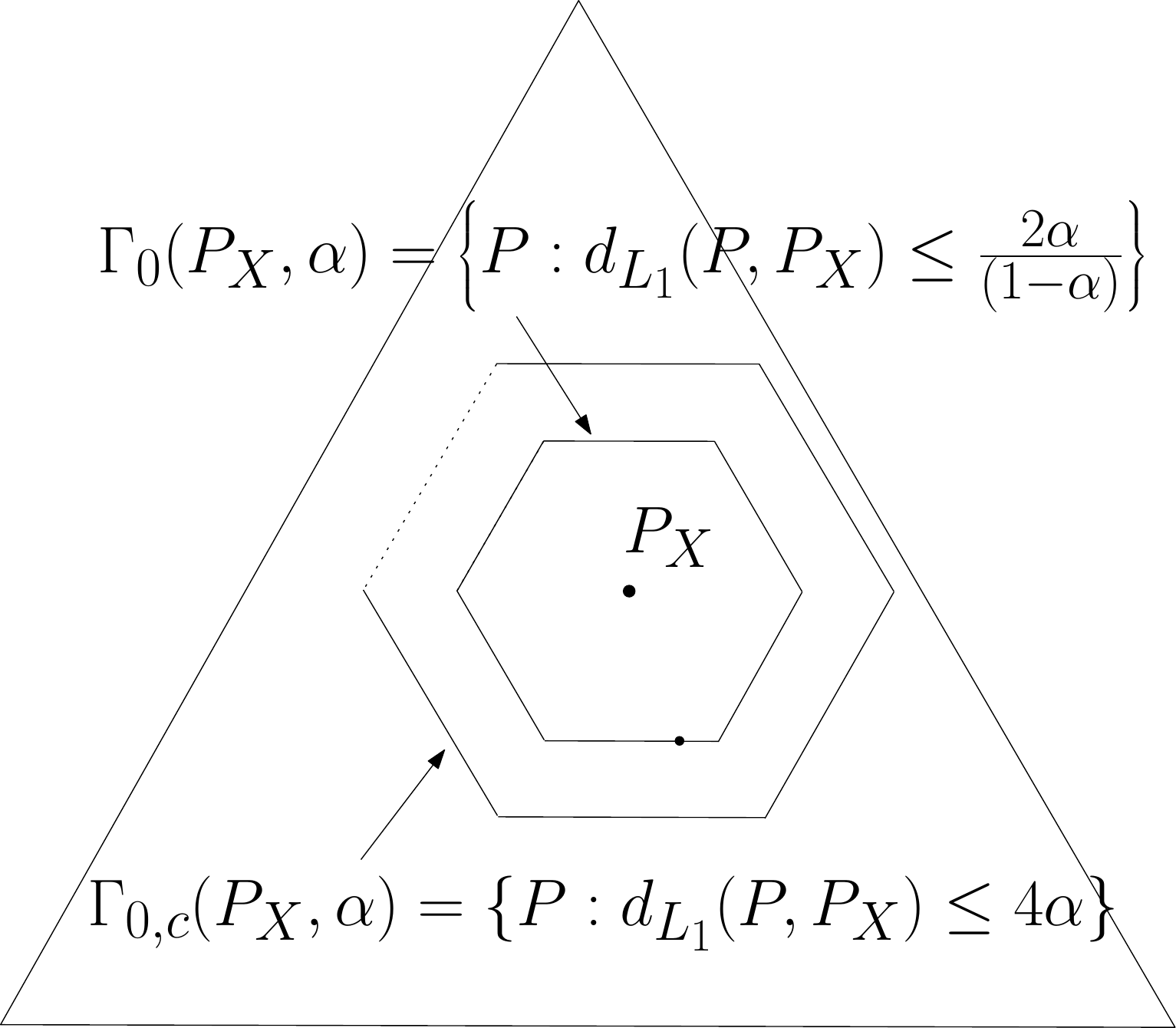}
\caption{Comparison of the indistinguishability regions for the \SIa~  and \SIr~  games with $L = 0$.}
\label{fig.set_comparison}
\end{figure}
As a matter of fact, for the attacker, the advantage of the \SIr~ game with respect to the \SIa~ game depends on $\alpha$.
For small $\alpha$ and for $\alpha$ close to $1/2$, the indistinguishability regions of the two games are very similar, while for intermediate values of $\alpha$ the indistinguishability region of the \SIr~ game is considerably larger than that of the \SIa~ game (the maximum difference between the two regions is obtained for $\alpha \approx 0.3$). When $\alpha = 1/2$ the attacker always wins, since he is able to bring any pmf inside the acceptance region regardless of the game version, while for $\alpha = 0$, we fall back into the source identification game without corruption of the training sequence, thus making the two versions of the game equivalent.

Given two sources $X$ and $Y$, the blinding corruption level value takes the expression:
\begin{equation}
\alpha_b ~=~ \frac{d_{L_1}(P_Y,P_X)}{4}.
\end{equation}
Since $d_{L_1}(P_Y,P_X) \le 2$ for any couple $(P_Y, P_X)$ (the maximum value 2 is taken when the two distribution have disjoint support), the blinding value for the \SIr~ game is lower than the blinding value of  \SIa~ game. The two expressions are identical when the two sources have disjoint support, in which case $\alpha_b = 1/2$.


When the attacker can also corrupt the test sequence, the {\em ultimate indistinguishability region} of the \SIr game is:
\begin{align}
\Gamma(P_X,\alpha,L) ~=~  \big\{P: \min_{V: \text{\em EMD}(P,V) \le L}  d_{L_1}(V, P_X) ~\le~ 4\alpha \big\}.
\label{eq.preSM}
\end{align}
Starting from \eqref{eq.preSM} we can define the security margin in the \SIr~ setup.

\begin{definition}[Security Margin in the \SIr~ setup]
Let $X \sim P_X$ and $Y \sim P_Y$ be two discrete memoryless sources. The maximum distortion for which the two sources can be reliably distinguished in the \SIr~ setup is called Security Margin and is given by
\begin{align}
\SS \MM_{\alpha}& (P_X, P_Y) = L_{\alpha}^*,
\label{security_margin}
\end{align}
where $L_{\alpha}^*$ is the quantity which satisfies the following relation
\begin{equation}
\min_{V: \text{\em EMD}(P_Y,V) \le L_{\alpha}^*} d_{L_1}(V,P_X) ~=~ 4\alpha,
\end{equation}
if $P_Y \notin \Gamma_0(P_X, \alpha)$, and $L_{\alpha}^* = 0$ otherwise.
\end{definition}
Considering again the case of two Bernoulli sources and by adopting the same notation of Section \ref{sec_bernoulli}, we have that $\alpha_b = |p - q|/4$, while the security margin is
\begin{equation}
\label{SM_bernoulli}
\SS\MM_{\alpha}(p,q) =  \left\{
\begin{array}{ll}
 |q - p| - 2 \alpha & \alpha ~<~ \alpha_b \\
0 & \alpha ~\ge~ \alpha_b
\end{array}
\right..
\end{equation}
Figure \ref{fig.SMcurve_Bernoulli} plots $\SS\MM_{\alpha}$ as a function of $\alpha$ when $p=0.3$ and $q = 0.7$. The blinding value is $\alpha_b=0.1$ which, as expected, is lower than the value we found for the \SIa~ setup.
\begin{figure}[t!]
\centering \includegraphics[width = 0.55\columnwidth]{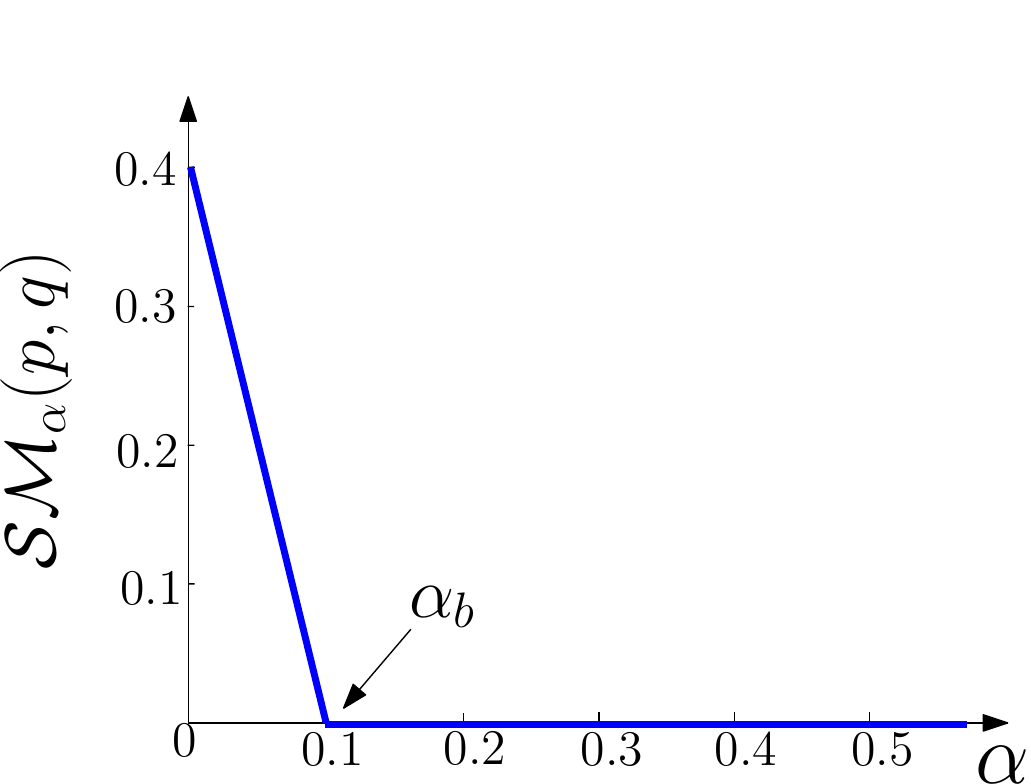}
\caption{Security margin as a function of $\alpha$ for Bernoulli sources with parameters $p = 0.3$ and $q=0.7$ ($\alpha_b = 0.1$).}
\label{fig.SMcurve_Bernoulli2}
\end{figure}

\section{Conclusions}
\label{sec.conc}

We studied the distinguishability of two sources in an adversarial setup when the sources are known through training data, part of which can be corrupted by the attacker himself. We considered two different scenarios. In the first one, the attacker simply adds fake samples to the original training sequence, while in the second one, the attacker replaces a selected subset of training samples with fake ones. We formalised both cases in a game-theoretic setup, then we derived the equilibrium point of the games and analysed the (asymptotic) payoff at the equilibrium.
The result of the game can be summarised in a compact and elegant way by introducing two parameters, namely the Security Margin under corruption of the training sequence, and the blinding corruption level $\alpha_b$, defined as the portion of fake samples the attacker must introduce to make impossible any reliable distinction between the sources. Based on these two parameters, the performance of the two games with corruption of the training data can be easily compared.

Though rather theoretical, our findings can guide more practical researches in several fields belonging to the emerging areas of adversarial signal processing \cite{BarGon13} and secure machine learning \cite{Barreno2010}. In many cases, in fact, the defender must take into account the possibility that the data he is using to tune the system he is working at, or during the learning phase, is corrupted by the attacker.

The analysis carried out in this paper can be extended in several ways, for instance by considering continuous sources, or by assuming that the sources $X$ and $Y$ are not memoryless, but still amenable to be studied by using the method of types \cite{Csi98}. Following the analysis in \cite{BMTWIFS15}, we could also consider a more general setup in which the attacker is active under both $H_0$ and $H_1$.
An interesting generalisation, consists in studying a symmetric setup in which the training and the test sequences can be corrupted by applying the same kinds of processing. For instance, the attacker could be allowed to replace samples in both the training and the set sequences, or he could be allowed to modify the training sequence up to a certain distortion. Other kinds of attacks to the training data could also be considered, like sample removal with no addition of fake samples. As a matter of fact, the kind of attack strongly depends on the application scenario, and it is arguable that the availability of a large variety of theoretical models would help bridging the gap between theory and practice.

\section*{Acknowledgment}

This work has been partially supported by a research sponsored by DARPA and Air Force Research Laboratory (AFRL) under agreement number FA8750-16-2-0173. The U.S. Government is authorised to reproduce and distribute reprints for Governmental purposes notwithstanding any copyright notation thereon. The views and conclusions contained herein are those of the authors and should not be interpreted as necessarily representing the official policies or endorsements, either expressed or implied, of DARPA and Air Force Research Laboratory (AFRL) or the U.S. Government.

\bibliographystyle{IEEEtran}
\bibliography{HT_CorruptedTr new formulation}


\numberwithin{equation}{section}
\appendix

\renewcommand{\theequation}{A\arabic{equation}}

\subsection{Generalized Sanov's theorem}
\label{sec.appendix.Sanov}

Let  us consider a sequence of $n$ i.i.d. discrete random variables taking values in a finite alphabet $\XX$ and distributed according to a pmf $P$. We denote with $P_n$ the empirical pmf of the sequence.
Let $E \subseteq \mathcal{P}$ be a set of pmf's.
Sanov's theorem \cite{CandT,Sanov,Dembo2009} states that
\begin{align}
 \inf_{Q \in \text{$E$}} \DD(Q||P)
~\le~ & - \underset{n \rightarrow \infty}{\lim ~\sup} ~\frac{1}{n} \log P(P_n \in E)\nonumber\\
~\le~ & - \underset{n \rightarrow \infty}{\lim ~\inf} ~\frac{1}{n} \log P(P_n \in E)\nonumber\\
~\le~ &   \inf_{Q \in \text{{\em int} $E$}} \DD(Q||P),
\label{eq.Sanov1}
\end{align}
%
where {\em int S} denote the interior part of the set $S$.

When {\em cl}($E$) = {\em cl}({\em int}($E$))\footnote{{\em cl}($E$) denotes the closure of $E$. Clearly, {\em cl}($E$) $\equiv$ $E$ if $E$ is a closed set.}, or,  $E ~\subseteq~ ${\em cl}({\em int}($E$)),
the left and right-hand side of \eqref{eq.Sanov1} coincide and we get the exact rate:
\begin{equation}
-\lim_{n \rightarrow \infty} ~\frac{1}{n} \log P(P_n \in E) ~=~ \inf_{Q \in E} \DD(Q||P).
\label{Sanov}
\end{equation}
If we define the set $E_n = E \cap \mathcal{P}^n$, we have: $P(P_n \in E) = P(P_n \in E_n)$ and we can rewrite Sanov's theorem as:
\begin{align}
 \inf_{Q \in \text{$E$}} \DD(Q||P)
~\le~ & - \underset{n \rightarrow \infty}{\lim ~\sup} ~\frac{1}{n} \log P(P_n \in E_n)\nonumber\\
~\le~ & - \underset{n \rightarrow \infty}{\lim ~\inf} ~\frac{1}{n} \log P(P_n \in E_n)\nonumber\\
~\le~ &   \inf_{Q \in \text{{\em int} $E$}} \DD(Q||P),
\label{eq.Sanov2}
\end{align}
Note that, by construction, we have {\em cl}($E$) = {\em cl}($\cup_n E_n$).

In the following, we extend the formulation of Sanov's theorem given in \eqref{eq.Sanov2} to more general sequences of sets $E_n$ for which it does not necessary hold that $E_n = E \cap \mathcal{P}^n$ for some set $E$.

We start by 
introducing the notion of convergence for sequences of subsets due to Kuratowsky, which is a more general notion of convergence with respect to the one based on Hausdorff distance. Let $(S, d)$ be a metric space.
We first provide the definition of {\em lower closed limit} or Kuratowski limit inferior \cite{kuratowski1968topology}.
\begin{definition}
A point $p$ belongs to the lower limit $\underset{n \rightarrow \infty}{Li} K_n$ (or simply ${Li} K_n$) of a sequence of sets $K_n$, if every neighborhood of $p$ intersects all the $K_n$'s from a sufficiently great index $n$ onward.

Given the above definition, the expression $p ~\in \underset{n \rightarrow \infty}{Li} K_n$ is equivalent to the existence of a sequence of points $\{p_n\}$ such that:
\begin{equation}
\text{$p ~= \lim_{n \rightarrow \infty} p_n$, \quad $p_n \in K_n$.}
\label{def_Kur_1}
\end{equation}
Stated in another way, ${Li} K_n$ is the set of the accumulation points of sequences in $K_n$.
As an alternative, equivalent, definition we can let:
\begin{equation}
\underset{n \rightarrow \infty}{Li} K_n ~=~ \{p ~\in~ X \text{ s.t. } \underset{n \rightarrow \infty}{\lim ~\sup } \hspace{0.1cm} d(x, K_n) ~=~  0\}.
\label{def_Kur_2}
\end{equation}
\end{definition}
%
%

Similarly, we have the following definition of
{\em upper closed limit} or Kuratowski limit superior \cite{kuratowski1968topology}.
\begin{definition}
A point $p$ belongs to the upper limit $\underset{n \rightarrow \infty}{Ls} K_n$ (or simply ${Ls} K_n$) of a sequence of sets $K_n$, if every neighborhood of $p$ intersects an infinite number of terms in $K_n$.

The expression $p ~\in~ {Ls}_{n \rightarrow \infty}K_n$ is equivalent to the existence of a subsequence of points $\{p_{k_n}\}$ such that
$$\text{$k_1 < k_2 < \dots$, \quad $p ~=~ \lim_{n \rightarrow \infty} p_{k_n}$, \quad $p_{k_n} \in ~K_{k_n}$}.$$
As an alternative, equivalent, definition we can let:
\begin{equation}
\underset{n \rightarrow \infty}{Ls} K_n ~=~ \{p ~\in~ X \text{ s.t. } \underset{n \rightarrow \infty}{\lim ~\inf } \hspace{0.1cm} d(x, K_n) ~=~  0\}.
\end{equation}
\end{definition}

It can be proven that the Kuratowski limit inferior and superior are always closed set (see \cite{kuratowski1968topology}).

Given the above, we can state the following:
\begin{definition}
The sequence of sets $\{K_n\}$ is said to be convergent to $K$ in the sense of Kuratowski, that is $K_n \overset{K}{\rightarrow}~ K$, if $Li K_n = K = Ls K_n$, in which case we write $K = Lim K_n$.
\end{definition}

We observe that Kuratowski convergence is weaker than convergence in Hausdorff metric; in fact, given a sequence of closed sets $\{K_n\}$, $K_n \overset{H}{\rightarrow}~ K$ implies $K_n \overset{K}{\rightarrow}~ K$ \cite{Salinetti1979}. For compact metric spaces, the reverse implication also holds and the two kinds of convergence coincide.

In this work, we are interested in the space $\PP$ of probability mass functions defined over a finite alphabet $\XX$, i.e., the probability simplex in $\mathbb{R}^{|\XX|}$, equipped with the $L_1$ metric.
Being $\PP$ a closed subset of $\mathbb{R}^{|\XX|}$, $\PP$ is a complete set. In addition,
with the $L_1$ metric, $\PP \in \mathcal{L}(\mathbb{R}^{|\XX|})$, that is, $\PP$ is bounded.
The space $(\PP, d_{L_1})$, then, is a compact metric space
and then, for our purposes, Kuratowski and Hausdorff convergence are equivalent.

We are now ready to prove the following generalisation of Sanov's theorem:

\begin{theorem}[Generalized Sanov's theorem]
\label{theo.extended_Sanov}
    Let $\{E_{(n)}\}$
    be a sequence of sets in $\PP$, such that $Li (E_{(n)} \cap \PP^n) \ne ~\emptyset$.
Then:
\begin{align}
 \min_{Q \in \text{ $Ls E_{(n)}$}} \DD(Q||P)
~\le~ & - \underset{n \rightarrow \infty}{\lim \sup} ~\frac{1}{n} \log P(P_n \in E_{(n)})\nonumber\\
\le~ & - \underset{n \rightarrow \infty}{\lim \inf} ~\frac{1}{n} \log P(P_n \in E_{(n)})\nonumber\\
\le~ &   \min_{Q \in \text{ $Li$ ($E_{(n)} \cap \PP^n$)}} \DD(Q||P),
\label{Sanov1}
\end{align}
If, in addition,
$Ls E_{(n)} = Li (E_{(n)} \cap \PP^n)$,
the generalized Sanov's limit exists as follows:
\begin{equation}
-\lim_{n \rightarrow \infty} ~\frac{1}{n} \log P(P_n \in E_{(n)}) ~= \min_{Q \in Lim E_{(n)}} \DD(Q||P).
\label{extended_Sanov}
\end{equation}
\end{theorem}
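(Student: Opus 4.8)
The plan is to run the classical method-of-types proof of Sanov's theorem (in the style of \cite{CandT}, Ch.~11--12, and \cite{Dembo2009}), but to replace the fixed set $E$ together with its closure and interior by the Kuratowski limit sets $Ls\,E_{(n)}$ and $Li\,(E_{(n)}\cap\PP^n)$, exploiting that $(\PP,d_{L_1})$ is a compact metric space so that every sequence of types has a convergent subsequence and Kuratowski limit sets are automatically closed and well behaved. I will use the elementary estimates $|\PP^n|\le(n+1)^{|\XX|}$ and, for $Q\in\PP^n$, $(n+1)^{-|\XX|}2^{-n\DD(Q||P)}\le P(T(Q))\le 2^{-n\DD(Q||P)}$, together with the (elementary, for finite alphabets) lower semicontinuity of $Q\mapsto\DD(Q||P)$ on $\PP$ and its continuity on the closed set of pmf's absolutely continuous with respect to $P$. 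First note that the hypothesis $Li\,(E_{(n)}\cap\PP^n)\ne\emptyset$ guarantees $E_{(n)}\cap\PP^n\ne\emptyset$ for all large $n$, so the minima in the statement are over nonempty compact sets and $P(P_n\in E_{(n)})>0$ eventually; the finitely many indices for which $E_{(n)}\cap\PP^n$ may be empty do not influence the $\limsup$ and $\liminf$.

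\emph{Upper bound (first inequality).} Since $\{P_n\in E_{(n)}\}=\{P_n\in E_{(n)}\cap\PP^n\}$, a union bound over type classes gives $P(P_n\in E_{(n)})\le(n+1)^{|\XX|}2^{-n\mu_n}$ with $\mu_n:=\min_{Q\in E_{(n)}\cap\PP^n}\DD(Q||P)$, so $-\tfrac1n\log P(P_n\in E_{(n)})\ge\mu_n-|\XX|\tfrac{\log(n+1)}{n}$. Choose a subsequence along which $\mu_{n_k}\to\liminf_n\mu_n$ and let $Q_{n_k}\in E_{(n_k)}\cap\PP^{n_k}$ attain $\mu_{n_k}$; by compactness of $\PP$, pass to a further subsequence with $Q_{n_k}\to Q^\star$. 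Then $Q^\star\in Ls\,(E_{(n)}\cap\PP^n)\subseteq Ls\,E_{(n)}$, and lower semicontinuity gives $\DD(Q^\star||P)\le\liminf_k\DD(Q_{n_k}||P)=\liminf_n\mu_n$. As $Ls\,E_{(n)}$ is closed in the compact simplex, the minimum of $\DD(\cdot||P)$ over it is attained, whence $\min_{Q\in Ls\,E_{(n)}}\DD(Q||P)\le\DD(Q^\star||P)\le\liminf_n\mu_n\le -\limsup_n\tfrac1n\log P(P_n\in E_{(n)})$.

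\emph{Lower bound (third inequality).} Let $Q^\star$ attain $\min_{Q\in Li\,(E_{(n)}\cap\PP^n)}\DD(Q||P)$ ($Li$ being closed, this is attained). If $\DD(Q^\star||P)=+\infty$ there is nothing to prove, so assume it is finite, which forces $Q^\star$ to be absolutely continuous with respect to $P$. By the characterisation \eqref{def_Kur_1} of $Li$ there is a sequence $Q_n\in E_{(n)}\cap\PP^n$ with $Q_n\to Q^\star$; arranging, as one may, that the $Q_n$ lie in the closed absolutely-continuous region, where $\DD(\cdot||P)$ is continuous, yields $\DD(Q_n||P)\to\DD(Q^\star||P)$. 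Hence $P(P_n\in E_{(n)})\ge P(T(Q_n))\ge(n+1)^{-|\XX|}2^{-n\DD(Q_n||P)}$, so $-\tfrac1n\log P(P_n\in E_{(n)})\le\DD(Q_n||P)+|\XX|\tfrac{\log(n+1)}{n}$, and letting $n\to\infty$ gives $-\liminf_n\tfrac1n\log P(P_n\in E_{(n)})\le\DD(Q^\star||P)=\min_{Q\in Li\,(E_{(n)}\cap\PP^n)}\DD(Q||P)$. The middle inequality $-\limsup\le-\liminf$ is automatic, proving \eqref{Sanov1}.

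\emph{Exact rate and main obstacle.} Since always $Li\,(E_{(n)}\cap\PP^n)\subseteq Li\,E_{(n)}\subseteq Ls\,E_{(n)}$, the hypothesis $Ls\,E_{(n)}=Li\,(E_{(n)}\cap\PP^n)$ forces all three to coincide, so $Lim\,E_{(n)}$ exists and equals this common set; the outer two bounds of \eqref{Sanov1} then become equal, squeezing $\limsup$ and $\liminf$ together to yield \eqref{extended_Sanov}. The delicate step, and the one I expect to require the most care, is exactly the matching between the combinatorial approximation — minimising $\DD(\cdot||P)$ over the lattices $E_{(n)}\cap\PP^n$ — and the topological limit sets $Ls\,E_{(n)}$ and $Li\,(E_{(n)}\cap\PP^n)$: one must produce minimising or approximating sequences of types whose divergences converge to the divergence over the limit set. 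This is handled by lower semicontinuity plus compactness of $(\PP,d_{L_1})$ for the upper bound, and by the reduction to the finite-divergence case together with the continuity of $\DD(\cdot||P)$ on the absolutely-continuous region for the lower bound; the compactness of $(\PP,d_{L_1})$ is also what makes Kuratowski and Hausdorff convergence interchangeable, so no separate regularity argument for the limit sets is needed.
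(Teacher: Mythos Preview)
Your argument is essentially the paper's own proof: the same method-of-types bounds, the same subsequence/compactness extraction to land a minimiser in $Ls\,E_{(n)}$ for the first inequality, and the same approximation by a sequence $Q_n\in E_{(n)}\cap\PP^n$ converging to a minimiser in $Li$ for the third. Your version is arguably a bit cleaner in that you work directly with the type lattices $E_{(n)}\cap\PP^n$ and invoke lower semicontinuity of $\DD(\cdot\|P)$, whereas the paper detours through $\text{\em cl}(E_{(n)})$ and appeals to full continuity.

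One small point worth flagging: in the lower-bound step, your claim that one may ``arrange'' the approximating types $Q_n$ to lie in the absolutely-continuous region is not justified in general --- the $Q_n$ are handed to you by the definition of $Li$ and must belong to the prescribed sets $E_{(n)}\cap\PP^n$, so you cannot freely modify them. The paper sidesteps the same issue by simply writing ``by the continuity of $\DD$'', which amounts to the identical implicit assumption that $P$ has full support (so that $\DD(\cdot\|P)$ is genuinely continuous on all of $\PP$). Your proof is therefore no less rigorous than the paper's, but neither is airtight without that caveat.
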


\begin{proof}
We first prove the expression for the lower bound.
Let $E_n ~=~ E_{(n)} \cap \PP^n$. We have:
\begin{eqnarray}
     P(E_{(n)}) & = & \sum_{Q \in E_n}P_X(T(Q)) \nonumber \\
     & {\le} & (n + 1)^{|\XX|} 2^{-n \min_{Q \in E_n} \DD(Q ||
     P)}\nonumber\\
     & {\le} & (n + 1)^{|\XX|} 2^{-n \inf_{Q \in E_{(n)}} \DD(Q ||
    P)}\nonumber\\
         & = & (n + 1)^{|\XX|} 2^{-n \min_{Q \in \text{\em cl}(E_{(n)})} \DD(Q ||
    P)}.
\label{lower_bound}
\end{eqnarray}
%
In the last inequality we exploited the fact that, being each $E_{(n)}$ a bounded set of $\PP$, and $\DD$ lower bounded in $\PP$, the infimum over $E_{(n)}$ corresponds to the minimum over its closure. By taking the logarithm of each side and dividing by $n$, we get:
\begin{equation}
\frac{1}{n} \log P(E_{(n)})   ~\le~  - \min_{Q \in \text{\em cl}(E_{(n)})} \DD(Q ||
    P) + \frac{\log(n+1)^{|\XX|}}{n},
    \label{lower_bound_2}
\end{equation}

We now prove that, for any $\delta$ and for sufficiently large $n$, we have
\begin{equation}
\min_{Q \in \text{\em cl}(E_{(n)})} \DD(Q || P) ~\ge~ \min_{Q \in Ls E_{(n)}} \DD(Q || P) ~-~ \delta .
     \label{relation_Ls}
\end{equation}
First, according to the properties of the limit superior, $Ls E_{(n)} = Ls (\text{\em cl}(E_{(n)}))$ \cite{kuratowski1968topology}, hence proving \eqref{relation_Ls} is equivalent to showing that:
\begin{equation}
\min_{Q \in \text{\em cl}(E_{(n)})} \DD(Q || P) ~\ge~ \min_{Q \in Ls (\text{\em cl}(E_{(n)}))} \DD(Q || P) ~-~ \delta .
     \label{relation_Ls_bis}
\end{equation}
Let $Q_n$ be the sequence of points achieving the minimum of the left-hand side of \eqref{relation_Ls_bis} (for simplicity we assume that the minimum is unique, the extension to a more general case being straightforward). Let $Q_{n(j)}$ be a subsequence of $Q_n$ formed only by the elements of $Q_n$ that do not belong to $Ls (\text{\em cl}(E_{(n)}))$\footnote{ $n(i) ~>~ n(j), \forall i ~>~ j$}. If the number of elements in $Q_{n(j)}$ is finite, then for $n$ large enough $Q_n ~\in~ Ls (\text{\em cl}(E_{(n)}))$ and eq. \eqref{relation_Ls_bis} is verified with $\delta = 0$. If the number of elements in $Q_{n(j)}$ is infinite, then, due to the boundedness of $\PP$, the elements of $Q_{n(j)}$ must have at least one accumulation point (Bolzano-Weierstrass theorem). Let $A_i$'s be the accumulation points of $Q_{n(j)}$. By definition of $Ls$, all $A_i$'s belong to $Ls (\text{\em cl}(E_{(n)}))$. In addition, for any radius $\rho$, from a certain $j$ on, all the points in $Q_{n(j)}$ belong to $\mathcal{R} = \bigcup_i \BB(A_i, \rho)$\footnote{$\BB(A_i, \rho)$ is a ball with radius $\rho$ centred in $A_i$.}. For large enough $n$, then we have:
\begin{align}
\min_{Q \in \text{\em cl}(E_{(n)})} \DD(Q || P) ~&\ge~ \min_{Q \in Ls (\text{\em cl}(E_{(n)})) \cup \mathcal{R}} \DD(Q || P) \\
& \ge \min_{Q \in Ls (\text{\em cl}(E_{(n)}))} \DD(Q || P) ~-~ \delta, \nonumber
\label{eq.gsanovMB}
\end{align}
where the second inequality derives from the continuity of the $\DD$ function and the arbitrariness of $\rho$.

By inserting equation \eqref{relation_Ls} in \eqref{lower_bound_2}, we have that, for large $n$,

\begin{equation}
\frac{1}{n} \log P(E_{(n)})  \le  - \hspace{-0.2cm}\min_{Q \in Ls E_{(n)}} \DD(Q ||P) + \frac{\log(n+1)^{|\XX|}}{n} + \delta,
\end{equation}
and hence, by the arbitrariness of $\delta$,
\begin{align}
  - \underset{n \rightarrow \infty}{\lim \sup} ~\frac{1}{n} \log P(E_{(n)}) ~\ge~ \min_{Q \in Ls E_{(n)}} \DD(Q ||
     P).
\end{align}

We now pass to the upper bound. Let $Q^*$ be a point achieving the minimum of the divergence over the set $Li E_n$. By definition of limit inferior, there exists a sequence of points $\{Q_n\}$, $Q_n \in E_n$ such that $Q_n \rightarrow Q^*$ as $n \rightarrow \infty$.
Then, by exploiting the continuity of $\DD$, it follows that:
\begin{align}
\DD(Q_n|| P) ~\le~ D(Q^*|| P) ~+~ \gamma,
\end{align}
where $\gamma$ can be made arbitrarily small for large $n$. We can then write:
\begin{eqnarray}
    P(E_{(n)}) & = & \sum_{Q \in E_n} P(T(Q)) \nonumber \\
    & \ge & P(T(Q_n)) ~\ge~  \frac{2^{-n \DD(Q_n || P)}}{(n + 1)^{|\XX|}}.
\label{upper_bound}
\end{eqnarray}
Hence, we get
\begin{align}
\frac{1}{n} \log P(E_{(n)}) & ~\ge~ - \DD(Q_n ||  P) ~-~ |\XX|\frac{\log(n + 1)}{n},\nonumber\\
& ~\ge~ - \DD(Q^* ||  P) ~-~ \gamma ~-~ |\XX|\frac{\log(n + 1)}{n},\nonumber\\
& ~\ge~ - \min_{Q \in Li E_n} \DD(Q || P) ~-~ \gamma ~-~ |\XX|\frac{\log(n + 1)}{n},
\end{align}
and then, by the arbitrariness of $\gamma$,
\begin{align}
- ~\underset{n \rightarrow \infty}{\lim \inf} ~\frac{1}{n} \log P(E_{(n)}) ~\le~ \min_{Q \in Li E_n} \DD(Q ||
     P),
     \label{liminf}
\end{align}
which concludes the proof of the first part (relation \eqref{Sanov1}).

For the proof of the second part, we observe that, when $Ls E_{(n)} = Li (E_{(n)} \cap \PP^n)$, the two bounds in \eqref{Sanov1} coincides.
Moreover, the following chain of inclusions holds, $Li E_{(n)} ~\subseteq~ Ls E_{(n)} ~=~ Li (E_{(n)} \cap \PP^n) ~\subseteq~ Li E_{(n)}$, and then $Li E_{(n)} ~=~ Ls E_{(n)} ~=~ Lim E_{(n)}$, yielding \eqref{extended_Sanov}.
\end{proof}
We observe that, in general, the Kuratowski convergence of $E_{(n)}$ is a {\em necessary} condition for the existence of the generalized Sanov limit in \eqref{extended_Sanov}, but it is not sufficient. In fact, we could have $Li E_{(n)} ~\supseteq~ Li(E_{(n)} \cap \PP^n)$, in which case the lower and upper bound in \eqref{Sanov1} do not coincide.
It is also interesting to notice that when $E_{(n)} ~\in~ \PP^n$ is a sequence of sets in $\PP^n$, then Sanov's limit holds whenever $E_{(n)} \overset{K}{\rightarrow}E$ for some set $E$, or, by exploiting the compactness of $\PP$, $E_{(n)} \overset{H}{\rightarrow} E$. Based on the above observation, we can state the following corollary:

\begin{corollary}
\label{cor.extended_Sanov}

Let $E_{(n)}$ be a sequence of sets in $\PP^n$, such that $E_{(n)} \overset{H}{\rightarrow} E$.
Then:
\begin{equation}
-\lim_{n \rightarrow \infty} ~\frac{1}{n} ~\log ~P(P_n \in E_{(n)}) ~=~ \min_{Q \in E} ~\DD(Q||P).
\label{corollary_extended_Sanov}
\end{equation}
\end{corollary}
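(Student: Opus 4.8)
The plan is to obtain Corollary~\ref{cor.extended_Sanov} as an immediate specialisation of Theorem~\ref{theo.extended_Sanov}. Since $E_{(n)} \subseteq \PP^n$ by hypothesis, we have $E_{(n)} \cap \PP^n = E_{(n)}$, so every occurrence of $E_{(n)} \cap \PP^n$ in the statement of Theorem~\ref{theo.extended_Sanov} may be replaced by $E_{(n)}$; moreover, since the empirical pmf $P_n$ always lies in $\PP^n$, the event $\{P_n \in E_{(n)}\}$ is unchanged by this identification. Thus it suffices to verify the two hypotheses appearing in the second part of the theorem, namely that $Li(E_{(n)} \cap \PP^n) = Li\, E_{(n)} \ne \emptyset$ and that $Ls\, E_{(n)} = Li\, E_{(n)}$.

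Both follow from the compactness of $(\PP, d_{L_1})$ established earlier in this appendix. On a compact metric space, Hausdorff convergence of a sequence of non-empty closed sets is equivalent to Kuratowski convergence; hence the assumption $E_{(n)} \overset{H}{\rightarrow} E$ yields $E_{(n)} \overset{K}{\rightarrow} E$, i.e. $Li\, E_{(n)} = Ls\, E_{(n)} = Lim\, E_{(n)} = E$. Because the Hausdorff limit is taken in $\LL(\PP)$, the space of \emph{non-empty} closed and limited subsets of $\PP$ (see Definition~\ref{def_limit_set}), the set $E$ is non-empty, so $Li(E_{(n)} \cap \PP^n) = E \ne \emptyset$; and trivially $Ls\, E_{(n)} = E = Li\, E_{(n)} = Li(E_{(n)} \cap \PP^n)$, as required.

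Applying the second part of Theorem~\ref{theo.extended_Sanov} then gives at once
\begin{equation}
-\lim_{n \rightarrow \infty} \frac{1}{n} \log P(P_n \in E_{(n)}) = \min_{Q \in Lim\, E_{(n)}} \DD(Q||P) = \min_{Q \in E} \DD(Q||P),
\end{equation}
which is \eqref{corollary_extended_Sanov}. I do not expect any genuine difficulty in this argument, since all the analytic content is already packaged in Theorem~\ref{theo.extended_Sanov}; the one point worth a line of care is the passage from Hausdorff to Kuratowski convergence, which relies essentially on the compactness of the probability simplex and is precisely the reason the corollary is phrased for sequences of sets living in $\PP^n$. This clean form is exactly what is needed in the body of the paper --- e.g. in the proof of Theorem~\ref{theo.SanovTRc}, where one first establishes $\Gamma^n(R_{m_1}, \lambda, \alpha, L) \overset{H}{\rightarrow} \Gamma(R^*, \lambda, \alpha, L)$ and then invokes this corollary to evaluate $P_Y(\Gamma^n(\cdot))$.
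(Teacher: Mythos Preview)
Your proof is correct and follows essentially the same approach as the paper: the paper's own justification is the short paragraph immediately preceding the corollary, which notes that when $E_{(n)} \subseteq \PP^n$ one has $E_{(n)} \cap \PP^n = E_{(n)}$, and that by compactness of $\PP$ Hausdorff convergence implies Kuratowski convergence, so the hypothesis $Ls\,E_{(n)} = Li(E_{(n)}\cap\PP^n)$ of Theorem~\ref{theo.extended_Sanov} is satisfied. Your write-up is in fact more explicit than the paper's (in particular you address non-emptiness of $E$), but the route is the same.
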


%
%

\subsection{Regularity properties of the set of admissible maps}
\label{sec.appendix.cont_MAP}

To prove the theorems on the asymptotic behaviour of the payoff in the two versions of the source identification game studied in this paper, we need to prove some regularity theorems on the set of admissible maps.

To start with, we need to define a distance between transportation maps, that is a function $d_s : ~\mathbb{R}^{|\XX| \times |\XX|} \times \mathbb{R}^{|\XX| \times |\XX|} \rightarrow \mathbb{R}^{+}$. In accordance with the rest of the paper, let us choose the $L_1$ distance, that is, given two maps ($S_{PV}, S_{QR}$), we define $d_s (S_{PV}, S_{QR}) = \sum_{i,j} |S_{PV}(i,j) - S_{QR}(i,j)|$.

Our first result regards the regularity of $\mathcal{A}(L, P)$ as a function of $P$.

\begin{lemma}
Let $P \in \PP$ and let $P'$ be any pmf in the neighbourhood of $P$ of radius $\tau$, i.e., $P' \in \mathcal{B}(P,\tau)$. Then
$$
\delta_H(\AA(L,P), ~\AA(L,P')) ~\le~ \tau
$$
and hence $\underset{\tau \rightarrow 0}{\lim} \delta_H(\AA(L,P), \AA(L,P')) ~=~ 0$, uniformly in $\PP$.

\noindent Moreover, if we insist that $P' \in \PP^n$, the following result holds: $\forall \varepsilon > 0, ~\exists \tau^*$ and $n^*$ such that $\forall \tau < \tau^*$ and $n > n^*$,
$$
\delta_H(\AA(L,P), ~\AA^n(L,P')) ~\le~ \varepsilon \quad \forall P' \in \BB(P,\tau) \cap \PP^n, ~\forall P \in \PP.
$$

\label{regularity_adm_set}
\end{lemma}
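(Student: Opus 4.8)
The plan is to prove the two assertions in turn, deriving the second from the first by a rounding argument together with the triangle inequality for the Hausdorff pseudometric.

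For the first assertion I would argue that, given any $S\in\AA(L,P)$, one can exhibit $S'\in\AA(L,P')$ with $d_s(S,S')\le d_{L_1}(P,P')$; the bound $\delta_H(\AA(L,P),\AA(L,P'))\le\tau$ then follows by symmetry upon exchanging the roles of $P$ and $P'$. The construction is carried out row by row. Recall that $\AA(L,Q)$ is cut out only by non-negativity of the entries, the row-sum constraints $\sum_j S(i,j)=Q(i)$, and the single inequality $\sum_{i,j}S(i,j)\,d(i,j)\le L$; since the per-symbol distortion satisfies $d(i,i)=0$ (no cost for leaving a symbol unchanged), one can absorb the row-sum discrepancy between $P$ and $P'$ without increasing the cost. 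Concretely, for each index $i$ with $P'(i)\ge P(i)$ put $S'(i,i)=S(i,i)+\bigl(P'(i)-P(i)\bigr)$ and leave the rest of row $i$ untouched; for each $i$ with $P'(i)<P(i)$ scale the whole row by the factor $P'(i)/P(i)\le 1$ (when $P(i)=0$ this case is vacuous). In either case row $i$ of $S'$ is non-negative, sums to $P'(i)$, and has cost no larger than that of row $i$ of $S$, while $\|S'(i,\cdot)-S(i,\cdot)\|_1=|P(i)-P'(i)|$. Summing over $i$ gives $d_s(S,S')=d_{L_1}(P,P')\le\tau$ and $\mathrm{cost}(S')\le\mathrm{cost}(S)\le L$, so $S'\in\AA(L,P')$; uniformity over $\PP$ and the limit as $\tau\to 0$ are then immediate.

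For the second assertion I would use the triangle inequality
\[
\delta_H\bigl(\AA(L,P),\AA^n(L,P')\bigr)\ \le\ \delta_H\bigl(\AA(L,P),\AA(L,P')\bigr)+\delta_H\bigl(\AA(L,P'),\AA^n(L,P')\bigr),
\]
bounding the first term by $\tau$ via the first part. For the second term, note that $\AA^n(L,P')\subseteq\AA(L,P')$ (a map with entries in $\tfrac1n\mathbb Z$, row sums $P'$, and cost $\le L$ is in particular an element of $\AA(L,P')$), so one direction of the Hausdorff distance vanishes and it suffices to show that every $S\in\AA(L,P')$ lies within $d_s$-distance $o(1)$ of some $S^n\in\AA^n(L,P')$, uniformly in $P'\in\PP^n$. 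To this end I first create a little cost slack by mixing $S$ with the diagonal map $D_{P'}$ given by $D_{P'}(i,j)=P'(i)\,\delta_{ij}$: for $t\in(0,1)$ the map $S_t=(1-t)S+tD_{P'}$ has row sums $P'$, satisfies $\mathrm{cost}(S_t)\le(1-t)L$, and obeys $d_s(S_t,S)\le 2t$. Then I round $S_t$ within each row to multiples of $1/n$: since $P'\in\PP^n$ each row sum $P'(i)$ already lies in $\tfrac1n\mathbb Z$, so a row can be rounded while preserving its sum exactly, with only $O(|\XX|)$ entries rounded upward. This perturbs each entry by less than $1/n$, hence moves $S_t$ by at most $|\XX|^2/n$ in $d_s$ and raises the cost by at most $|\XX|^2 d_{\max}/n$, where $d_{\max}=\max_{i,j}d(i,j)$. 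Choosing $t$ of order $\varepsilon$ and then $n$ large enough that $|\XX|^2 d_{\max}/n\le tL$ keeps the rounded map's cost at most $(1-t)L+|\XX|^2 d_{\max}/n\le L$, so $S^n\in\AA^n(L,P')$ with $d_s(S,S^n)\le 2t+|\XX|^2/n$. Taking $t=\varepsilon/4$, $\tau^*=\varepsilon/2$, and $n^*$ large enough then yields $\delta_H(\AA(L,P),\AA^n(L,P'))<\varepsilon$ for all $\tau<\tau^*$, $n>n^*$, and $P'\in\BB(P,\tau)\cap\PP^n$, with all constants independent of $P$. The degenerate case $L=0$ is treated directly: admissibility then forces all mass off the zero-cost support to vanish, both $\AA(L,P)$ and $\AA^n(L,P')$ collapse onto (the same) diagonal map, and the distance is $0$.

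The step I expect to be the main obstacle is the rounding in the second part: one must round a real matrix to a $\tfrac1n\mathbb Z$-valued one while \emph{exactly} matching the prescribed row sums \emph{and} respecting the cost constraint $\le L$. The diagonal-mixing trick is precisely what provides the (bounded) slack needed to absorb the cost increment caused by rounding, and the fact that the defining constraints decouple over rows is what makes the exact row-sum preservation possible. The remaining ingredients — the triangle inequality, the symmetry of the first-part construction, the inclusion $\AA^n(L,P')\subseteq\AA(L,P')$, and the uniformity of all the estimates in $P$ — are routine.
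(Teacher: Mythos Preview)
Your argument is correct and follows the same overall plan as the paper (row-by-row adjustment for the first part, rational approximation for the second), but the constructions differ in instructive ways. For rows with a mass deficit, the paper sorts the destinations by $d(i,j)$ and greedily strips the excess from the highest-cost entries; your uniform rescaling $S'(i,\cdot)=\tfrac{P'(i)}{P(i)}\,S(i,\cdot)$ is simpler and gives the same row-$L_1$ perturbation $|P(i)-P'(i)|$ while also not increasing the cost. For the second part, the paper merely appeals to ``density of rational numbers'' without explaining why the rounded map still meets the budget $L$; your diagonal-mixing step $S_t=(1-t)S+tD_{P'}$ manufactures exactly the cost slack needed to absorb the $O(|\XX|^2 d_{\max}/n)$ rounding error, which is the point the paper glosses over.

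Two small things to tidy up. First, your explicit constants $t=\varepsilon/4$, $\tau^*=\varepsilon/2$ leave no room for the $|\XX|^2/n$ term in $\tau+2t+|\XX|^2/n<\varepsilon$; take e.g.\ $\tau^*=\varepsilon/3$, $t=\varepsilon/6$ instead. Second, in the $L=0$ case the two admissible sets do not collapse onto \emph{the same} diagonal map (they have row sums $P$ and $P'$ respectively); what you want is that $\delta_H(\AA(0,P'),\AA^n(0,P'))=0$ because $D_{P'}\in\AA^n(0,P')$ when $P'\in\PP^n$, and then the first-part bound $\tau$ finishes the estimate.
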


\begin{proof}

From a general perspective, the lemma follows from the fact that
$\mathcal{A}^n(L,P_{y^n})$ (and $\mathcal{A}(L,P)$) is built by imposing a number of linear constraints on the admissible transportation maps (see eq. \eqref{eq.admissiblemap1}), i.e. $\mathcal{A}(L,P)$ is a convex polytope \cite{Bert97,Boyd04}. By considering a $P'$ close to $P$, we are perturbing the vector of the known terms of the linear constraints which defines the admissibility set. Instead of invoking the above general principle, in the following we give an explicit proof of the lemma.

Given $P \in \PP$ and $P' \in \mathcal{B}(P, \tau)$, let $\tau(i) = P(i) - P'(i)$ be the excess (or defect) of mass of $P$ with respect to $P'$ in bin $i$. For any map in $\mathcal{A}(L,P)$, we can choose a map $S_{P'V'}$ that works as follows: for the bins $i$ such that $\tau(i) \le 0$, let $S_{P'V'}(i,j) = S_{PV}(i,j)$ for $j \neq i$, while for $j=i$, we let $S_{P'V'}(i,j) = S_{PV}(i,j) + | \tau(i) |$. For the bins $i$ for which $\tau(i) > 0$, we first sort the index set $\{j: S_{PV}(i,j) \neq 0\}$ in decreasing order with respect to the amount of distortion introduced per unit of mass delivered from $i$ to $j$ ($d(i,j)$). Then, starting from the first index in the ordered list, we let $S_{P'V'}(i,j) = \max(0, ~S_{PV}(i,j) - \tau(i))$. If $S_{P'V'}(i,j) = 0$, we update $\tau(i)$ to a new value $\tau'(i) = \tau(i) - S_{PV}(i,j)$, and iterate the previous procedure by subtracting the updated value of $\tau'(i)$ from the second $S_{PV}(i,j)$ in the list. This procedure goes on until the subtraction gives $S_{P'V'}(i,j) \neq 0$, that is when we have removed all the excess mass from the $i$-th row of $S_{PV}(i,j)$.

It is easy to see that the map built in this way satisfies the distortion constraint, in fact, by construction the distortion associated to $S_{P'V'}$ is less than that introduced by $S_{PV}$. Then, $S_{P'V'} \in \mathcal{A}(L, P')$.
In addition, by construction, $\sum_{j} |S_{P'V'}(i,j) - S_{PV}(i,j)| \le | \tau(i) |$, and hence $\sum_{ij} |S_{P'V'}(i,j) - S_{PV}(i,j)| \le \tau$. Accordingly, we have:
\begin{align}
\delta_{\AA(L,P)}(&\AA(L,P')) = \\
&\max_{S_{PV} \in \AA(L,P)} \min_{S_{P'V'} \in \AA(L,P')} d_s(S_{PV}, S_{P'V'}) ~\le~ \tau \nonumber
\end{align}
since, as we have shown with the preceding construction, the inner minimum is always lower or equal than $\tau$. By repeating the same argument exchanging the role of $\AA(L,P)$ and $\AA(L,P')$, we find that $\delta_H(\mathcal{A}(L, P'), \mathcal{A}(L, P)) \le \tau$, thus concluding the first part of the proof.

In the second part of the lemma, we require that $P' \in \PP^n$ and that the map produces a sequence in $\PP^n$. The proof is easily achieved by exploiting the first part of the lemma according to which for any map $S_{PV}$ in $\AA(L,P)$, we can find a map $S_{P'V'}$ in $\AA(L,P')$ which is arbitrarily close to $S_{PV}$, and then approximating $S_{P'V'}$ with a map $S^n_{P'V'} \in \AA^n(L,P')$. Due to the density of rational numbers in real numbers, such an approximation can be made arbitrarily accurate by increasing $n$, thus completing the proof.
\end{proof}

Given a transformation $S_{PV}$ mapping $P$ into $V$, Lemma \ref{regularity_adm_set} states that, for any pmf $P'$ close to $P$, we can find a map $S_{P'V'}$ close to $S_{PV}$. The following theorem extends such a result to the pmf resulting from the application of the mapping.

\begin{theorem}

Let $P \in \PP$, and let $P'$ be any pmf in the neighbourhood of $P$ of radius $\tau$, i.e., $P' \in \mathcal{B}(P,\tau)$. Let  $S_{PV} \in \mathcal{A}(L,P)$.
Then, we can always find a map $S_{P'V'} \in \mathcal{A}(L, P')$ such that $V' \in \mathcal{B}(V, \tau)$.

Similarly, for any $\varepsilon > 0$, there exist $\tau^*$ and $n^*$ such that $\forall$ $\tau < \tau^*$ and $n > n^*$, given a $P \in \PP$, a map $S_{PV} \in \mathcal{A}(L, P)$ and $P' \in \PP^n \cap \BB(P,\tau)$,  we can find a map $S_{P'V'}^n$ in  $\mathcal{A}^n(L, P')$ such that $V'_n \in \BB(V, \varepsilon) \cap \PP^n$.
\label{theo_behavior_S}
\end{theorem}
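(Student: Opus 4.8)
The plan is to derive both claims from Lemma~\ref{regularity_adm_set} together with one elementary observation: the pmf produced by a transportation map is a non-expansive function of the map in the $L_1$ metric. Indeed, for any two maps $S_{PV}$ and $S_{P'V'}$, with column-collapses $V$ and $V'$, the triangle inequality gives
\begin{equation*}
d_{L_1}(V,V') ~=~ \sum_j \Big| \sum_i \big(S_{PV}(i,j) - S_{P'V'}(i,j)\big) \Big| ~\le~ \sum_{i,j} \big|S_{PV}(i,j) - S_{P'V'}(i,j)\big| ~=~ d_s(S_{PV},S_{P'V'}),
\end{equation*}
so any control on the distance between maps transfers directly to control on the distance between the pmf's they generate.

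For the first part, I would fix $P\in\PP$, $P'\in\BB(P,\tau)$ and $S_{PV}\in\AA(L,P)$, and feed $S_{PV}$ into the explicit row-by-row construction used in the proof of Lemma~\ref{regularity_adm_set}. That construction returns a map $S_{P'V'}\in\AA(L,P')$ (it never increases the transported distortion, hence stays admissible) with $\sum_{i,j}|S_{P'V'}(i,j)-S_{PV}(i,j)| \le \sum_i|P(i)-P'(i)| = d_{L_1}(P,P') \le \tau$. Combining this with the non-expansiveness inequality above yields $d_{L_1}(V,V')\le\tau$, i.e. $V'\in\BB(V,\tau)$, which is exactly the assertion.

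For the second part, I would invoke the second statement of Lemma~\ref{regularity_adm_set}: given $\varepsilon>0$, choose $\tau^*$ and $n^*$ so that $\delta_H(\AA(L,P),\AA^n(L,P'))\le\varepsilon$ for all $\tau<\tau^*$, $n>n^*$, uniformly over $P\in\PP$ and $P'\in\BB(P,\tau)\cap\PP^n$. Then for such a $P'$ and any $S_{PV}\in\AA(L,P)$ there is a map $S^n_{P'V'}\in\AA^n(L,P')$ with $d_s(S_{PV},S^n_{P'V'})\le\varepsilon$; its column-collapse $V'_n$ automatically lies in $\PP^n$, and by the non-expansiveness inequality $d_{L_1}(V,V'_n)\le d_s(S_{PV},S^n_{P'V'})\le\varepsilon$, so $V'_n\in\BB(V,\varepsilon)\cap\PP^n$, as required.

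The heavy lifting has already been done in Lemma~\ref{regularity_adm_set} (the mass-redistribution construction and its rational approximation), so I do not expect a genuine obstacle here — the only new ingredient is the trivial $L_1$-contraction obtained by summing the map over rows. The one point requiring care is to quote the \emph{uniform} (in $P$) form of Lemma~\ref{regularity_adm_set}, so that a single pair $(\tau^*,n^*)$ works simultaneously for every base pmf, and to keep the open/closed-ball conventions consistent when passing from $d_{L_1}(P,P')\le\tau$ to membership in $\BB(V,\tau)$.
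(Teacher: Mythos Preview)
Your proposal is correct and follows essentially the same approach as the paper: both proofs rest on the non-expansiveness inequality $d_{L_1}(V,V')\le d_s(S_{PV},S_{P'V'})$ and then invoke Lemma~\ref{regularity_adm_set} to control the right-hand side. Your write-up is in fact a bit more explicit for the second part (the paper simply appeals to the density of rationals), and your remark about quoting the uniform-in-$P$ form of the lemma is to the point.
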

\begin{proof}
For any two maps $S_{PV}$ and $S_{P'V'}$, we have:
\begin{align}
V'(j) &~=~ \sum_{i} S_{P'V'}(i,j) \nonumber \\
&~=~ \sum_{i}  (S_{PV}(i,j) +  (S_{P'V'}(i,j) - S_{PV}(i,j))) \nonumber\\
& ~\le~ V(j) + \sum_i | S_{P'V'}(i,j) - S_{PV}(i,j) |,
\end{align}
and
\begin{align}
V'(j) &~=~ \sum_{i} S_{P'V'}(i,j) \nonumber \\
&~=~ \sum_{i}  (S_{PV}(i,j) +  (S_{P'V'}(i,j) - S_{PV}(i,j))) \nonumber\\
& ~\ge~ V(j) - \sum_i | S_{P'V'}(i,j) - S_{PV}(i,j) |,
\end{align}
yielding:
\begin{equation}
|V'(j) -V(j)| ~\le~ \sum_i | S_{P'V'}(i,j) - S_{PV}(i,j) |.
\end{equation}
By summing over $j$ and exploiting Lemma \ref{regularity_adm_set}, we can choose $S_{P'V'}$ so that:

\begin{align}
\sum_j |V'(j) - V(j)| & ~\le~ \sum_{i,j} | S_{P'V'}(i,j) - S_{PV}(i,j) | \nonumber \\
& ~\le~ \delta_H(\mathcal{A}(L, P'), ~\mathcal{A}(L, P)) ~\le~ \tau,
\end{align}
and hence $V' \in \BB(V, |\tau)$.

Similarly to the second part of  Lemma \ref{regularity_adm_set}, the second part of the theorem follows immediately from the density of rational numbers in the real line.

\end{proof}

\end{document}


Let $P_{z^n}^*$ be the pmf resulting from the application of $S^{n,*}_{YZ}$ to $P_{y^n}$. 
Noticeably, from Lemma \ref{lemma_behavior_S} (Appendix \ref{sec.appendix.cont_MAP}), we know that (with the $L_1$ distance) $d(P_{z^n}^*, P_Z^*) \le |\XX| \cdot k_n$, where  $k_n = \delta_H(\mathcal{A}^n(L, P_{y^n}), \mathcal{A}(L,P_Y))$.

Then, \MB{Why then ? This statement seems to be true regardless of the previous observation} for a given original training sequence $\tau^{m_1}$, the defender will make a wrong decision if
\begin{equation}
\min_{Q' \in \PP_{m_2}} h\left(P_{z^n}^* , P_{\tau^{m_1}} + \frac{\alpha}{1 - \alpha} (Q^* - Q') \right) \le \lambda - \delta_{n,c}.
\end{equation}

To show that the strategy defined by equations \eqref{eq.asympt_optimum_SA} and \eqref{eq.asympt_optimum_SA2} is asymptotically optimum, we study the set of sequences for which the attacker is able to deceive the detector by moving $P_{y^n}$ inside the acceptance region with the suboptimum attacking strategy. Let us call this set $\Gamma^{n,sub}(P_{\tau^{m_1}} \lambda,\alpha,L)$. We will then use the result of such an analysis to show that the \SIa~ game has the same indistinguishability region of the \SIat~ game.

We start by rewriting the set $\Gamma$ in \eqref{newGamma}-\eqref{newGamma0} as follows \MB{Are we sure that the order of the minimisations does not matter? I would have put the minimisation over $S_{PV}$ in the outer position}:
\begin{align}
\label{newGamma_rew}
\Gamma(R, & \lambda,\alpha,L) =  \{P \in \PP:   \min_{Q' \in \PP} \\ \nonumber
& \left. \min\limits_{\substack{Q \in \PP \\ S_{PV} \in \A(L, P)}} h_c\left(V, R +  \frac{\alpha}{(1 - \alpha)} (Q' - Q)\right) \le \lambda\right\}.
\end{align}

Let $R_{m_1}$ denote the original training pmf in $\PP^{m_1}$. We can write:
\begin{align}
& \Gamma^{n,sub}(R_{m_1}, \lambda, \alpha, L) = \left\{P_{y^n} \in \PP^n: \min_{Q' \in \PP_{m_2}}  \right.\nonumber\\
   & \hspace{0.5cm} \left.h\left(P_{z^n}^* , R_{m_1} + \frac{\alpha}{1 - \alpha} (Q^* - {Q'}) \right) \le \lambda - \delta_{n,c} \right\},
\end{align}
where $Q^*$ is the pmf which results from minimization \eqref{eq.asympt_optimum_SA}, while $P_{z^n}^*$ is the pmf of the attacked sequence that we obtain by applying the map in \eqref{eq.asympt_optimum_SA2}.

We find convenient to consider the following (slight) underestimation of $\Gamma^{n,sub}(R_{m_1}, \lambda, \alpha, L)$:
\begin{align}
\Gamma^{n,sub}_u(&R_{m_1}, \lambda, \alpha, L) = \left\{P_{y^n} \in \PP^n:\right.\nonumber\\
   & \left.h\left(P_{z^n}^* , R_{m_1} + \frac{\alpha}{1 - \alpha} (Q^* - {Q'}^*) \right) \le \lambda - \delta_{n,c} \right\},
\end{align}
where ${Q'}^*$ is the pmf which result from minimization \eqref{eq.asympt_optimum_SA}. Obviously, $\Gamma^{n,sub}_u(R_{m_1}, \lambda, \alpha, L) \subseteq \Gamma^{n,sub}(R_{m_1}, \lambda, \alpha, L)$.\\

Let $\varepsilon_a$ be the error exponent of the payoff of the game when the attacker choses the suboptimum attack defined in \eqref{eq.asympt_optimum_SA}-\eqref{eq.asympt_optimum_SA2}.
For sure, $\varepsilon_a$ cannot be smaller than the value of the error exponent $\varepsilon$ in \eqref{eq.fnerr_exp_L}, i.e. $\varepsilon_a \ge \varepsilon$.
We want to prove that $\varepsilon = 0$ implies $\varepsilon_a = 0$ (from the above discussion, the reverse implication is obvious).

To this purpose, let us focus on the upper bound \MB{which upper bound?}. By fixing a sequence $R_n \in \PP_n$ which tends to $P_X$ and considering the subsequence $R_{m_1}$ \MB{How do you define $R_{m_1}$ exactly ?}, we can write the same passages in \eqref{eq.up_bound_P_fn} thus yielding:
\begin{align}
- \frac{\log P_{fn}}{n} & \le (1-\alpha) c \DD(R_{m_1}||P_X) + \nonumber\\
& \hspace{1.5cm} - \frac{1}{n} \log P_Y(\Gamma^{n, sub}(R_{m_1} \lambda,\alpha,L)) + \beta_n'\nonumber\\
& \le (1-\alpha) c \DD(R_{m_1}||P_X) + \nonumber\\
& \hspace{1.5cm} - \frac{1}{n} \log P_Y(\Gamma^{n, sub}_u(R_{m_1} \lambda,\alpha,L)) + \beta_n'\nonumber\\
\label{upper_bound_e_a}
\end{align}
By focusing on the probability term,
\begin{align}
- \frac{1}{n} \log P_Y(\Gamma^{n, sub}_u)  \le &
- \lim \inf \frac{1}{n} \log P_Y(\Gamma^{n, sub}_u)\nonumber\\
\le & \min_{P \in Li(\Gamma^{n, sub}_u)} \DD(P||P_Y),
\end{align}
where $Li(A_n)$ is the Kuratowsky limit inferior of the sequence $A_n$ (see the appendix for the definition) \cite{kuratowski1968topology}, and in the last equality, we exploited the upper bound of the generalized Sanov theorem (see Theorem \ref{Sanov1} \MB{Theorem A18 ?}, Appendix \ref{sec.appendix.Sanov}).
Therefore, going from \eqref{upper_bound_e_a} we obtain
\begin{align}
- \frac{\log P_{fn}}{n} & \le (1-\alpha) c \DD(R_{m_1}||P_X) + \nonumber\\
& \hspace{1.5cm} \min_{P \in Li(\Gamma^{n, sub}_u)} \DD(P||P_Y)  + \beta_n'.
\label{upper_bound_e_a_Li}
\end{align}

Let \MB{Serve qualche parola di spiegazione}

\begin{align}
\label{newGamma_rew}
\Gamma^{(n)}(R_{m_1}, & \lambda,\alpha,L) =  \{P_Y \in \PP:   \min_{Q \in \PP^{m_2}} \min\limits_{\substack{Q' \in \PP^{m_2} \\ S_{YZ} \in \A(L, P_{Y})}} \\ \nonumber
& \left.  h_c\left( R_{m_1} + \frac{\alpha}{1 - \alpha} (Q - Q')  \right) \le \lambda - \delta_{n,c}\right\}.
\end{align}
We define the following auxiliary set:
\begin{align}
\Gamma^{(n),o}(R_{m_1}, \lambda, \alpha, L) =  \Gamma^{(n)}(R_{m_1},  \lambda,\alpha,L) \backslash \Omega^n,
\end{align}
where
\begin{equation}
\Omega^n = \{P \in \Gamma^{(n)}: d(P, \overline{\Gamma}^{(n)}) \le e_n\},
\end{equation}
with $e_n = |\XX| \cdot\delta_H(\mathcal{A}^n(L, P_{n}^*), \mathcal{A}(L,P^*))$, $(P_n^*, P^*) = \arg \max_{P \in \PP} \min_{P_n \in \PP^n} d(P_n, P)$ (corresponding to the minimum distance between a rational number and an irrational one).
By construction, then, it is not difficult to argue that, for any source $P_Y \in \Gamma^{(n), o}(R_{m_1}, \lambda, \alpha, L)$, there exists $P_{y^n} \in \PP^n$ such that, by applying \eqref{eq.asympt_optimum_SA}-\eqref{eq.asympt_optimum_SA2}, the resulting pmf $P_{z^n}$ satisfies $h\left(P_{z^n} , P_{\tau^{m_1}} + \frac{\alpha}{1 - \alpha} (Q^* - {Q'}^*) \right) \le \lambda -\delta_{n,c}$, and hence $P_{y^n} \in \Gamma^{n,sub}_u(R_{m_1}, \lambda, \alpha, L)$. Accordingly, $d(P_Y, \Gamma^{n,sub}_u(R_{m_1}, \lambda, \alpha, L)) \le \max_{P \in \PP} \min_{P_n \in \PP^n} d(P_n, P)$ which, by the density of the rational number into the real ones, tends to 0 as $n \rightarrow \infty$ \footnote{We remind that $d(P,A)$ denote the distance of $P$ from set $A$, namely $d(P,A) = \min_{Q \in A} d(P,Q)$ (see Section \ref{sec.symbols})}.

Let us now focus on the asymptotic behavior  of set $\Gamma^{(n), o}$.
By the density of the rational number into the real ones and the continuity of the $h$ (and $h_c$) function with respect to its arguments, we argue that $\Omega^n$ tends to the empty set asymptotically and  $\Gamma^{(n), o}(R_{m_1}, \lambda, \alpha, L) \overset{H}{\rightarrow}  \Gamma(P_X,  \lambda ,\alpha,L)$ (i.e., $\delta_H(\Gamma^{(n), o}(R_{m_1}, \lambda, \alpha, L), \Gamma(P_X,  \lambda ,\alpha,L)) \rightarrow 0$).

Then, for any $P_Y \in \Gamma(P_X,  \lambda ,\alpha,L)$, by applying the triangular inequality, we can write:\footnote{It is easy to be convinced that, given a point $P$ and a set $A$, function $d(P,A)$ satisfies the triangular inequality.}
\begin{align}
 d(P_Y, \Gamma^{n,sub}_u) &  \le d(P_Y,\Gamma^{(n), o}) + d(P^*, \Gamma^{n,sub}_u) \nonumber\\ 
&  \le \delta_H(\Gamma^{(n), o}, \Gamma) + \max_{P \in \PP} \min_{P_n \in \PP^n} d(P_n, P_Y),
\label{dist_P_Y_Gamma}
\end{align}
where $P^* = \arg\min_{P \in \Gamma^{(n),o}} d(P_Y,P)$.
\BTcomm{ho evitato di esplicitare gli argomenti delle varie $\Gamma$ per non appensantire...la cosa mi pare comunque non generare ambiguita'}
From  \eqref{dist_P_Y_Gamma}, we argue that $d(P_Y, \Gamma^{n,sub}_u) \rightarrow 0$ as $n \rightarrow \infty$. By the definition of limit inferior, it follows  $P_Y \in Li(\Gamma^{n,sub}_u)$.
Accordingly, from \eqref{upper_bound_e_a_Li}, we have that for any $P_Y \in \Gamma(P_X,  \lambda ,\alpha,L)$
\begin{align}
\varepsilon_{a} =  - {\lim \sup}_{n \rightarrow \infty} \frac{\log P_{fn}}{n} = 0,
\end{align}
and hence, the indistinguishability region for the \SIa~ game contains set $\Gamma(P_X,  \lambda ,\alpha,L)$, namely the indistinguishability region of the \SIat~ game. Since the \SIa~ game cannot be more favorable to the attacker than the \SIat~, we argue that  $\Gamma(P_X,  \lambda ,\alpha,L)$ is the indistinguishability region also for the \SIa~ game.


\section{Two variants of the \SIr~ game}

\MB{This section is definitely too long. Shorten it and consider reducing it to a subsection of Section VI}


\subsection{An alternative view of the \SIa~ game: random substitution of the training samples}

In the adversarial setup considered in the previous section (and depicted in Figure \ref{fig.ADVsetup_add}), the attacker adds a sequence of $m_2$ fake samples, $\tau^{m_2}$, to an existing sequence of $m_1$ training sample, $\tau^{m_1}$, and produce (after a random reordering $\sigma$) the corrupted training sequence $t^m$: formally, $t^m = \sigma(\tau^{m_1} || \tau^{m_2})$.
It is worth noting that, due to the memoryless nature of the source $X$, such a scenario is equivalent to the following: the attacker observes a training sequence $\tau^m$ and replaces a certain number $m_2$ of samples chosen at random to produce the final corrupted sequence $t^m$.  As before we assume that the defender does not know the position of attacked samples

Let $\cal{M}$ denote the subset of $m_2$ indexes corresponding to the positions of the samples which the attacker has access to for corruption. We indicate with $\tau_{\cal{M}}^{m_2}$ the subsequence formed by the samples indexed by $\cal{M}$. Hence, $\tau^m = \sigma^m(\tau_{\overline{\cal{M}}}^{m_1} || \tau_{\mathcal{M}}^{m_2})$ for some permutation $\sigma^m$, where $\overline{\mathcal{M}}$ indicates the complementary set of $\cal{M}$. Let $\nu^{m_2}$ the sequence of the corrupted samples which the attacker replace to the original samples in the positions indicated by $\mathcal{M}$. Therefore, the corrupted training sequence observed by D is $t^m = \sigma^m(\tau_{\overline{\cal{M}}}^{m_1} || \nu^{m_2})$.
This setup is represented by the general scheme illustrated in Figure \ref{fig.ADVsetup2}.
\MB{Adjust the symbolism according to the new content of Section VI.}
It is straightforward to be convinced that, when the attacker cannot choose the indexing set $\cal{M}$, the game with addition of fake samples and the one with substitution of random samples with fake ones are indeed equivalent. In fact, in both cases, the resulting sequence that the defender observes is composed by $m_1$ original samples drawn from $X$ and $m_2$ corrupted samples in unknown positions. Assuming that
the defender has no hint on how the attacker replace the samples, it is easy to argue that
the decision strategy  does not change with respect to the previous case. On the other side, since the goal of the attacker is to induce a decision error, it is reasonable to assume that $\nu^{m_2} = Q(\tau^m) = Q(\tau_{\overline{\cal{M}}}^{m_1})$, that is, the original value of the replaced samples does not matter.
%
Therefore, the same analysis performed in Section \ref{sec.SI_CTR_add}, as well as the results we got, remain the same in the case of game with random substitution of the training samples. Such an interpretation/view of the \SIa~game opens the way to the definition of a more general adversarial setup, which is studied in Section \ref{sec.SI_CTR_c}.\\

By focusing on the corruption by means of substitution of the samples,
an interesting variant of the game follows by assuming that A can observe only the samples that he is going to replace. We refer to this case as \SIa~ game with {\em \MB{partially} uninformed attacker}, to distinguish it from the the previous case in which the attacker knows the entire original training sequence and hence the final sequence which the defender will use ot make his decision ({\em omniscient or informed attacker}).

In the following section, we analyse the case of uninformed attacker, by outlining the differences with  respect to the previous case.

\subsection{The \SIa~ game with uninformed attacker}

In this section we consider the situation in which the attacker does not observe the entire original training sequence but only the subpart that he has access to for corruption. This situation can be regarded as a slight variant of the \SIa~ game with substitution of random samples studied in Section \ref{sec.SI_CTR_add}.
The block schemes for the attack in two situations are illustrated in Figure \ref{fig.Part-Full-Attacks}.

From the defender's side, we can argue that the definition of the set of strategies does not change with respect to the case of addition of fake samples, and the optimum defence strategy will be again the one given in \eqref{eq.optimum_SD}. \MB{Stop here and pass to the attacker}, {\em since reasonably, without having any knowledge about the corruption strategy, he will keep adopting a worst case approach. Accordingly, he will put a constraint on the false positive probability under the most damaging corruption strategy, that is
\begin{align}
 \SS_{D} = & \{ \Lambda^{n \times m} \subset \PP^n \times \PP^m: \max_{P_X \in \mathcal{P}} \max_{s \in \SS_{A,T}} P_{fp} \le 2^{-\lambda n}\},
\label{eq.SD_par}
\end{align}
%
%
Accordingly, the optimum defence strategy will be again the one in \eqref{eq.optimum_SD}.}

With regard to the attacker, the set of strategies for the two parts of the attack now are:
\begin{align}
    &\SS_{A,T} = \{ Q(P_{\tau^{m_2}}, P_{y^n}) \in  \PP^{m_2} \}\\
    & \SS_{A,O} = \{S^n_{YZ}(P_{y^n}, P_{\tau^{m_2}}) \in \A^n(L, P_{y^n}) \},
\label{eq.SAD_TR_par}
\end{align}

We point that, in this case, the attacker does not know the pmf of the corrupted training sequence $t^m$ observed by D,
on which the defender bases his decision.
Therefore, a reasonable strategy for the attacker is to use the empirical pmf of the sequence $\tau_{\cal{M}}^{m_2}$ to estimate the pmf of the unobserved part of the training sequence, i.e.  $\tau_{\overline{\cal{M}}}^{m_1}$, as follows:
\begin{align}
& \tilde{P}_{\tau^{m_1}}(i) = \frac{\lfloor P_{\tau^{m_2}}\IC{(i) m_2}\rfloor}{m_1} \quad \forall i = 1 \dots |\XX| -1, \nonumber \\
& \tilde{P}_{\tau^{m_1}}(|\XX|)  = 1 - \sum_{i=1}^{|\XX|-1} \tilde{P}_{\tau^{m_{\IC{1}}}}(i).\nonumber
\end{align}
In a similar way, A estimates the pmf of the (corrupted) training sequence available to D as follows:
\begin{equation}
\tilde{P}_{t^m} = (1-\alpha) \tilde{P}_{\tau^{m_1}} + \alpha P_{\nu^{m_2}},
\end{equation}
\MB{Secondo me questa notazione e' ovvia e eviterei di appesantire l'articolo ripetendola} {\em where $P_{\tau^{m_2}}$ is the type of the sequence $\tau_{\cal{M}}^{m_2}$ and $P_{\nu^{m_2}} = Q(P_{\tau^{m_2}})$ is the pmf of the part of the training corrupted by the attacker, that is, the pmf of the sequence $\nu^{m_2}$.}

By using the above estimate, it is reasonbale for the attacker to adopt the following attacking strategy $(Q^*(P_{\tau^{m_2}},P_{y^n}), ~S^{n,*}_{YZ}(P_{\tau^{m_2}}, P_{y^n}))$ which
minimizes the estimated decision function (as anticipated we focus only on the case of targeted attack for simplicity):
\begin{align}
\label{eq.estimted_optimum_SD}
\min_{Q'} h\left(P_{z^n}, \frac{\tilde{P}_{t^m} - \alpha Q'}{(1-\alpha)}\right),
\end{align}
that is,
\begin{align}
\label{eq.suboptimum_SA_var}
    & (Q^*(P_{\tau^{m_2}},P_{y^n}), ~S^{n,*}_{YZ}(P_{\tau^{m_2}}, P_{y^n})) =  \arg \hspace{-0.5cm} \min\limits_{\substack{Q \in \PP^{m_2} \\ S^n_{YZ} \in \A^n(L, P_{y^n})}} \hspace{-0.1cm} \\ \nonumber
    & \hspace{2cm}   \min_{Q'}  h\left( P_{z^n} ,\tilde{P}_{\tau^{m_1}} + \frac{ \alpha}{(1 - \alpha)}(Q - Q')   \hspace{-0.1cm} \right) \hspace{-0.1cm}.
\end{align}
It is not difficult to argue that the strategy in \eqref{eq.suboptimum_SA_var} is an {\em asymptotically} optimum attacking strategy.

\BTcomm{Congettura: la regione di indistinguishability e' la stessa del caso precedente, cosi' dunque il SM e la percentuale di blinding.}

The empirical explanation is the following: since $m_1$ and $m_2$ are linear functions of $n$\footnote{We remind that $m_1 = (1 - \alpha)c n$ and $m_2 = \alpha cn$.}, by the law of large number, both $P_{\tau^{m_2}}$ and $P_{\tau^{m_1}}$  tends to $P_X$ as $n \rightarrow \infty$, it is easy to argue that the minimization in \eqref{eq.suboptimum_SA_var}  will give asymptotically the same result of the minimization in  \eqref{eq.optimum_SA_double}. Hence, the asymptotic behavior of the game (in terms of indistinguishability region) for the case of omniscient and uninformed attacker coincides. \BTcomm{(volendo, si dovrebbe poter dimostrare abbastanza facilmente seguendo l'approccio usato in TIT (per il caso di sequenze di training con lunghezze diverse))}

\MB{No proof.}


\begin{figure}
\centering \includegraphics[width = 0.8\columnwidth]{ADV_Full_Part.pdf}
\caption{General scheme for the attack in the \SIa~ setup with omniscient attacker (above). Variant of the scheme for the case of uninformed attacker (below).}
\label{fig.Part-Full-Attacks}
\end{figure}